\newcounter{alg}
\newtheorem{remark}{Remark}[section]
\newtheorem{example}[remark]{Example}
\newcommand\Eq{\mathcal{E}}
\newcommand\G{\mathcal{G}}
\newtheorem{theorem}{Theorem}[section]
\newtheorem{claim}[theorem]{Claim}
\newtheorem{definition}{Definition}[section]
\def\squarebox#1{\hbox to #1{\hfill\vbox to #1{\vfill}}}
\newcommand{\qed}{\hspace*{\fill}
\vbox{\hrule\hbox{\vrule\squarebox{.667em}\vrule}\hrule}\smallskip}
\def\eod{\vrule height 6pt width 5pt depth 0pt}
\newenvironment{proof}{\noindent {\bf Proof:} \hspace{.677em}}
                      {\hspace*{\fill}{\eod}}
\DeclareMathOperator*{\argmin}{arg\min}
\newcommand{\ceil}[1]{\left\lceil #1 \right\rceil}
\newcommand{\floor}[1]{\left\lfloor #1 \right\rfloor}
\begin{document}

\thispagestyle{empty}
\title{Coordination Mechanisms with Rank-Based Utilities
}
\author{Gilad Lavie$^{*}$ \and Tami Tamir\thanks{School of Computer Science, Reichman University, Israel. E-mail: giladlavie@gmail.com, tami@runi.ac.il.}}
\maketitle

 \begin{abstract}
In classical job-scheduling games, each job behaves as a selfish player, choosing a machine to minimize its own completion time. To reduce the equilibria inefficiency, coordination mechanisms are employed, allowing each machine to follow its own scheduling policy. In this
paper we study the effects of incorporating {\em rank-based utilities} within coordination mechanisms across environments with either identical or unrelated machines.
With rank-based utilities, players aim to perform well relative to their competitors, rather than solely minimizing their completion time.

We first demonstrate that even in basic setups, such as two identical machines with unit-length jobs, a pure Nash equilibrium (NE) assignment may not exist. This observation motivates our inquiry into the complexity of determining whether a given game instance admits a NE. We prove that
this problem is NP-complete, even in highly restricted cases. In contrast, we identify specific classes of games where a NE is guaranteed to exist, or where the decision problem can be resolved in polynomial time.
Additionally, we examine how competition impacts the efficiency of Nash equilibria, or sink equilibria if a NE does not exist. We derive tight bounds on the price of anarchy, and show that competition may either enhance or degrade overall performance.

\end{abstract}

\section{Introduction}
Scheduling problems have traditionally been studied from a centralized point of view in which the goal is to find an assignment of jobs to machines so as to minimize some global objective function. In practice, many resource allocation services lack a central authority, and are often managed by multiple strategic users, whose individual payoff is affected by the assignment of other users. As a result, game
theory has become an essential tool in the analysis of scheduling environments. This stems from the understanding that customers (the jobs) as well as resource owners (the machines) act independently so as to maximize their own benefit.

Job-scheduling games are singleton congestion games that represent situations which commonly occur in roads, and communication networks. In these well-studied models, each job acts as a selfish player, choosing a machine to minimize its own completion time. An algorithmic tool that is commonly utilized by the designer of such a system is a {\em coordination mechanism} \cite{CKN04}. The coordination mechanism uses a scheduling policy within each machine that aims to mitigate the impact of selfishness to performance. 
The scheduling policy defines the order according to which jobs are scheduled within a machine. For example, the jobs assigned to a machine that applies a {\em ShortestFirst} policy, are processed from shortest to longest. The assignment is non-preemptive and each job is processed uninterruptedly.

A coordination mechanism induces a game in which the strategy space of each player is the set of machines. An {\em assignment} is a strategy profile, defined by the jobs' selections. Every machine processes the jobs assigned to it according to its scheduling policy. The profile induces a completion time for each job - the time when its processing is done.

Traditionally, in such scheduling games, the goal of a player is to select a machine such that its completion time is minimized (Christodoulou et al. \cite{CKN04}, Immorlica et al. \cite{ILMS09}). In this work, inspired by Rosner and Tamir~\cite{RT23}, we study {\em coordination mechanisms with rank-based utilities}. Formally, all players are competitors. The player's main goal is to do well relative to its competitors, i.e., to minimize the {\em rank} of its completion time among all players, while minimizing the completion time itself is a secondary objective. This natural objective arises in several computational environments. For example, $(i)$ in high-frequency trading, the absolute time to execute a trade is irrelevant as long as it is faster than competitors, since only the first trade secures a profit. The focus is entirely on being ahead of others. The competitive nature of this environment ensures that speed is the sole priority, as market conditions can shift so rapidly that even minor delays result in missed opportunities.

$(ii)$ In cloud computing, users compete for access to shared servers, where the allocation of computational resources is often determined by relative performance metrics. For instance, during high-demand events like Black Friday, where e-commerce platforms experience surges in traffic, users with higher priority - such as premium subscribers or those leveraging performance-optimized instances - receive faster task processing. In these scenarios, the focus is not on minimizing absolute completion time but on being processed ahead of others to gain a competitive edge. This prioritization is especially critical in time-sensitive operations, such as real-time analytics or transaction processing, where delays can significantly impact business outcomes.

$(iii)$ On e-commerce platforms like Amazon, sellers optimize metrics such as response and delivery times to achieve higher rankings in search results. These rankings are based on relative performance, where sellers with faster responses and better fulfillment times gain increased visibility. Higher visibility directly translates into increased sales, creating a competitive environment where outperforming others matters more than achieving an absolute delivery speed.


An assignment is a \emph{pure Nash equilibrium} (NE) if no player can benefit from unilaterally deviating from its strategy, that is, no player has an incentive to deviate from the machine on which its job is assigned. In games with rank-based utilities, a NE is a profile in which no job can reduce the rank of its completion time, nor to keep its rank and reduce its completion time.

In this work we study several aspects of coordination mechanisms with rank-based utilities. Our results focus on the challenge of determining whether a pure NE exists. We prove that this problem is NP-complete, even for very restricted classes of games, where job-lengths are limited to be in $\{1,2\}$. On the other hand, we identify several classes of games where the problem can be solved in polynomial time, and we characterize specific game classes where a NE is guaranteed to exist. For games with a global priority list, we provide a clear characterization of instances that have a NE. For games that have no NE, we propose a dynamic of best-
response deviations with a fast convergence to an efficient sink equilibrium. Additionally, we show that for games with unit-length jobs, two machines, and machine-dependent priority lists, deciding the existence of a NE and constructing one can be done in linear time. We also introduce the {\em Inversed-Policies} coordination mechanism, which ensures the existence of a NE in games with two identical machines.

Our analysis also examines how competition impacts the efficiency of equilibria, with respect to the makespan. We extend classical efficiency bounds for traditional scheduling games to games with rank-based utilities, and show that competition can either improve or worsen performance. Finally, we extend our analysis to games where players are divided into competition sets, demonstrating that certain partitions can ensure or prevent the existence of a NE, and in modified models with competition-class-based priorities, a NE is always guaranteed to exist, with Best Response Dynamics (BRD) converging to equilibrium.



\subsection{Notation and Problem Statements}
An instance of a {\em coordination mechanism with rank-based utilities} is given by a tuple $G=\langle J,M,(p_j)_{j \in J}, (r_i)_{M_i \in M},(\pi_i)_{M_i \in M} \rangle$, where $J$ is a finite set of $n\geq 1$ jobs (also denoted {\em players}), $M$ is a finite set of $m\geq 1$ machines, $p_j\in\mathbb{R}^+$ is the processing time of job $j\in J$, $r_i\in\mathbb{R}^+$ denotes the rate or speed of machine $M_i \in M$, and $\pi_i:J \rightarrow \{1,\ldots,n\}$ is the {\em priority list} of machine $M_i\in M$. We assume that jobs have no release times. We denote by $j_1 \prec_i j_2$ the fact that job $j_1$ is prioritized over job $j_2$ in the priority list of machine $M_i$, that is $\pi_i(j_1)< \pi_i(j_2)$. For every job $j \in J$, the other jobs in $J$ are referred to as the competitors of $j$.

A strategy profile $s=(s_j)_{j\in J}\in M^{|J|}$ assigns a machine $s_j\in M$ to every job $j\in J$. A strategy profile $s$ induces a {\em schedule} in which the jobs are processed according to their order in the machines' priority lists. We use $s$ to denote both the strategy profile and its induced schedule.
The set of jobs that delay $j\in J$ in $s$ is denoted by $E_{j}(s)= \{j' \in J| s_{j'}=s_j \wedge \pi_{s_j}(j') \leq \pi_{s_j}(j)\}$. Note that job $j$ itself also belongs to $E_{j}(s)$. Let $P_j(s)=\sum\limits_{j'\in E_{j}(s)}p_{j'}$. The completion time of job $j\in J$ is given by $C_j(s)=P_j(s)/{r_{s_j}}$. For machine $M_i \in M$ let $L_i(s)= \sum_{j|s_j=i} p_j$ be the load on $M_i$ in $s$, i.e., the total length of the jobs assigned to it.

Unlike classical job-scheduling games, in which the goal of a player is to minimize its completion time, in games with rank-based utilities, the goal of a player is to do well relative to its competitors. That is, every profile induces a ranking of the players according to their completion time, and the goal of each player is to have the lowest possible rank among all players. Formally, for a profile $s$, let $C^s = \langle C_1^s, \ldots, C_n^s \rangle$ be a sorted vector of the completion times of the players in $J$. That is, $C_1^s \leq \ldots \leq C_n^s$, where $C_1^s$ is the minimal completion time of a player in $s$, etc. The $rank$ of player $j \in J$ in profile $s$, denoted by $rank_j(s)$, is the rank of its completion time in $C^s$. If several players have the same completion time, then they all have the same rank, which is the corresponding average value. For example, if $n=4$ and $C^s = \langle 7, 8, 8, 13 \rangle$, then the players' ranks are $\langle 1, 2.5, 2.5, 4 \rangle$, and if all players have the same completion time, then they all have rank $(n+1)/2$.

The primary objective of every player is to minimize its rank. The secondary objective is to minimize its completion time. Formally, player $j$ prefers profile $s'$ over profile $s$ if $rank_j(s') < rank_j(s)$ or $rank_j(s') = rank_j(s)$ and $C_j(s') < C_j(s)$. In this case, a deviation from profile $s$ to profile $s'$ is called beneficial for player $j$. A strategy profile $s$ is a {\em pure Nash equilibrium (NE)} if for all $j\in J$, job $j$ does not have a beneficial deviation.

For a strategy profile $s$, let $SC(s)$ denote the social cost of $s$. The social cost is defined with respect to some objective, e.g., the makespan, i.e., $C_{max}(s)=\max_{j\in J} C_j(s)$, or the sum of completion times, i.e., $\sum_{j \in J}C_j(s)$. It is well known that decentralized decision-making may lead to sub-optimal solutions from the point of view of the society as a whole. For a game $G$, let $P(G)$ be the set of feasible profiles of $G$. We denote by $OPT(G)$ the social cost of a social optimal solution, i.e., $OPT(G)=\min_{s \in P(G)} SC(s)$. We quantify the inefficiency incurred due to self-interested behavior according to the \emph{price of anarchy} (PoA) \cite{KP09}, and \emph{price of stability} (PoS) \cite{AD+08}. The PoA is the worst-case inefficiency of a pure Nash equilibrium, while the PoS measures the best-case inefficiency of a pure Nash equilibrium. 

\begin{definition}
\label{def:ineff}
Let $\G$ be a family of games, and let $G$ be a game in $\mathcal{G}$.
Let $\Eq(G)$ be the set of pure Nash equilibria of the game $G$. Assume that $\Eq(G) \neq \emptyset$.
\begin{itemize}
\item The {\em price of anarchy} of $G$ is the ratio between the
\emph{maximum} cost of a NE and the social optimum of
$G$, i.e.,
$\mbox{PoA}(G) = \max\limits_{s \in \Eq(G)} SC(s)/OPT(G)$.
The {\em price of anarchy} of $\mathcal{G}$
is $\mbox{PoA}(\mathcal{G}) = sup_{ G\in \mathcal{G}}\mbox{PoA}(G)$.
\item
The {\em price of stability} of $G$ is the ratio between the
\emph{minimum} cost of a NE and the social optimum of
$G$, i.e.,
$\mbox{PoS}(G) = \min\limits_{s \in \Eq(G)} SC(s)/OPT(G)$.
The {\em price of stability} of $\mathcal{G}$ is
$\mbox{PoS}(\mathcal{G}) = sup_{ G\in \mathcal{G}}\mbox{PoS}(G)$.
\end{itemize}
\end{definition}

The strategy profile of all players except player $j$ is denoted by $s_{-j}$, and it is convenient to denote a strategy profile $s$ as $s=(s_j, s_{-j})$.

\emph{Best-Response Dynamics} (BRD) is a natural method by which players proceed toward a NE via the following local search method:
Given a strategy profile $s$, the best response of player $j$ is $BR_j(s) = \argmin_{s'_j \in M} \langle rank_j(s'_j,s_{-j}), C_j(s'_j,s_{-j}) \rangle$; i.e., the set of strategies that maximize job $j$'s utility, fixing the strategies of all other players.
Player $j$ is said to be {\em suboptimal} in $s$ if it has a beneficial deviation (reduce its rank, or maintain its rank and reduce its completion time), i.e., if $s_j \not\in BR_j(s)$.
If no player is suboptimal in $s$, then $s$ is a NE.

Given an initial strategy profile $s^0$, a best response sequence from $s^0$ is a sequence $\langle s^0, s^1, \ldots \rangle$ in which for every $T=0,1,\ldots$ there exists a player $j \in J$ such that $s^{T+1} = (s'_{j}, s^{T}_{-j})$, where $s'_{j} \in BR_j(s^{T}_{-j})$.

Given a game $G$, the strategy profile graph of $G$ is a directed graph whose vertex set consists of all possible strategy profiles of $G$, and there is a directed edge $(s_1, s_2)$, if the profile $s_2$ can be obtained from profile $s_1$ by a best-response deviation of a single player.

A {\em deviator rule} is a function that, given a profile $s$, chooses a deviator among all suboptimal players in $s$.
The chosen player then performs a best response move (breaking ties arbitrarily). Given an initial strategy profile $s^0$ and a deviator rule $D$ we denote by $NE_{D}(s^0)$ the set of NE that can be obtained as the final profile of a BR sequence $\langle s^0,s^1, \ldots\rangle$, where for every $T \geq 0$, $s^{T+1}$ is a profile resulting from a deviation of $D(s^T)$.
Note that every BR-sequence corresponds to some path in the strategy profile graph. Therefore, the analysis of this graph is a major tool in understanding BRD convergence and the quality of possible BRD outcomes.

As we are going to show, a game with rank-based utilities may not have a NE. In order to analyze the quality of the profiles to which natural dynamics may converge, we use the concept of a {\em Sink Equilibrium}, introduced in~\cite{GMV05}. A sink equilibrium is a strongly connected component with no outgoing edges in the strategy profile graph. A sink equilibrium always exists. Thus, even in games that do not admit a NE, we can still analyze the expected quality of a steady state of the game. 



The definition of a sink equilibrium in~\cite{GMV05} refers to a random walk in the strategy profile graph. When BRD is applied with a specific deviator rule, the choice of the deviating player is deterministic, however, we can still have several different BR-sequences, depending on the initial profile and the tie breaking applied by a player in case it has more than one best-response move.
For a game $G$ and a deviator rule $D$, a sink equilibrium of $(G,D)$ is a strongly connected component with no outgoing edges in the strategy profile graph, to which $G$ may converge when $D$ is applied.

The following definitions extend the definitions from \cite{GMV05} to sink equilibria reached via a specific deviator rule.

\begin{definition}
\label{def:ineff_3}
Let $\mathcal{Q}(G,D)$ be the set of sink equilibria of $G$ reached by applying BRD with deviator rule $D$. For a sink $Q \in \mathcal{Q}(G,D)$, let $f_Q: Q \to \mathbb{R}^+$ be the steady state distribution of a BR-sequence over states in $Q$.
\begin{itemize}
\item The (expected) social cost of a sink equilibrium $Q \in \mathcal{Q}(G,D)$, denoted by $SC(Q)$, is the expected social cost of the states in a BR-sequence that reaches $Q$, i.e., $SC(Q) = \Sigma_{s \in Q} f(s) \cdot SC(s)$.
\item The {\em price of sinking} of $G,D$ is the ratio between the
\emph{maximum} cost of a sink equilibrium and the social optimum of $G$, i.e.,
$\mbox{PoSINK}(G,D) = \max\limits_{Q \in \mathcal{Q}(G,D)} SC(Q)/OPT(G)$. The {\em price of sinking} of $\mathcal{G},D$
is $\mbox{PoSINK}(\mathcal{G},D) = sup_{ G\in \mathcal{G}}\mbox{PoSINK}(G,D)$.
\end{itemize}
\end{definition}


\subsection{Rank-Based vs. Cost-Based Utilities}
\label{sec:util}
The crucial difference between rank-based utilities and cost-based utilities is demonstrated already in the following simple game with two identical machines $\{M_1, M_2\}$ and two unit-length jobs $J=\{a,b\}$ (see Figure~\ref{fig:NoNE}). Assume that in both machines $a \prec b$. It is easy to see that with rank-based utilities, the game has no NE profile. Specifically, let $s$ be a profile in which both jobs are assigned to the same machine. We have that $rank_a(s)=1, rank_b(s)=2$, thus job $b$ would benefit from deviating to the other machine. In the resulting profile, $s'$, we have that $rank_a(s')=rank_b(s')=1.5$. Any profile in which the jobs are on different machines, and in particular $s'$, is not stable as well, since both jobs have the same completion time and rank, therefore, job $a$ would benefit from joining the machine of job $b$. Such a deviation will not affect $a$'s completion time, and will delay job $b$. 
Note that in the traditional model, in which jobs only care about their completion time, any balanced schedule is a NE.

\begin{figure}[ht]
\centering
\includegraphics[width=0.7\textwidth]{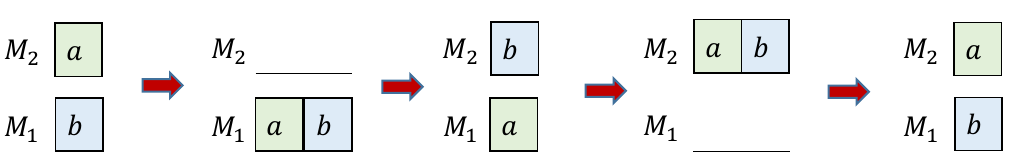}
\caption{\normalfont{
A BR-sequence in a game with two unit-length players and no NE.}}
\label{fig:NoNE}
\end{figure}

On the other hand, if the two machines have different priority lists, for example if $a \prec_1 b$ (that is, $M_1$ prioritize $a$ over $b$), and $b \prec_2 a$ then the assignment in which job $a$ is on $M_1$, and job $b$ is on $M_2$ is a NE. Both jobs have rank $1.5$ and will have rank $2$ if they deviate.

The above example highlights the fact that games with rank-based utilities are significantly different from classical job-scheduling games; the competition creates interesting problems already with unit-length jobs, a class whose analysis in the competition-free setting is straightforward.

The next example demonstrates that in a game with rank-based utilities, jobs may perform non-intuitive beneficial migrations. Specifically, a job may reduce its rank even though it increases its cost (completion time). Consider an instance with two identical machines $\{M_1, M_2\}$ and three jobs $J=\{a,b,c\}$, where $p_b > p_a+p_c$. Assume that in both machines $a \prec b \prec c$. Consider a schedule, $s$, in which jobs $a,c$ are on one machine and job $b$ is on the other machine (see left schedule in Figure~\ref{fig:CostInc}). We have that $rank_a(s)=1, rank_b(s)=3, rank_c(s)=2$. Job $b$ can reduce its rank from $3$ to $2$ by deviating to $M_1$. This deviation increases its completion time from $p_b$ to $p_a+p_b$, but is beneficial in our model. It is not difficult to see that this game has no NE.

\begin{figure}[ht]
\centering
\includegraphics[width=0.5
\textwidth]{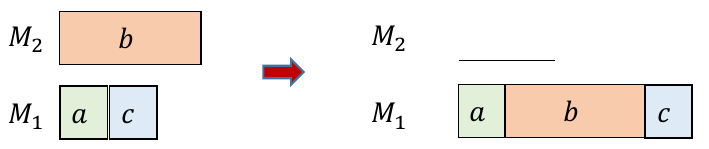}
\caption{\normalfont{
A cost-increasing rank-reducing deviation.}}
\label{fig:CostInc}
\end{figure}

\subsection{Related Work}

Scheduling games were initially studied in the setting in which each machine processes its jobs in parallel so that the completion time of each job depends on the total load on the machine. The corresponding papers analyze the inefficiency of selfish behavior with respect to the makespan. Czumaj and Vöcking~\cite{CzumajV07} gave tight bounds on the price of anarchy for related machines, whereas Awerbuch et al.~\cite{AART06} and Gairing et al.~\cite{GLM10} provided tight bounds for restricted machine settings. 
These tight bounds grow with the number of machines and motivated the study of coordination mechanisms, i.e., local scheduling policies, to reduce the price of anarchy.

Christodoulou et al.~\cite{CKN04} introduced coordination mechanisms and studied the price of anarchy with priority lists based on longest processing time (LPT) first. Immorlica et al.~\cite{ILMS09} generalized their results and studied several different scheduling policies. Other results were given by Cohen et al.~\cite{CDN11}, who proved that a Nash equilibrium is expected to exist for two unrelated machines with a random order. Also, Azar et al.~\cite{AJM08} showed that for unrelated machines with priorities based on the ratio of a job's processing time to its shortest processing time a Nash equilibrium need not exist; Lu and Yu \cite{LY12} suggested a mechanism that guarantees the existence of a Nash equilibrium; Kollias \cite{K13} showed that non-preemptive coordination mechanisms need not induce a pure Nash equilibrium. Vijayalakshmi et al.~\cite{RST21} considered a more general setting in which machines have arbitrary individual priority lists. The paper characterizes four classes of instances in which a pure Nash equilibrium is guaranteed to exist, and analyzes the equilibrium inefficiency for these classes. Caragiannis et al.~\cite{CF19} introduced {\em DCOORD} coordination mechanism with price of anarchy $O(\log m)$ and price of stability $O(1)$.

Best Response Dynamics (BRD) has been a significant area of study in the analysis of congestion games.
Research in this domain often examines the existence of NE, the convergence of BRD to these equilibria, and the inefficiencies introduced by selfish behavior. The papers \cite{EKM03,FT15,FST17,KBL13} analyze various deviator rules, such as Max-Weight-Job, Min-Weight-Job, and Max-Cost, and compared their effects on the convergence time and the solution quality in several classes of congestion games.

Research has also delved into scenarios where BRD does not converge to a pure Nash Equilibrium.
Goemans et al.~\cite{GMV05} introduced the concept of sink equilibria, to address situations where pure NE do not exist. They defined the {\em Price of Sinking} that quantifies the inefficiency of sink equilibria. 
Berger et al.~\cite{BFNR11} introduced the notion of dynamic inefficiency, examining the average social cost across an infinite sequence of best responses. They explored how different deviator rules, such as Random Walk, Round Robin, and Best Improvement, impact social costs in games lacking the finite improvement property.

Rank-based scheduling games where studied so far only on parallel machines (without priority lists). Ashlagi et al.~\cite{AKT08} presented a social context game with rank competition. The competition structure in their model is arbitrary and defined by a network. For the case of disjoint competition sets, that is, when the network is a collection of disjoint cliques, the paper shows that a NE may not exist and is guaranteed to exist in a game with identical resources. Immorlica et al.~\cite{ILMS09} consider a model with arbitrary competition structure, in which players' utility combine their payoff and ranking. General ranking games are studied in Brandt et al.~\cite{BFHS09}, where it is shown that computing a NE is NP-complete in most cases. In Goldberg et al.~\cite{GGKV13} a player's utilization combines its rank with the effort expended to achieve it. Rosner and Tamir introduced in~\cite{RT23} a scheduling game with rank-based utilities (SRBG), where the players are partitioned into competition sets, and the goal of every player is to perform well relative to its competitors. These works show that the analysis of games with rank-based utilities tends to be very different from the analysis of classical games.

Our model combines the study of coordination mechanism with rank-based utilities. This combination was not considered in the past.

\subsection{Our Results}
As demonstrated in Section~\ref{sec:util}, even in simple games with just two jobs and two machines, a pure Nash equilibrium may not exist. The natural problem of deciding whether a given game has a pure NE is central in our study. In Section~\ref{sec:hardness}, we establish that this problem is NP-complete, even when job lengths are restricted to $\{1, 2\}$. On the positive side, we identify several classes of instances where a Nash equilibrium is guaranteed to exist or where the decision problem can be solved efficiently.

The following classes of games are analyzed in our work:
\begin{itemize}
    \item $\G^{P}$ - games played on identical (unit-rate) machines.
    \item $\G^{Q}$ - games played on related machines.
    \item $\G^{global}$ - games with a global priority list, $\pi$. That is, for every machine $M_i \in M$, it holds that $\pi_i=\pi$.
    \item $\G_{unit}$ - games with unit-length jobs.    
\end{itemize}

Some of our results refer to games in the intersection of some classes. For example, a game $G \in \G_{unit}^{Q2, global}$ is played on two related machines with a global priority list and unit-length jobs.


In Section~\ref{sec:prel} we propose two simple greedy algorithms for computing a schedule: one for identical machines and one for related machines. Next, we provide necessary and sufficient conditions for the stability of the resulting schedule. A run of the algorithm may require $O(n)$ tie-breaking steps. We show that the question of deciding whether a game has a NE is reduced to the problem of deciding whether there exists a run of the algorithm whose output fulfills the stability conditions. Some of our results below are based on showing that even though the number of potential outputs is exponential, it is possible to trace them efficiently.

In Section~\ref{sec:identical_machines} we study equilibrium existence and computation in games played on identical machines. We first introduce a coordination mechanism denoted {\em Inversed-Policies}, that ensures the existence of a NE on any game with two machines. Next, we characterize games in $\G_{unit}^{P,global}$ that admit a NE, and provide a linear time algorithm for games in $\G_{unit}^{P2}$.

In Section~\ref{sec:related} we study equilibrium existence and computation in games played on two related machines. We provide an exact characterization of games in $\G_{unit}^{Q2}$ that admit a NE, distinguishing between global and machine-dependent priority lists, and between games where the ratio of the machines' rates is a rational or irrational number.

Focusing on games that admit a Nash equilibrium (NE), we turn in Section~\ref{sec:inefficiency} to examine how competition impacts the efficiency of these equilibria. Our efficiency metric is the makespan, defined as the maximum completion time across all jobs. We first verify that the tight bounds established in prior work for traditional scheduling games without competition remain valid when rank-based utilities are introduced.
Specifically, we find that the PoA and PoS of $\G^{P}$ are both $2 - \frac{1}{m}$, a result which also extends to games in $\G^{P2}$ with {\em Inversed-Policies} mechanism. For related machines, we show that the PoA and PoS of a game in $\G^{Q2}$ depend on the machine's speed ratio $r$: they are equal to $r + 1$ when $r \leq \frac{\sqrt{5} - 1}{2}$ and $\frac{r+2}{r+1}$ when $r > \frac{\sqrt{5} - 1}{2}$. Notably, when $r = \frac{\sqrt{5} - 1}{2}$, the system reaches its highest inefficiency, with both PoA and PoS equal $\frac{\sqrt{5} + 1}{2}$.
%
%

While the bounds are similar to those in traditional scheduling games, our analysis reveals instances where competition either mitigates or worsens the equilibrium inefficiency. Specifically, we present cases where a game without competition attains the worst possible price of stability (PoS) in its class, but with the introduction of competition, the price of anarchy (PoA) improves to $1$. Conversely, we present games where introducing competition degrades the equilibrium quality.

For games in $\G_{unit}^{P, global}$ that do not admit a NE, we propose a deviator rule and show that when it is applied, the price of sinking (PoSINK) is as low as $1 + \frac{1}{2 \cdot \ceil{\frac{n}{m}}}$. On the other hand, for games in $\G^{Q2}$ that do not admit a NE, we show that the price of sinking is not bounded by a constant.

Finally, in Section~\ref{sec:competition_classes} we analyze games with different competition classes, where the job set is partitioned into competition sets, and players aim to perform well within their set. We demonstrate that varying partitions into competition classes can ensure or prevent the existence of a pure NE. On the positive side, for games in $\G^{P2}$ with {\em Inversed-Policies}, we prove that a NE exists for any partition of $J$. Additionally, we analyze a modified model, in which priority lists are determined by a permutation of the competition sets rather than a full permutation of $J$. Assuming seniority-based priority within each set, we show that in this model a NE must exist and BRD converges to a NE.


\section{Computational Complexity}
\label{sec:hardness}

In this section, we show that the problem of deciding whether a game has a NE is NP-complete already in the case that job lengths are in $\{1,2\}$. Formally,

\begin{theorem}
\label{thm:hardUnit}
Given a coordination mechanism with rank-based utilities, the problem of deciding whether the game has a NE is NP-complete, even when the machines are identical and for all jobs $p_j \in \{1, 2\}$.
\end{theorem}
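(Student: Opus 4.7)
The plan has two parts: establishing membership in NP, and proving hardness by a reduction from a SAT variant. Membership is immediate: a NE is witnessed by the profile $s$ itself, and for each of the $n$ jobs and each of the $m$ alternative machines one computes the pair $\langle rank_j, C_j\rangle$ in polynomial time and checks the lexicographic stability condition; the total verification cost is polynomial.

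For NP-hardness, given a 3SAT formula $\varphi$ with variables $x_1,\ldots,x_n$ and clauses $C_1,\ldots,C_k$, I would build in polynomial time a coordination mechanism $G(\varphi)$ on identical machines with $p_j\in\{1,2\}$ such that $G(\varphi)$ admits a NE iff $\varphi$ is satisfiable. The reduction uses two gadget families. A \emph{variable gadget} $V_i$ for $x_i$ is a small subgame whose only internally stable configurations are two symmetric ones, which we identify with $x_i=T$ and $x_i=F$; each configuration is witnessed by a distinguished ``literal job'' sitting on a prescribed machine. A \emph{clause gadget} $C_j^{*}$ is built around the basic unstable configuration exhibited in Section~\ref{sec:util}, namely two unit jobs on two identical machines that share a priority list, which has no NE in isolation. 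The clause gadget is tuned so that its instability is neutralized precisely when the pattern of incoming literal jobs corresponds to a truth assignment satisfying $C_j$; if every literal of $C_j$ is false, the Section~\ref{sec:util} cycle reappears locally and no global profile can be stable.

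The main obstacle I anticipate is \emph{gadget isolation}. With machines identical and lengths restricted to $\{1,2\}$, essentially all engineering must be performed through the machine priority lists and the multiplicities of decoy jobs; we cannot use distinct processing times to separate jobs. Three things must be checked: (a) $V_i$ admits no third internal configuration that would silently satisfy clauses without committing to a truth value; (b) no cross-gadget deviation is ever beneficial, so gadget-internal stability governs global stability; and, most delicately, (c) rank-based utilities respect the intended gadget partition, since a single migration can reshuffle many players' positions in the global ranking. I expect (c) to be the technical heart of the proof. The plan to handle it is to pad the construction with bands of filler jobs of length $1$ and $2$, arranged so that each gadget's jobs occupy a disjoint interval of the global completion-time ordering; local moves then cannot change any player's global rank, and the stability analysis reduces to a gadget-by-gadget argument.

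Correctness then proceeds in the two standard directions. Forward: given a satisfying assignment $\alpha$, place each $V_i$ in the configuration encoding $\alpha(x_i)$; every clause gadget inherits at least one ``true'' literal job that breaks its Section~\ref{sec:util} cycle, and a direct case analysis rules out beneficial deviations for gadget-internal jobs and fillers alike. Converse: any NE of $G(\varphi)$ must restrict to an allowed configuration on each $V_i$ (by (a)), inducing an assignment $\alpha$; if some clause $C_j$ were falsified by $\alpha$ then $C_j^{*}$ would reproduce the Section~\ref{sec:util} cycle, contradicting the NE assumption. Together with the polynomial-time construction of $G(\varphi)$, this yields the claimed NP-completeness.
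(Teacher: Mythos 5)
Your NP-membership argument is fine and matches the paper's. The hardness half, however, is a plan rather than a proof: the variable and clause gadgets are never actually constructed, and the three obligations you yourself flag as the technical heart --- (a) no spurious third configuration of a variable gadget, (b) no beneficial cross-gadget deviation, and (c) rank changes staying local --- are stated but not discharged. Since the entire difficulty of the theorem lives in exactly these points, the proposal does not yet establish either direction of the equivalence. Moreover, the isolation strategy you propose is in tension with the model. On identical machines any NE must be essentially load-balanced (otherwise the job with the largest completion time migrates to a strictly less loaded machine and improves both its completion time and its rank), so you cannot confine a gadget to ``prescribed machines'' with its own disjoint interval of completion times: all gadgets are forced to interleave across the same time layers on all machines. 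A single migration then changes the multiset of completion times at two time points simultaneously and, through the averaged-rank tie rule, perturbs the ranks of jobs belonging to other gadgets. ``Padding with filler bands'' does not automatically neutralize this, because the deviations that matter in this model are precisely the ones that keep the deviator's completion time but break or create ties (Claim~\ref{cl:beneficial}); your sketch never analyzes the tie structure at all.

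For comparison, the paper reduces from $3$-bounded $3$-dimensional matching (following the Lenstra--Shmoys--Tardos construction) rather than from 3SAT, and it resolves the isolation problem globally rather than gadget-by-gadget: the job set and priority lists are engineered so that every NE is perfectly balanced at load $4$ with a rigid layer pattern (length-$2$ dummies or two element jobs in the first two slots, a $U$-job in slot $3$, a $V$-job in slot $4$), and each machine prioritizes a distinct $V$-job so that the two-job no-NE cycle of Section~\ref{sec:util} is triggered in the fourth layer exactly when the combinatorial structure (a perfect matching) is absent. Your instinct to use that two-job cycle as the ``clause falsified'' trigger is exactly the mechanism the paper exploits, so the idea is sound; but without explicit priority lists, an argument forcing balance, and a completed case analysis of ranks under ties in both directions of the reduction, the proof is not there yet.
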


\begin{proof}
Given a game and a profile $s$, verifying whether $s$ is a NE can be done by checking for every job whether its current assignment is also its best-response, therefore the problem is in NP.

We show a reduction from $3$-bounded $3$-dimensional matching ($3$DM$3$). Our proof is based on an idea used in the hardness proof in~\cite{LST90}.
The input to the $3$DM-$3$ problem is a set of triplets $T \subseteq X \times Y \times Z$,
where $|T|\ge n$ and $|X|=|Y|=|Z|=n$. The goal is to decide whether $T$ has a $3$-dimensional matching of size $n$,  i.e., there exists a subset $T' \subseteq T$, such that $|T'|=n$, and every element in $X \cup Y \cup Z$ appears exactly once in $T'$.
$3$DM-$3$ is known to be NP-complete \cite{Kann91} even if the number of occurrences of every element of $X \cup Y \cup Z$ in $T$ is at most $3$.
We assume further that the number of occurrences of every element of $X \cup Y \cup Z$ in $T$ is exactly $2$ or $3$. 
This is w.l.o.g., since if an element appears in a single triplet, $t \in T$, then $t$ must be included in the solution, and the instance can be adjusted accordingly  -- by removing triplets containing the two other elements in $t$.

Given an instance $T$ of $3$DM-$3$, we construct the following game, $G_T$.
There are $m=|T|$ machines, $M_1,M_2,\ldots,M_{|T|}$. The $|T|$ machines correspond to the $|T|$ triplets.  The set of jobs consists of:
\begin{enumerate}
    \item $2n$ unit-length {\em element jobs} - one for each element in $Y \cup Z$.
    \item For every $1 \le i \le n$, we denote by {\em triples of type $i$} the triples that contain $x_i$. Let $\tau_i$ be the number of triples of type $i$. There are $\tau_i -1$ dummy jobs of type $i$. All dummy jobs have length $2$. Let $D_i$ denote the set of dummy jobs of type $i$, and let $D = \cup_i D_i$. Note that $|D|=\sum_i (\tau_i-1) = |T|-n$.
    \item Two sets $U=\{u_1,\ldots,u_{|T|}\}$ and $V=\{v_1,\ldots,v_{|T|}\}$, each with $|T|$ dummy unit-length jobs.
\end{enumerate}

All together there are $n+3|T|$ jobs, out of which $2(n+|T|)$ have unit length and $|T|-n$ have length $2$. Thus, the total processing time of jobs in $G_T$ is $2(n+|T|)+2(|T|-n) = 4|T|=4m$.

The heart of the reduction lies in determining the priority lists. Every machine corresponds to a triplet, and the jobs corresponding to this triplet are prioritized over other jobs. The idea is that if a $3$DM-$3$ matching exists, then for every $1 \le i \le n$, the $\tau_i$ machines corresponding to triplets of type $i$ will be assigned all the jobs originated from the triplet to which $x_i$ belongs, Specifically, $D_i$ as well as one job from $Y$ and one job from $Z$. Additionally, to make the schedule stable, each machine will be assigned also one $U$-job and one $V$-job, such that the load on every machine is exactly $4$, and the schedule is a NE.

The priority lists are defined as follows. When the list includes a set, it means that the set elements appear in a fixed arbitrary order. 
For every triplet $t_{\ell}=(x_i,y_j,z_k) \in T$, the priority list of $M_{\ell}$ is $$\pi_{\ell}=(D_i, y_j,z_k,U,v_{\ell}, V\setminus \{v_{\ell}\}, D \setminus D_i,  Y\setminus\{y_j\}, Z\setminus\{z_k\}).$$

Observe that every machine prioritizes the jobs associated with its elements, and also a different $V$-job among the dummy jobs in $V$.

In order to complete the proof we prove the following two claims that relate the existence of a NE in the game $G_T$ to the existence of a perfect matching in the $3$DM-$3$ instance $T$. We first show that if the $3$DM-$3$ instance has a perfect matching, then the game induced by our construction has a pure Nash equilibrium.

\begin{claim}
\label{clm:1:hard}
If a $3$D-matching of size $n$ exists in $T$, then the game $G_T$ has a NE.
\end{claim}

\begin{proof}
Let $T' \subseteq T$ be a matching of size $n$. 
Consider a triplet $t_\ell=(x_i, y_j,z_k)$ in $T'$. Note that $t_\ell$ is of type $i$. Assign $y_j$ and $z_k$ on $M_{\ell}$, and the $\tau_i-1$ jobs of $D_i$ on the other machines of type $i$.
Finally, assign on each machine one $U$-job, and the $V$-job it prioritizes.
See an example in Figure~\ref{fig:hard2}.

\begin{figure}[h]
  \centering
  \includegraphics[width=0.88\textwidth]{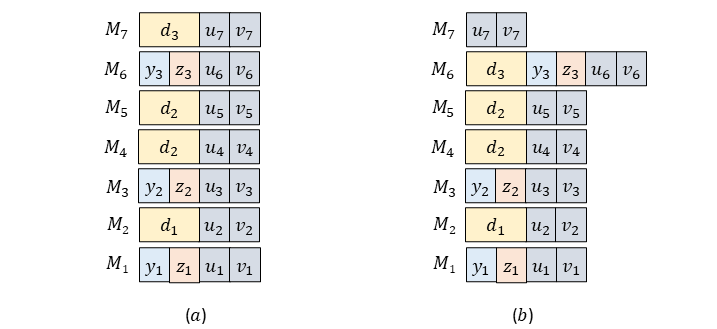}
  \caption{Let $T=\{t_1=\{x_1,y_1,z_1\}, t_2=\{x_1,y_2,z_3\}, t_3=\{x_2,y_2,z_2\}, t_4=\{x_2,y_1,z_2\}, t_5=\{x_2,y_3,z_1\}, t_6=\{x_3,y_3,z_3\}, t_7=\{x_3,y_1,z_1\}\}$. Note that $\{t_1,t_3,t_6\}$ is a perfect matching. (a) a NE profile. $(b)$ a non-beneficial deviation of $d_3$.}
  \label{fig:hard2}
\end{figure}

Let us verify that the resulting assignment is a NE. 
Recall that $|D|= |T|-n$. Thus, $|T|-n$ machines are assigned a job of length $2$ followed by two unit jobs, and $n$ machines are assigned four unit jobs.
The $Y$-jobs all complete at time $1$ and have rank $\frac{n+1}{2}$. The $Z$-jobs all complete at time $2$ and have rank $n+\frac{|T|+1}{2}$.
A migration of a $Y$-job or a $Z$-jobs will cause it to be placed after some job in $D$ or after the unit-length dummy jobs. Therefore, it will delay its completion time and will worsen its rank, implying that such a migration is not beneficial.

The completion time of each of the $D$-jobs is $2$, and its rank is $n+\frac{|T|+1}{2}$. A migration of a job in $D_i$ to a machine of type $i' \neq i$ will cause to to start after all the jobs on this machine and is clearly not beneficial. A migration of a job in $D_i$ to a different  machine of type $i$ may keep its completion time (depending on the internal order of the $D_i$-jobs on this machine), but will surely hurt its rank, as one $U$-job will now have a lower completion time. A migration to the machine of type $i$ that processes $y_j$ and $z_k$, will keep its completion time and its rank: Job $y_j$ does not complete before it anymore, however, the dummy $U$-job on its machine will now complete before it. Job $z_k$ does not complete at time $2$ anymore, however, the dummy $V$-job on its machine now complete it time $2$. Thus, its rank will remain $n+\frac{|T|+1}{2}$ (see Figure~\ref{fig:hard2}(b)).

The completion time of all the jobs in $U$ is $3$. A migration of a $U$-job will either delay its completion time and worsen its rank, or keep its completion time and rank. Finally, since every $V$-job is assigned to a machine that prioritizes it, a migration of a $V$-job, will delay its completion time and worsen its rank. We conclude that the profile is a NE.
\end{proof}

The next claim shows that if the $3$DM-$3$ instance does not have a perfect matching, then our construction guarantees that a no-NE subgame is triggered, and $G_T$ has no NE.

\begin{claim}
\label{clm:2:hard}
If the game $G_T$ has a NE profile, then a $3$D-matching of size $n$ exists.
\end{claim}

\begin{proof}
Let $s$ be a NE schedule of $G_T$. We show that $s$ induces a $3$D-matching of size $n$.
Since for all $i$ the $D_i$-jobs are first in $\tau_i$ priority lists, and $|D_i| = \tau_i - 1$, the jobs in $D_i$ are processed first on $\tau_i - 1$ machines. Also, the load on the machines is balanced. Specifically, the load on each machine is exactly $4$. Otherwise, a job with maximal completion time can deviate to a machine with load less than $4$ and reduce both its completion time and rank. 
Thus, in every machine in $s$, the jobs with completion times $3$ and $4$ are unit jobs. For $k \in \{3,4\}$ we denote by {\em layer $k$} the set of unit jobs with completion time $k$. 

Next note that the total length of jobs that precede the $U$-jobs in the priority lists is $2m$. Therefore, in every  NE profile, every $U$-job completes not later than at time $3$, as otherwise, some $U$ has a beneficial deviation.

We show that the $4^{th}$ layer includes only $V$-jobs,
such that every job in $V$ is assigned to the machine that prioritizes it, that is, $v_{\ell}$ is processed $4^{th}$ on $M_{\ell}$. 
Assume by contradiction that some machine $M_{\ell}$ does not process $v_{\ell}$ in the $4^{th}$ layer. By the discussion above, the jobs of $D$ and $U$ are not in the $4^{th}$ layer.
If $M_{\ell}$ processes in the $4^{th}$ layer a $V$-job, $v_{\ell'}$ for $\ell' \neq \ell$, then the no-NE game between $v_{\ell'}$ and the job in the $4^{th}$ layer on $M_{\ell'}$ comes to life, and $s$ is not a NE.
If $M_{\ell}$ processes some element-job, say $w \in Y \cup Z$, in the $4^{th}$ layer, we distinguish between two cases: $(i)$ if $w \not \in t_{\ell}$, then the no-NE game between any $V$-job on another machine and job $w$ comes to life, as demonstrated in Figure~\ref{fig:hardNoNE}. 
$(ii)$ If $w \in t_{\ell}$, recall that we assume that every element belongs to at least two triplets. Let $\ell' \in T$ be a different triplet to which $w$ belongs. Job $w$ has a beneficial deviation to $M_{\ell'}$, since it will be placed before the job that is currently in the $4^{th}$ slot on $M_{\ell'}$. 
We conclude that the $4^{th}$ layer consists of $V$-jobs. The same argument can be used to show that in every NE, the $3$rd layer consists of  $U$-jobs. Specifically, any element-job in the $3^{rd}$ layer has a beneficial deviation.

Consider the $\tau_i$ machines of type $i$. The $\tau_i-1$ dummy jobs of $D_i$ are assigned first on $\tau_i-1$ of these machines, as otherwise, $s$ is not a NE. 
Since each of these machines is also assigned jobs from $U$ and $V$ in the $3^{rd}$ and $4^{th}$ layer respectively, and the makespan is $4$, it is not assigned any element jobs from $Y \cup Z$.
The above observation is valid for $\sum_i (\tau_i-1) = |T|-n$ machines. Therefore, it must be that the jobs of $Y$ and $Z$ are processes on $n$ machines; every pair on the first two layers. The priority lists dictate that such an assignment is stable only if for every $i$, one machine, $M_{\ell}$, of type $i$, corresponding to a triplet $t_{\ell}=(x_i,y_j,z_k)$ is assigned the element jobs $y_j$ and $z_k$. Thus, a $3$D matching is induced.
\end{proof}

\begin{figure}[h]
  \centering
  \includegraphics[width=0.88\textwidth]{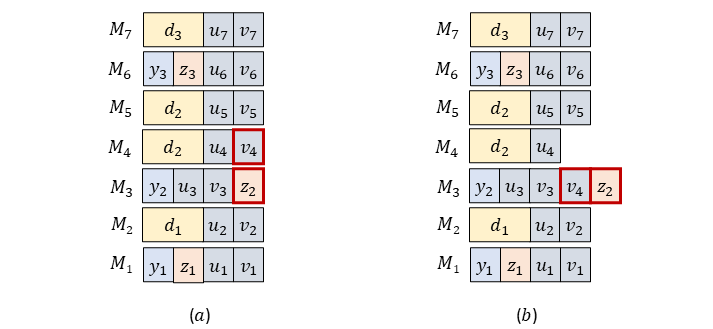}
  \caption{Let $T=\{t_1=\{x_1,y_1,z_1\}, t_2=\{x_1,y_2,z_3\}, t_3=\{x_2,y_2,z_1\}, t_4=\{x_2,y_1,z_2\}, t_5=\{x_2,y_3,z_1\}, t_6=\{x_3,y_3,z_3\}, t_7=\{x_3,y_1,z_1\}\}$. Note that  a perfect matching does not exist. (a) a possible profile. $(b)$ a beneficial deviation of $v_4$.}
  \label{fig:hardNoNE}
\end{figure}

The statement of Theorem~\ref{thm:hardUnit} follows from the two claims. 
\end{proof}

\section{Greedy Algorithms and Stability Conditions}
\label{sec:prel}

In this section we present greedy scheduling algorithms and provide sufficient and necessary conditions for the stability of their output. These algorithms will be analyzed further in our work. 

\subsection{Identical Machines}
The first algorithm is for identical machines. It assigns the jobs greedily one after the other in non-decreasing order of starting time.

\begin{algorithm}[H]
\caption{ - Computing a schedule $s$, identical machines}
\label{alg:AlgNext}
\begin{algorithmic}[1]
\STATE For $1 \le i \le m$, set $L_i(s) = 0$.
\REPEAT
\STATE Let $i^{\star} = \argmin_i ~ (L_i(s))$, breaking ties arbitrarily.
\STATE Assign on machine $i^{\star}$ the first unassigned job in $\pi_{i^{\star}}$, $j$.
\STATE $L_{i^{\star}}(s) = L_{i^{\star}}(s) + p_j$.
\UNTIL {all jobs are scheduled}
\end{algorithmic}
\end{algorithm}

An important property of Algorithm~\ref{alg:AlgNext} is that, as shown in \cite{CQ12}, it generates a schedule in which no job can decrease its completion time, that is, a schedule stable against cost-reducing deviations. However, as demonstrated in Section~\ref{sec:util}, such a profile might not be a NE in the presence of competition.

Algorithm~\ref{alg:AlgNext} will be analyzed for various classes of games in this paper. 
The following discussion refers to the class $\G_{unit}$ of games with unit-length jobs.
We begin by presenting a sufficient and necessary condition for having a beneficial deviation of a single job in a schedule in a game $G \in \G_{unit}$. Recall that a deviation is {\em cost-reducing} if it reduces the completion time of the deviating job.

\begin{claim}
\label{cl:beneficial}
Let $s$ be a schedule in a game $G \in \G_{unit}$, such that $s$ is stable against cost-reducing deviations. A job $j$ on machine $M_i$ has a rank-decreasing deviation iff  
\begin{enumerate}
\item Job $j$ is processed last on its machine, and 
\item There exists a job $j'$ on $M_z$, for which $C_{j'}(s) = C_{j}(s)$ and $j \prec_{z} j'$.
\end{enumerate}
\end{claim}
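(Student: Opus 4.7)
\begin{sketch}
The plan is to express the rank change $\Delta R_j := rank_j(s') - rank_j(s)$ induced by a candidate deviation as a sum of per-competitor contributions, and then sign those contributions by a short case analysis.

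First I would fix a candidate deviation $j \colon M_i \to M_z$, set $p := C_j(s)$ (which equals $j$'s position in the $\pi_i$-order on $M_i$, since all jobs are unit length), and set $q := C_j(s')$. Stability of $s$ against cost-reducing deviations forces $q \geq p$. I would then catalogue how $j$'s migration affects the completion times of the remaining jobs: (i)~jobs on $M_i$ at positions $> p$ shift down by one unit; (ii)~jobs on $M_z$ that $\prec_z$-succeed $j$ shift up by one unit; (iii)~all other jobs are unaffected.

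Next I would use the identity
\[ rank_j = 1 + \sum_{j'' \neq j} f(j''), \quad\text{where } f(j'') = \begin{cases} 1 & \text{if } C_{j''} < C_j, \\ 1/2 & \text{if } C_{j''} = C_j, \\ 0 & \text{if } C_{j''} > C_j, \end{cases} \]
to rewrite $\Delta R_j = \sum_{j'' \neq j} \bigl(f'(j'') - f(j'')\bigr)$, where $f'$ is computed in $s'$. A short case analysis on the sign of each summand, split by the category (i)--(iii) of $j''$ and by the interval in which $C_{j''}(s)$ lies relative to $p$ and $q$, will yield two conclusions.

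The first conclusion is that when $q > p$, every summand satisfies $f'(j'') - f(j'') \geq 0$, so $\Delta R_j \geq 0$ and the deviation cannot be rank-decreasing. The intuition is that downward shifts on $M_i$ only push competitors in front of $j$ (never behind), while upward shifts on $M_z$ act on jobs whose completion time is already at least $q$, so none of them crosses below $q$. The second conclusion is that when $q = p$, the only potentially nonzero summands come from (a)~the unique job, if any, at position $p+1$ on $M_i$, contributing $+1/2$, and (b)~the unique job at position $p$ on $M_z$ when $j \prec_z$ that job, contributing $-1/2$. Hence $\Delta R_j = (\alpha - \beta)/2$ with $\alpha, \beta \in \{0,1\}$, and $\Delta R_j < 0$ iff $\alpha = 0$ and $\beta = 1$, which are exactly conditions~(1) and~(2).

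For the ``if'' direction I would verify that whenever (1) and (2) hold with witness $j' \in M_z$, the deviation to $M_z$ automatically lands $j$ at $q = p$: the bound $q \leq p$ follows because every $M_z$-job $\prec_z$-preceding $j$ also precedes $j'$ (by $j \prec_z j'$), of which there are exactly $p-1$; the reverse $q \geq p$ is again cost-reducing stability. Then the second conclusion above gives $\Delta R_j = -1/2 < 0$. The step I expect to be the main obstacle is the monotonicity claim for $q > p$, because several jobs change completion times simultaneously and one must check that no combination of downward shifts on $M_i$ and upward shifts on $M_z$ can conspire to produce a net decrease in $j$'s rank; the key observation making the case analysis go through is that the two kinds of shifts act on disjoint regions of the completion-time axis relative to the window $[p,q]$.
\end{sketch}
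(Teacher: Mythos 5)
Your proposal is correct and follows essentially the same route as the paper's proof: both arguments rest on the facts that stability against cost-reducing deviations forces $C_j(s')\ge C_j(s)$, that with unit jobs at most one job per machine shares the completion time $C_j(s)$, and that the only possible negative contribution to the rank (the job at position $C_j(s)$ on $M_z$ succeeding $j$ in $\pi_z$) is cancelled by a $+1/2$ from the job following $j$ on $M_i$ unless $j$ is last there. Your per-competitor decomposition of $\Delta R_j$ is a more systematic bookkeeping of the same case analysis the paper carries out directly.
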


\begin{proof}
We start by showing that these two conditions are sufficient. Let $s$ be a schedule stable against cost-reducing deviations. Let $j$ be the last job on some machine, and let $j'$ be a job processed on machine $M_z$ such that $C_{j}(s) = C_{j'}(s)$ and $j \prec_{z} j'$. Note that $rank_s(j) = rank_s(j')$. A migration of $j$ to $M_{z}$ will keep its completion time, but will decrease its rank since one less job is tied with it and job $j'$ now completes after it.

Now we show that the conditions are necessary. Let $s$ be a schedule stable against cost-reducing deviations, and let $j$ be a job that has a beneficial migration from $M_{i}$ to $M_z$. Let $s'$ be the schedule resulting from this migration. Since $j$ cannot decrease its completion time, the only possible beneficial migration is rank-decreasing. Thus, there exists a job $j'$ on machine $M_{z}$ such that $C_{j'}(s) \leq C_{j}(s)$ and $C_{j'}(s') > C_{j}(s')$. Since $p_j = p_{j'}$, if $C_{j'}(s) < C_{j}(s)$, then $C_{j}(s') < C_{j}(s)$, contradicting the fact that $s$ is stable against cost-reducing deviations. Therefore, $C_{j'}(s) = C_{j}(s)$. Since $j$ and $j'$ are processed on $M_{z}$ in $s'$, it follows that $j \prec_{z} j'$.
Lastly, since $C_j(s') = C_j(s)$, job $j$ must be the last job processed on $M_i$ in $s$, as otherwise, if any other job is processed after $j$, the number of jobs tied with $j$ in $s'$ would remain the same as in $s$, making the migration to $M_z$ non-beneficial.
\end{proof}


Based on the above claim, we now describe the conditions under which a schedule produced by Algorithm~\ref{alg:AlgNext} is a NE for a game $G \in \G_{unit}$. Let $n = \ell \cdot m + c$ for $0 \le c < m$. Note that when the algorithm is executed on an instance with unit-length jobs, then in the resulting schedule, $m - c$ machines have load $\ell = \floor{n/m}$, and $c$ machines have load $\ell + 1 = \ceil{n/m}$.

\begin{theorem}
Let $s$ be a schedule produced by Algorithm~\ref{alg:AlgNext} for a game $G \in \G_{unit}$. Schedule $s$ is a NE iff the two following conditions hold:    
\begin{enumerate}
\item Let $P$ be the set of $c$ jobs with completion time $\ell + 1$. If $j \in P$ is processed on machine $i$, then $j$ has the highest priority in $\pi_i$ among the jobs in $P$.
\item Let $P_1$ be the set of $m - c$ jobs with completion time $\ell$ that are processed last on their machines, and $P_2$ be the set of $c$ jobs with completion time $\ell$ that are not processed last on their machines. If $j \in P_1$ is processed on machine $i$, then $j$ has the highest priority in $\pi_i$ among the jobs in $P_1$. If $j \in P_2$ is processed on machine $i$, then $j$ has higher priority in $\pi_i$ than any $j' \in P_1$.
\end{enumerate}
\end{theorem}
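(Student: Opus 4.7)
The plan is to combine two facts to reduce the NE check to a small case analysis over last jobs on each machine. First, by \cite{CQ12}, the schedule $s$ output by Algorithm~\ref{alg:AlgNext} is already stable against cost-reducing unilateral deviations, so any beneficial deviation in $s$ must be strictly rank-decreasing. Second, by Claim~\ref{cl:beneficial}, a rank-decreasing deviation of $j$ on $M_i$ to some $M_z$ requires $j$ to be processed last on $M_i$ and some job $j'$ on $M_z$ to satisfy $C_{j'}(s)=C_j(s)$ and $j\prec_z j'$. Hence the only candidate deviators are the last jobs on each machine: the $c$ machines of load $\ell+1$ each have a last job in $P$ (completion time $\ell+1$) and a second-to-last job in $P_2$ (completion time $\ell$), while the $m-c$ machines of load $\ell$ each have a last job in $P_1$ (completion time $\ell$). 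Every job outside $P\cup P_1$ vacuously satisfies the NE condition.

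I would then translate stability, class by class, into the stated local conditions. For $j\in P$ on $M_i$, any same-completion-time witness $j'$ on a different $M_z$ must itself lie in $P$, since $P$ is exactly the set of jobs with completion time $\ell+1$; requiring $j'\prec_z j$ for every such configuration is equivalent to the statement that on every machine holding a $P$-job, that job has the highest priority in the machine's own priority list among all $P$-jobs, which is precisely Condition~1. For $j\in P_1$ on $M_i$, a completion-time-$\ell$ witness $j'$ on some other $M_z$ must either be the $P_1$-job of $M_z$ (when $M_z$ has load $\ell$) or the $P_2$-job of $M_z$ (when $M_z$ has load $\ell+1$); translating each case yields the two clauses of Condition~2.

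The converse direction follows by the same translation read in reverse: if both conditions hold, no $j\in P\cup P_1$ admits a witness, and by Claim~\ref{cl:beneficial} no other job can benefit at all, so $s$ is a NE. Conversely, if Condition~1 (resp.\ Condition~2) fails at some machine, one directly exhibits the corresponding $P$-job (resp.\ $P_1$-job) whose deviation to that machine is witnessed by the lower-priority tied job there.

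The proof is essentially careful bookkeeping once the last-job structure has been identified. The main (modest) subtlety is checking that the local ``highest priority in $\pi_i$ among a given class'' formulation correctly dualizes the global ``no tied job on any other machine ranks $j$ higher'' requirement, and in particular that Condition~2 must include both a $P_1$-vs-$P_1$ clause and a $P_2$-vs-$P_1$ clause, since the completion-time-$\ell$ witnesses split across two different machine types.
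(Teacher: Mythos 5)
Your proposal is correct and follows essentially the same route as the paper: it invokes the stability of the greedy output against cost-reducing deviations, applies Claim~\ref{cl:beneficial} to restrict candidate deviators to the last jobs on each machine (the sets $P$ and $P_1$), and translates the absence of a same-completion-time, lower-priority witness into the two stated conditions. The case split of the completion-time-$\ell$ witnesses between $P_1$-jobs and $P_2$-jobs is exactly the bookkeeping the paper's proof relies on.
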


\begin{proof}
Let $s$ be a NE schedule produced by Algorithm~\ref{alg:AlgNext}. Since all jobs in $P$ share the same completion time, and similarly, all jobs in $P_1$ and $P_2$ share their respective completion times, conditions $1$ and $2$ follow directly from Claim~\ref{cl:beneficial} (see example in Figure~\ref{figP}). Specifically, in $P$, each job is assigned to the machine where it has the highest priority among the jobs with the same completion time, and similarly, the prioritization within $P_1$ and $P_2$ ensures no beneficial deviations.

For the other direction, let $s$ be a schedule in which both conditions are satisfied. Since no job can decrease its completion time, it follows from Claim~\ref{cl:beneficial} that only jobs that are processed last on their machines, i.e., jobs in $P$ or $P_1$, might potentially benefit from migrating. By condition $1$ and Claim~\ref{cl:beneficial}, no job in $P$ has a beneficial migration. Similarly, by condition $2$ and Claim~\ref{cl:beneficial}, no job in $P_1$ has a beneficial migration.
\end{proof}

\begin{figure}[h]
\centering
\includegraphics[width=1\textwidth]{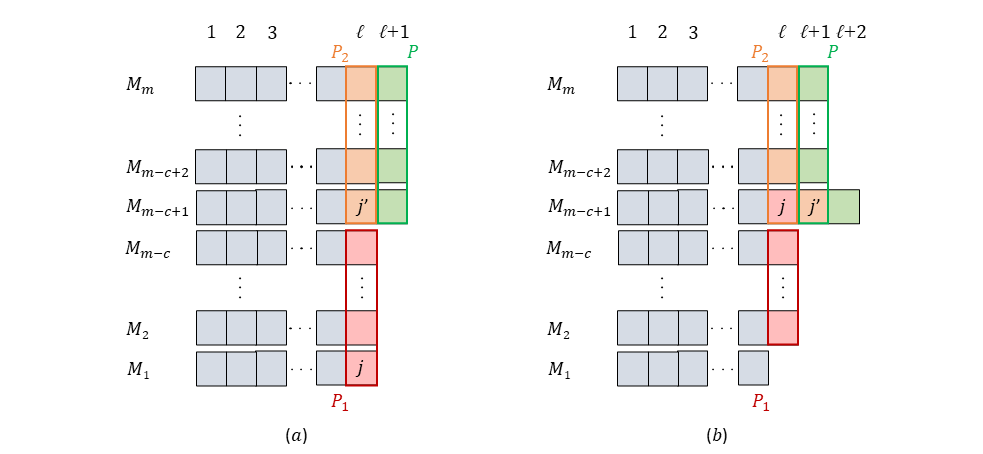}
\caption{(a) The sets $P,\ P_1$ and $P_2$ in $s$. (b) A beneficial migration of job $j \in P_1$ from $M_1$ to $M_{m - c +1}$, assuming condition $2$ is not satisfied and $j \prec_{m - c + 1} j'$.}
\label{figP}
\end{figure}

\subsection{Related Machines}
\label{sec:alg2}
The second algorithm is for related machines. It assigns the jobs greedily one after the other according to their expected {\em completion time}.

\begin{algorithm}[H]
\caption{ - Computing a schedule $s$, related machines and unit-jobs} \label{alg:AlgNextRelated}
\begin{algorithmic}[1]
\STATE For $1 \le i \le m$, set $L_i(s)=0$. 
\REPEAT
\STATE Let $i^{\star} = \argmin_i ~ (L_i(s) + 1) / r_i$, breaking ties arbitrarily.
\STATE Assign on machine $i^{\star}$ the first unassigned job in $\pi_{i^{\star}}$.
\STATE $L_{i^{\star}}(s) = L_{i^{\star}}(s) + 1$.
\UNTIL {all jobs are scheduled}
\end{algorithmic}
\end{algorithm}

Clearly, with identical machines, this algorithm is equivalent to Algorithm~\ref{alg:AlgNext}. We refer to Algorithm~\ref{alg:AlgNext} when analyzing identical machines, since its greedy choice is more intuitive.

Section~\ref{sec:related} considers the class $\G_{unit}^{Q2}$ of games with unit-jobs and two related machines, $M_1$ and $M_2$, with rates $r_1 = 1$ and $r_2 = r \leq 1$. 
We now establish a condition regarding the stability of a profile produced by Algorithm~\ref{alg:AlgNextRelated} for this class.
As shown in \cite{RST21}, Algorithm~\ref{alg:AlgNextRelated} produces a schedule in which no job can decrease its completion time. However, with rank-based utilities, such a schedule is not necessarily stable.

Given a schedule $s$, for $i \in \{1,2\}$, let $j_i$ be the last job on $M_i$ in $s$.

\begin{theorem}
\label{thm:NE_related}
Let $s$ be a schedule of a game $G \in \G_{unit}^{Q2}$ produced by Algorithm~\ref{alg:AlgNextRelated}. Schedule $s$ is a NE iff one of the three following conditions holds:
\begin{enumerate}
    \item $L_1(s) < L_2(s) / r$.
    \item $L_1(s) = L_2(s) / r$, and each of the two last jobs is prioritized on its machine over the other last job. Formally, $j_1 \prec_1 j_2$ and $j_2 \prec_2 j_1$. 
    \item $L_1(s) > L_2(s) / r$, and either no job on $M_1$ has completion time $C_{j_2}(s)$, or for the job $j'$ such that $C_{j'}(s)=C_{j_2}(s)$ it holds that $j' \prec_1 j_2$.
\end{enumerate}
\end{theorem}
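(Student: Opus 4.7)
The plan is to prove both directions of the equivalence separately for each of the three mutually exclusive cases, leaning heavily on the fact that Algorithm~\ref{alg:AlgNextRelated} outputs a schedule stable against cost-reducing deviations (established in~\cite{RST21}). Consequently, a job has a beneficial deviation from $s$ only if that deviation is strictly rank-reducing. As a preliminary step I would establish an analog of Claim~\ref{cl:beneficial} for the related-machines unit-jobs setting: the only candidates for beneficial deviations are jobs that are last on their machine, because moving a non-last job $j$ at position $k$ on $M_i$ lets the job immediately behind it inherit completion time $k/r_i$, while cost-stability forces $j$'s new completion time to be at least $k/r_i$, leaving $j$'s rank no better than before.

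With this reduction in hand, the sufficiency direction of cases $1$ and $3$ is straightforward. In case $1$, the makespan $L_2/r$ is attained uniquely by $j_2$ (completion times on $M_1$ are all at most $L_1 < L_2/r$, and on $M_2$ only $j_2$ reaches $L_2/r$), so $j_2$ cannot reduce its rank; any migration of $j_1$ to $M_2$ strictly increases its completion time by cost-stability and hence cannot reduce its rank either. In case $3$, a symmetric argument shows $j_1$ is the unique makespan job, and the only potential rank-reducing move is $j_2$ migrating to $M_1$ to displace a job $j'$ with $C_{j'}(s)=C_{j_2}(s)$; the stated subcondition rules this out precisely, since $j' \prec_1 j_2$ forces $j_2$'s new position to be strictly greater than that of $j'$, strictly increasing $j_2$'s cost.

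Case $2$, where $L_1 = L_2/r$, is the most delicate. Here $j_1$ and $j_2$ are tied at the makespan. For necessity, suppose towards contradiction that $j_2 \prec_2 j_1$ fails, i.e., $j_1 \prec_2 j_2$. Cost-stability forces $j_1$'s hypothetical insertion position $p$ on $M_2$ to satisfy $p/r \geq L_2/r$, so $p \geq L_2$; combined with $j_1 \prec_2 j_2$, which places $j_1$ before $j_2$'s slot, this forces $p = L_2$ exactly and in particular $j_1 \succ_2$ every other original job on $M_2$. Consequently $j_1$ keeps its completion time at $L_1$ while pushing $j_2$ to position $L_2+1$ with new cost $(L_2+1)/r > L_1$, strictly breaking the tie in $j_1$'s favor and exhibiting a rank-reducing deviation, contradicting that $s$ is a NE; a symmetric argument on $M_1$ yields the other required condition. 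For sufficiency, assume both priority conditions. Then migrating $j_1$ to $M_2$ inserts it after $j_2$ at position $L_2+1$, strictly increasing $j_1$'s cost without displacing $j_2$, so $j_1$'s rank only worsens, and symmetrically $j_2$ cannot benefit from migrating to $M_1$.

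The main obstacle I anticipate is justifying the preliminary reduction to last jobs in the related-machines case: positional shifts on the two machines are of different magnitudes $1/r_1$ and $1/r_2$ and can create unexpected tie patterns among the unmoved jobs, so one must carefully track which completion times on $M_2$ (those priority-after the migrant) shift by $+1/r$ and which jobs on $M_i$ inherit the vacated slot. The cost-stability accounting outlined above is exactly what is needed to rule out these seemingly dangerous deviations.
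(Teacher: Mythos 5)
Your proposal is correct and follows essentially the same route as the paper: the paper's proof simply invokes Claim~\ref{cl:beneficial} (already stated for all of $\G_{unit}$, hence covering related machines) to reduce everything to deviations by last jobs that are tied with a lower-priority job on the other machine, and then checks the three load cases exactly as you do. The preliminary reduction you re-derive is precisely that claim, so you could cite it directly; your insertion-position accounting in cases 2 and 3 is a more detailed rendering of the same argument.
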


\begin{proof}
We start by showing that each of these conditions ensures no job satisfies the conditions outlined in Claim~\ref{cl:beneficial}.
First, if the slow machine is more loaded, then clearly, $C_{j_2}(s)$ is not equal to the completion time of any job on $M_1$. Also, $C_{j_1}(s)$ is during the interval in which $j_2$ is processed on $M_2$. Second, if $L_1(s) = L_2(s) / r$ and $j_1 \prec_1 j_2$ and $j_2 \prec_2 j_1$, then none of the two jobs has a beneficial deviation, and no other job is last on its machine. Finally, if the third condition is valid, then as required, no last job has a same-rank job on the other machine, such that a deviation is beneficial.

If none of the above conditions hold, then, there must be a job $j$ that is last on its machine, and has a same-rank job on the other machine, such that the conditions for a beneficial deviation given in Claim~\ref{cl:beneficial} are met, and $s$ is not a NE.
\end{proof}


\section{Identical Machines - Equilibrium Existence and Computation}
\label{sec:identical_machines}
In this section we analyze games with identical machines, i.e., $\forall M_i \in M, r_i = 1$.

\subsection{Machine-Dependent Priority Lists and Arbitrary Job Lengths}

For the case of $m = 2$ identical machines, ($M_1, M_2$), with machine-dependent priority list and arbitrary job lengths, Christodoulou et al. \cite{CKN04} introduced the mechanism {\em Increasing-Decreasing} with tiebreak based on the lexicographic order of the jobs. We generalize their result and introduce the {\em Inversed-Policies} coordination mechanism, in which the priority lists of $M_1$ and $M_2$ are inversed. That is, $\forall j = 1, \dots, n$ \xspace \xspace $\pi_1(j) = n - \pi_2(j) + 1$. Note that the order is independent of the job lengths.

We show that a game with {\em Inversed-Policies} has a NE, and that Algorithm~\ref{alg:AlgNext} produces a NE profile.



\begin{theorem}
\label{Thm:InvPol_Convergence}
With {\em Inversed-Policies}, Algorithm~\ref{alg:AlgNext} produces a Nash equilibrium.
\end{theorem}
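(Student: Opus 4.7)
The plan is to first establish the structural form of the schedule produced by Algorithm~\ref{alg:AlgNext} under Inversed-Policies, and then leverage the cost-stability result of \cite{CQ12} to reduce the question to ruling out rank-reducing deviations. Index the jobs so that $\pi_1(j_t)=t$, and hence $\pi_2(j_t)=n-t+1$. A direct induction on the steps of Algorithm~\ref{alg:AlgNext} shows that during its execution $M_1$ holds a prefix $j_1,\ldots,j_a$ of this shared ordering (scheduled in $\pi_1$ order) and $M_2$ holds a complementary suffix $j_n,\ldots,j_{n-b+1}$ (scheduled in $\pi_2$ order, starting with $j_n$), because whenever $a+b<n$ the smallest-$\pi_1$ unassigned job is $j_{a+1}$ and the smallest-$\pi_2$ unassigned job is $j_{n-b}$, so the prefix on $M_1$ and the suffix on $M_2$ never overlap. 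Consequently the final profile $s$ has $M_1$ processing $j_1,\ldots,j_k$ in order and $M_2$ processing $j_n,\ldots,j_{k+1}$ in order for some $0\le k\le n$ determined by the tie-breaking; by \cite{CQ12}, $s$ is stable against cost-reducing deviations, so any beneficial deviation must be rank-reducing and (weakly) cost-increasing.

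Consider a candidate deviator $j_i$ currently on $M_1$; the symmetric case of a deviator on $M_2$ is handled identically because Inversed-Policies is symmetric in the two machines. The key geometric fact is that $\pi_2(j_i)=n-i+1$ exceeds $\pi_2(j_\ell)\in\{1,\ldots,n-k\}$ for every $j_\ell$ currently on $M_2$, since $i\le k$; therefore in the resulting profile $s'$, $j_i$ is placed last on $M_2$ with $C_{j_i}(s')=L_2(s)+p_{j_i}\ge C_{j_i}(s)$ (the inequality by cost stability). I would then partition the other $n-1$ jobs into $S_<,S_=,S_>$ according to whether $C_{\cdot}(s)$ is below, equal to, or above $C_{j_i}(s)$, and relate each to $C_{j_i}(s')$ in $s'$: (i) every $j'\in S_<$ is either on $M_2$ or on $M_1$ before $j_i$, so $C_{j'}$ is unchanged and stays strictly below $C_{j_i}(s')$; (ii) every $j'\in S_>$ that sits on $M_2$ has $C_{j'}(s)\le L_2(s)<C_{j_i}(s')$; (iii) every $j'\in S_=$ on $M_1$ strictly after $j_i$ has $C_{j'}(s')=C_{j'}(s)-p_{j_i}<C_{j_i}(s')$; and (iv) the remaining $S_=$ jobs (those on $M_2$ or on $M_1$ before $j_i$) either stay tied with $j_i$, when $C_{j_i}(s')=C_{j_i}(s)$, or fall strictly below $j_i$, when $C_{j_i}(s')>C_{j_i}(s)$. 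Substituting the resulting counts into $rank_{j_i}=\gamma+1+\beta/2$ and splitting by the two subcases $C_{j_i}(s')>C_{j_i}(s)$ and $C_{j_i}(s')=C_{j_i}(s)$ yields $rank_{j_i}(s')\ge rank_{j_i}(s)$ in each, and any equality case preserves $j_i$'s completion time, so the deviation is not beneficial.

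The main obstacle is the tie bookkeeping in the subcase $C_{j_i}(s')=C_{j_i}(s)$: the $S_=$ jobs on $M_1$ after $j_i$ leave the tied block and slide strictly below $j_i$, while the $S_>$ jobs on $M_1$ after $j_i$ can end up strictly below, tied with, or strictly above $j_i$ depending on $p_{j_i}$ and their original position, and the algebra must show that the net effect of these simultaneous movements is always a weak increase in $j_i$'s rank. Once this accounting is carried out using the identity $C_{j_\ell}(s')=C_{j_\ell}(s)-p_{j_i}$ for $\ell>i$ and the greedy-induced inequality $L_2(s)+p_{j_i}\ge C_{j_i}(s)$, the conclusion follows uniformly for both migration directions and for every tie-breaking execution of Algorithm~\ref{alg:AlgNext}, yielding the claim.
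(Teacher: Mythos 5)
Your proposal is correct and follows essentially the same route as the paper: the inversed lists force $M_1$ to hold a prefix and $M_2$ a complementary suffix of the shared ordering, so any deviator is placed last on the target machine, its completion time weakly increases by cost-stability, and no deviation is beneficial. The tie bookkeeping you flag as the main obstacle in fact collapses to a one-line monotonicity observation: the deviator's completion time weakly increases while every other job's completion time weakly decreases (jobs after it on its old machine drop by $p_{j_i}$, all others are unchanged), so each competitor moves weakly downward relative to the deviator, and each such relative movement (above to tied, above to below, tied to below, or no change) contributes a nonnegative amount to the deviator's rank, giving $rank_{j_i}(s')\ge rank_{j_i}(s)$ term by term with no cancellation to track.
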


\begin{proof}
We show that upon termination of the algorithm, no job has a beneficial deviation. The proof is by induction on the number of jobs assigned in any prefix of the assignment. That is, for any $k \geq 1$, the partial schedule of the first $k$ jobs is stable.

Assume, w.l.o.g., that $\pi_1 = \langle j_1, j_2, \dots, j_{n-1}, j_n \rangle$ and $\pi_2 = \langle j_n, j_{n-1}, \dots, j_2, j_1 \rangle$. The base case considers the first assignment of a job in the algorithm, w.l.o.g., $j_1$ on $M_1$. For this partial schedule, we have that $rank_{j_1} = 1$ and $C_{j_1} = p_1$. Therefore, a deviation to $M_2$ is not beneficial.

Assume that the partial schedule of the first $k-1$ jobs is stable. Let $a$ and $b$ denote the number of jobs assigned so far to $M_1$ and $M_2$, respectively, such that $a + b = k - 1$. Consider the assignment of the $k^{th}$ job. Denote by $L_i$ the load on $M_i$ in the partial schedule. Assume first that $L_1 < L_2$, implying that the first unassigned job in $\pi_1$, $j_{a+1}$, is assigned to $M_1$ (see Figure~\ref{fig:inv-pol}). Every job $j_i$ for $i \le a$ on $M_1$ begins its processing before $L_2$, and since the jobs already on $M_2$ are prioritized over $j_i$ in $\pi_2$, a migration from $M_1$ to $M_2$ cannot be beneficial. Every job $j_{n - i}$ for $i < b$ on $M_2$ begins its processing not later than $L_1$, since the load on $M_2$ was equal to or less than the load on $M_1$ at the beginning of the iteration in which $j_{n - b + 1}$ was assigned. Since the jobs already on $M_1$ are prioritized over $j_{n - i}$ in $\pi_1$, a migration from $M_2$ to $M_1$ cannot be beneficial. Therefore, all assigned jobs are stable. The analysis for $L_1 \ge L_2$ is similar.
\end{proof}

\begin{figure}[h]
  \centering
  \includegraphics[width=1\textwidth]{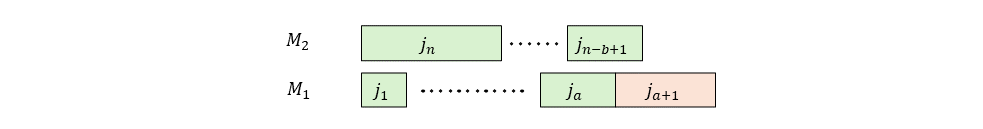}
  \caption{The $k^{th}$ iteration}
  \label{fig:inv-pol}
\end{figure}


\subsection{Global Priority List and Unit-Length Jobs}

In this section, we consider the class $\G_{unit}^{P,global}$, with unit-length jobs and a global priority list. We show that no NE exists if the game is played on more than two machines, and we give a simple characterization of instances on two machines that have a NE. Moreover, if a NE exists, then it can be computed efficiently.

\begin{theorem}
\label{thm:Unit_m2}
A game $G \in \G_{unit}^{P,global}$ has a NE iff $m=2$ and $n$ is odd.
\end{theorem}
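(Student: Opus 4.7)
The plan is to establish each direction of the biconditional separately.

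For $(\Leftarrow)$, let $m=2$, $n=2\ell+1$, and let the global priority list be $\pi=(j_1,\ldots,j_n)$. Running Algorithm~\ref{alg:AlgNext} and breaking ties in favor of $M_1$ assigns, by an immediate induction on the prefix, the odd-indexed jobs to $M_1$ (load $\ell+1$) and the even-indexed jobs to $M_2$ (load $\ell$). In the notation of the characterization theorem for Algorithm~\ref{alg:AlgNext}'s output in Section~\ref{sec:prel}, $P=\{j_{2\ell+1}\}$, $P_1=\{j_{2\ell}\}$, and $P_2=\{j_{2\ell-1}\}$. Condition~1 and the first clause of condition~2 hold trivially since $|P|=|P_1|=1$, and the second clause of condition~2 reduces to the observation that $j_{2\ell-1}$ globally precedes $j_{2\ell}$, yielding a NE.

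For $(\Rightarrow)$, suppose $s$ is a NE. I first show that in $s$ the loads on any two machines differ by at most one. Assuming to the contrary that $L_i(s)\ge L_k(s)+2$, let $j_0$ be the last job on $M_i$, so $C_{j_0}(s)=L_i(s)$, and let $T$ be the set of jobs achieving the maximum completion time $L_i(s)$ in $s$; then $rank_{j_0}(s)=n-(|T|-1)/2$. Migrating $j_0$ to $M_k$ gives it a new completion of at most $L_k(s)+1\le L_i(s)-1$. If $|T|\ge 2$, the jobs in $T\setminus\{j_0\}$ still have completion $L_i(s)$ in the resulting profile $s'$, so $rank_{j_0}(s')\le n-(|T|-1)<rank_{j_0}(s)$. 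If $|T|=1$, then $rank_{j_0}(s)=n$; in $s'$ the old second-to-last on $M_i$ (now last, at completion $L_i(s)-1$) is strictly above $j_0$ whenever $j_0$'s new completion is strictly below $L_i(s)-1$, and otherwise $j_0$ is tied with it at the new maximum $L_i(s)-1$ (so $rank_{j_0}(s')\le n-\tfrac{1}{2}$). In every case $rank_{j_0}(s')<rank_{j_0}(s)$, contradicting NE.

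Next, write $n=\ell m+c$ with $0\le c<m$; by the previous step $c$ machines carry load $\ell+1$ and $m-c$ carry load $\ell$, and the sets $P$, $P_1$, $P_2$ from the characterization theorem are well-defined. If two distinct jobs $j,j'\in P$ lie on different machines with $j$ globally preceding $j'$, then applying the sufficient direction of Claim~\ref{cl:beneficial} (with $M_z$ the machine of $j'$) produces a rank-reducing migration for $j$; this direction remains valid even when the migration strictly reduces $j$'s completion time, so it does not rely on the claim's stability hypothesis. Since $s$ is NE, this forces $|P|=c\le 1$, and the symmetric argument applied to $P_1$ gives $|P_1|=m-c\le 1$. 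Adding, $m\le 2$. Finally, if $m=2$ and $n$ is even then $c=0$ and $|P_1|=2$, contradicting the previous bound; hence $n$ must be odd.

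The main obstacle is the load-balancing step, specifically the sub-case $|T|=1$ where one must verify that the deviator $j_0$ strictly drops from rank $n$: either its new completion is strictly below the new maximum $L_i(s)-1$ (so some other job is strictly above it), or it ties at the new maximum with at least the previous second-to-last on $M_i$. A secondary subtlety is to confirm that the sufficient direction of Claim~\ref{cl:beneficial} can be invoked without the claim's stability hypothesis, which is immediate on inspection of the short migration argument in that claim's proof.
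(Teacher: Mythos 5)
Your proof is correct and follows essentially the same route as the paper: the forward direction is the same balanced-loads-plus-tied-last-jobs pigeonhole argument (with the load-balancing step, which the paper merely asserts, worked out in detail, and the paper's ``two equally loaded machines'' phrasing repackaged as $|P|\le 1$ and $|P_1|\le 1$), and the backward direction builds the same schedule via Algorithm~\ref{alg:AlgNext}, verified through the Section~\ref{sec:prel} characterization theorem rather than by a direct check. The only nitpick is in the load-balancing step: you should explicitly take $M_i$ to be a machine of \emph{maximum} load (not just any machine with $L_i\ge L_k+2$) so that $L_i(s)$ really is the maximum completion time and $T$ is the top rank class; with that word added the argument is airtight.
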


\begin{proof}
We show that the condition is necessary and sufficient.

\begin{claim}
\label{clm:1:Unit_m2}
If $G$ has a NE, then $m = 2$ and $n$ is odd.
\end{claim}

\begin{proof}
Let $s$ be a NE schedule. It must be that $0 \leq \max_{i\in M} L_i(s) - \min_{i\in M} L_i(s) \leq 1$, as otherwise the last job processed in $s$ has a beneficial migration. If $m = 2$ and $n$ is even, or if $m > 2$, then there are at least two machines $M_1, M_2$ such that $L_1(s) = L_2(s)$. Let $j_1, j_2$ be the last jobs on $M_1, M_2$, respectively. Notice that $C_{j_1}(s) = C_{j_2}(s)$ and $rank_{j_1}(s) = rank_{j_2}(s)$. Assume w.l.o.g., that in the global priority list $j_1 \prec j_2$. A migration of $j_1$ to $M_2$ will improve its rank since one less job will share with $j_1$ the completion time $C_{j_1}(s)$, and one more job will have completion time $C_{j_1}(s) + 1$, contradicting the stability of $s$.
\end{proof}

\begin{claim}
\label{clm:2:Unit_m2}
If $m = 2$ and $n$ is odd, then $G$ has a NE.
\end{claim}

\begin{proof}
Let $n = 2\ell+1$. Assume that Algorithm~\ref{alg:AlgNext} is executed. Consider the schedule after the first $2\ell$ jobs are assigned.
Let $j_3$ be the last job in $\pi$, and let $j_1, j_2$ be the jobs with completion time $\ell$ on $M_1$ and $M_2$ respectively.
Assume w.l.o.g., that $j_1 \prec j_2$. We claim that an assignment of $j_3$ on $M_1$ will produce a NE.  In the resulting schedule, $s$, the load on $M_1$ is $\ell+1$ and the load on $M_2$ is $\ell$. Recall that $s$ is stable with respect to completion time. Therefore, all jobs that are not last on their machines are stable. A migration of $j_1$ to $M_2$ will keep its rank and completion time, and a migration of $j_2$ will increase its rank from $2\ell-\frac 1 2$  to $2\ell$. Therefore, $s$ is a NE.
\end{proof}

The statement of Theorem~\ref{thm:Unit_m2} follows from Claims~\ref{clm:1:Unit_m2} and \ref{clm:2:Unit_m2}.
\end{proof}

\subsection{Machine-Dependent Priority Lists and Unit-Length Jobs}



In this section, we consider the class $\G_{unit}^{P2}$, with unit-length jobs and two identical machines. We present a linear time algorithm for deciding whether a given game has a NE, and producing a NE if one exists.

\begin{theorem}
\label{thm:2MachinesUnitJobs}
For every $G \in \G_{unit}^{P2}$, it is possible to decide in linear time whether $G$ has a NE, and to produce a NE if one exists.
\end{theorem}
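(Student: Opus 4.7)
My plan is to give a linear-time algorithm that runs Algorithm~\ref{alg:AlgNext} with two canonical tie-breakings, relying on Claim~\ref{cl:beneficial}. Since Algorithm~\ref{alg:AlgNext} always produces a schedule stable against cost-reducing deviations, the test for NE reduces to a simple rank condition on the last job(s) of each machine.

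The algorithm runs Algorithm~\ref{alg:AlgNext} twice: once with the tie-break ``always $M_1$ first'' to get $s^{(1)}$, and once with ``always $M_2$ first'' to get $s^{(2)}$. For unit-length jobs on two identical machines, $M_1$-first is equivalent to the alternation in which, at each odd step, $M_1$ takes the top unassigned job in $\pi_1$ and, at each even step, $M_2$ takes the top unassigned in $\pi_2$ (symmetrically for $M_2$-first). For each output, the NE conditions from Claim~\ref{cl:beneficial} are checked: when $n = 2\ell + 1$ (WLOG the heavier side is $M_1$), verify that the job at position $\ell$ on $M_1$ is $\pi_1$-prior to the last job on $M_2$; when $n = 2\ell$, verify that the last jobs $j_1, j_2$ on $M_1, M_2$ satisfy $j_1 \prec_1 j_2$ and $j_2 \prec_2 j_1$. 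Return the first schedule passing the check, or ``no NE'' otherwise. Each run and each verification is $O(n)$, giving a total runtime of $O(n)$.

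For odd $n$, I claim that $s^{(1)}$ is always a NE, so every $G \in \G_{unit}^{P2}$ with odd $n$ admits one. After the first $2(\ell - 1)$ assignments of Algorithm~\ref{alg:AlgNext}, exactly three jobs remain. Under $M_1$-first, $M_1$ picks the $\pi_1$-top of these three, call it $A$, then $M_2$ picks the $\pi_2$-top of the remaining two. In the final schedule the picks of $M_1$ appear in $\pi_1$-priority order, so $A$ sits at position $\ell$ on $M_1$ (the subsequent half-round pick by $M_1$ has strictly lower $\pi_1$-priority). $M_2$'s last pick, which is its final job, is one of the two non-$A$ jobs, and each of those is $\pi_1$-below $A$. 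Hence the required rank condition $A \prec_1 j_2$ holds, and $s^{(1)}$ is a NE.

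For even $n$, NE need not exist, and the correctness statement to prove is: if any NE exists, at least one of $s^{(1)}, s^{(2)}$ passes its check. Under $M_1$-first, the last round leaves a pair $(X, Y)$ with $X \prec_1 Y$ (since $M_1$ takes the $\pi_1$-top), and $s^{(1)}$ is a NE iff additionally $Y \prec_2 X$; symmetrically for $s^{(2)}$. The main obstacle is to show that if both canonical runs fail, then no $\ell$/$\ell$-partition simultaneously satisfies the two rank inequalities and cost-stability. I plan to prove this via an exchange argument: starting from a hypothetical NE partition $(S_1^*, S_2^*)$, swap pairs of jobs while preserving the NE conditions so as to transform it into $s^{(1)}$ or $s^{(2)}$, contradicting the assumption that both canonical runs fail. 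The delicate step is handling the ``conflict rounds'' of Algorithm~\ref{alg:AlgNext} -- where the $\pi_1$-top and $\pi_2$-top of the remaining jobs coincide -- since $M_1$-first and $M_2$-first diverge precisely there, and the divergence propagates into who the last pair is; the argument must track these divergences and show that any NE-compatible resolution of every conflict round matches either the $M_1$-first or the $M_2$-first choice.
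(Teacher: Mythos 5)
Your odd-$n$ argument is correct and essentially coincides with the paper's Claim~\ref{cl:odd_n_NE}: after $2(\ell-1)$ assignments the three remaining jobs can be placed so that the non-last job of completion time $\ell$ on the heavy machine dominates, in $\pi_1$, the last job on the other machine, and Claim~\ref{cl:beneficial} then yields stability. The orientation issue at the final layer is also harmless for the reason you implicitly use: if the last pair $\{X,Y\}$ satisfies $X\prec_1 Y$ and $Y\prec_2 X$, there is no tie at that step and the algorithm places them correctly by itself.

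The even case, however, contains a genuine gap, and it is exactly where the whole difficulty of the theorem lives. Your algorithm's completeness rests on the claim that if any NE exists, then one of the two globally-fixed tie-breakings (always-$M_1$, always-$M_2$) produces a schedule passing the check. You do not prove this; you only announce a plan (``an exchange argument \ldots the delicate step is handling the conflict rounds''). That delicate step is not a technicality: a divergence at a conflict round in layer $j$ followed by another conflict round at layer $j+1$ creates runs that mix the two tie-breaking rules, and a priori the NE-compatible last layer could be reachable only by such a mixed run. What saves the approach is a structural fact you never establish: at every stage $k$ the family $\Gamma_k$ of reachable prefix-sets has size at most two, the two sets differ in a single job (the second job of $\hat{\pi}_1$ versus the second job of $\hat{\pi}_2$ at the most recent divergence), and the always-$M_1$ run tracks one of them while the always-$M_2$ run tracks the other. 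This is precisely the content of the paper's Claim~\ref{cl:two_options} (proved by an inductive six-case analysis), combined with Claim~\ref{cl:anyNE}, which shows that every NE's layer structure is realizable by \emph{some} run of Algorithm~\ref{alg:AlgNext} in the first place — another fact your write-up assumes rather than proves. Your two-canonical-runs formulation appears to be correct and is arguably a cleaner packaging than the paper's ``enumerate $\Gamma_{\ell-1}$'' algorithm, but as written the proof of its correctness for even $n$ is missing, and filling it in requires essentially the same induction over conflict rounds that the paper carries out.
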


\begin{proof}
Recall Algorithm~\ref{alg:AlgNext}, which ensures that no job can decrease its completion time in the generated schedule. We utilize this property to prove that if $n$ is odd, the game $G$ has a NE.


\begin{claim}
\label{cl:odd_n_NE}
If $n$ is odd, any game $G \in \G_{unit}^{P2}$ has a NE, and a NE can be computed in time $O(n)$.
\end{claim}

\begin{proof}
Let $n = 2\ell + 1$. Assume that Algorithm~\ref{alg:AlgNext} is executed. Consider the schedule after the first $2\ell$ jobs are assigned. Let $j_1, j_2$ be the jobs with completion time $\ell$ on $M_1$ and $M_2$, respectively.
If $j_1 \prec_1 j_2$, assigning $j_3$ to $M_1$ results in a NE $s$, with $M_1$ having a load of $\ell+1$ and $M_2$ having a load of $\ell$. Since $s$ is stable with respect to completion time, all jobs with completion time less than $\ell$ are stable. A migration of $j_1$ to $M_2$ would maintain its rank and completion time, while a migration of $j_2$ would increase its rank from $2\ell-\frac{1}{2}$ to $2\ell$. Thus, $s$ is a NE. Conversely, if $j_2 \prec_1 j_1$, assigning $j_3$ to $M_2$ produces a NE $s'$, with $M_2$ having a load of $\ell+1$ and $M_1$ a load of $\ell$. The same stability reasoning applies: a migration of $j_2$ to $M_1$ maintains its rank and completion time, while a migration of $j_1$ would increase its rank from $2\ell-\frac{1}{2}$ to $2\ell$. Therefore, $s'$ is a NE.
\end{proof}


On the other hand, if $n$ is even, the existence of a NE is not guaranteed. We begin by showing that if a NE does exist, then Algorithm~\ref{alg:AlgNext} has the capability to generate it, or at least a schedule closely resembling it with identical {\em layers}, which we define as follows:

Let $s$ be an assignment of $J$. Denote the jobs assigned to machine $M_i$ by $j_{i1}, \dots, j_{ix}$ according to their order in $\pi_i$, with $x$ representing the total number of jobs assigned to $M_i$. For any $k \geq 1$, the $k^{th}$ layer of $s$ is denoted by $H_k^{(s)}$ and consists of the two jobs $\{j_{1k}, j_{2k}\}$.

The following claim establishes an important property of stable assignments.

\begin{claim}
\label{cl:anyNE}
Any stable assignment $s$ has a run of Algorithm~\ref{alg:AlgNext} that produces a schedule with the same layers as in $s$, and in which the jobs in the last layer are processed on the same machines as in $s$.
\end{claim}

\begin{proof}
Let $s$ be a NE, and recall that $n = 2\ell$. We observe that $s$ must be balanced, meaning that the number of layers is $\ell$. Consider the $k^{th}$ layer $H_k^{(s)}$ for some $1 \leq k \leq \ell$. For any $t > k$, clearly $j_{1k} \prec_1 j_{1t}$, and if $j_{2t} \prec_1 j_{1k}$ then $j_{2t}$ has a beneficial migration, in contradiction to the stability of $s$. Therefore, among the jobs that are not assigned in layers $1, \dots, k-1$, either $j_{1k}$ is first in $\pi_1$ or $j_{1k}$ is second and $j_{2k}$ is first. Similarly, among the jobs that are not assigned in layers $1, \dots, k-1$, either $j_{2k}$ is first in $\pi_2$ or $j_{2k}$ is second and $j_{1k}$ is first. In any of these options, there exists a run of Algorithm~\ref{alg:AlgNext} that creates a schedule in which layer $k$ consists of $j_{1k}$ and $j_{2k}$.

Let $j_{1\ell}$ and $j_{2\ell}$ denote the jobs in the last layer $\ell$. Since $s$ is a NE, it follows that $j_{1\ell} \prec_1 j_{2\ell}$ and $j_{2\ell} \prec_2 j_{1\ell}$. Therefore, there is an execution of the algorithm that assigns $j_{1\ell}$ and $j_{2\ell}$ on machines $1$ and $2$, respectively, as in $s$.
\end{proof}


We now establish the condition for a schedule produced by Algorithm~\ref{alg:AlgNext} to be stable.

\begin{claim}
\label{cl:stable_schedule}
Let $n = 2\ell$, and let $s$ be an assignment produced by Algorithm~\ref{alg:AlgNext}. Let $j_{1\ell}$ and $j_{2\ell}$ be the last jobs on $M_1$ and $M_2$ in $s$, respectively. Schedule $s$ is a NE iff $j_{1\ell} \prec_1 j_{2\ell}$ and $j_{2\ell} \prec_2 j_{1\ell}$.
\end{claim}

\begin{proof}
Assume first that $s$ is a NE. Thus, no job, and in particular $j_{1\ell}$ and $j_{2\ell}$, can benefit from deviation. Therefore, $j_{1\ell} \prec_1 j_{2\ell}$ and $j_{2\ell} \prec_2 j_{1\ell}$.

For the other direction, assume that $j_{1\ell} \prec_1 j_{2\ell}$ and $j_{2\ell} \prec_2 j_{1\ell}$. By Claim~\ref{cl:beneficial}, since $s$ is stable against completion time reducing deviations, only jobs $j_{1\ell}$ and $j_{2\ell}$ can potentially benefit from deviation. However, since $j_{1\ell} \prec_1 j_{2\ell}$ and $j_{2\ell} \prec_2 j_{1\ell}$, neither of them can benefit from deviation. Therefore, $s$ is a NE.
\end{proof}

From Claims~\ref{cl:anyNE} and~\ref{cl:stable_schedule}, we infer that by examining the last layers of all the potential schedules generated by Algorithm~\ref{alg:AlgNext}, we can determine the existence of a NE and even produce one if it exists. We show that at most two potential last layers exist, and we can identify them efficiently in linear time.

Recall that $n=2\ell$. For $1 \leq k \leq \ell$, let $\Gamma_k$ be the set of sets of jobs such that $S_k \in \Gamma_k$ if and only if $|S_k|=2k$ and there exists a run of Algorithm~\ref{alg:AlgNext} in which the jobs of $S_k$ are assigned on the first $k$ layers.

We show that for $1 \leq k \leq \ell$, $|\Gamma_k| \leq 2$. Moreover, if $|\Gamma_k|=2$ then the two sets are identical up to a single job.
For $1 \leq k \leq \ell$, denote by $\hat{\pi_i}^{(k)}$ the unassigned jobs in $\pi_i$, at the beginning of the $(2k-1)^{th}$ iteration of Algorithm~\ref{alg:AlgNext}, i.e., after assigning jobs in the first $k-1$ layers, and before assigning jobs in the $k^{th}$ layer. Note that $\hat{\pi_i}^{(1)} = \pi_i$, and that $|\hat{\pi_i}^{(k)}| = n - 2 \cdot (k-1)$. Denote by $\hat{\pi_i}^{(k)}(x)$ the job in the $x^{th}$ place in $\hat{\pi_i}^{(k)}$.

\begin{claim}
\label{cl:two_options}
For all $1 \leq k \leq \ell$, $|\Gamma_k| \le 2$. If $S_k^1, S_k^2 \in \Gamma_k$, then
\begin{enumerate}
    \item $|S_k^1 \cap S_k^2| = 2k-1$, and
    \item $\hat{\pi_1}^{(k)}(1) = \hat{\pi_2}^{(k)}(1)$, $\hat{\pi_1}^{(k)}(2) \neq \hat{\pi_2}^{(k)}(2)$, and
    \item denote $\hat{\pi_1}^{(k)}(2) = a_k$ and $\hat{\pi_2}^{(k)}(2) = b_k$, then $S_k^1 \setminus S_k^2 = \{a_k\}$, $S_k^2 \setminus S_k^1 = \{b_k\}$, w.l.o.g.
\end{enumerate}
\end{claim}


\begin{proof}
The proof is by induction on $k$.

Base Case ($k = 1$): Recall that $\hat{\pi_i}^{(1)} = \pi_i$. If $\pi_1(1) \neq \pi_2(1)$, then $\Gamma_1$ includes a single set $S_1 = \{\pi_1(1), \pi_2(1)\}$. If $\pi_1(1) = \pi_2(1)$ (say, job $x$), then the algorithm schedules job $x$ on one of the machines. If $\pi_1(2) = \pi_2(2)$ (say, job $y$), then again $\Gamma_1$ includes a single set $S_1 = \{x, y\}$. Otherwise, $\pi_1(2) \neq \pi_2(2)$. Denote $\pi_1(2),\ \pi_2(2)$ by $a_1,\ b_1$, respectively. Now, $\Gamma_1=\{S_1^1,S_1^2\}$, where $S_1^1 = \{x, a_1\}$ and $S_1^2 = \{x, b_1\}$. Note that $|S_1^1 \cap S_1^2| = 1$, $S_1^1 \setminus S_1^2 = \{a_1\}$ and $S_1^2 \setminus S_1^1 = \{b_1\}$.

Induction Step: If $|\Gamma_{k-1}|=1$, then the proof for $\Gamma_k$ is similar to the base case: Let $\Gamma_{k-1} = \{S_{k-1}\}$. If $\hat{\pi_1}^{(k)}(1) \neq \hat{\pi_2}^{(k)}(1)$ then $\Gamma_k$ includes a single set $S_k = S_{k-1} \cup \{\hat{\pi_1}^{(k)}(1), \hat{\pi_2}^{(k)}(1)\}$. Now assume $\hat{\pi_1}^{(k)}(1) = \hat{\pi_2}^{(k)}(1)$ (say, job $x$). If $\hat{\pi_1}^{(k)}(2) = \hat{\pi_2}^{(k)}(2)$ (say, job $y$), then again $\Gamma_k$ includes a single set $S_k = S_{k-1} \cup \{x, y\}$. Otherwise, $\hat{\pi_1}^{(k)}(2) \neq \hat{\pi_2}^{(k)}(2)$. Denote $\hat{\pi_1}^{(k)}(2)$ and $\hat{\pi_2}^{(k)}(2)$ by $a_k$ and $b_k$, respectively. Now, $\Gamma_k=\{S_k^1,S_k^2\}$, where $S_k^1 = S_{k-1} \cup \{x, a_k\}$ and $S_k^2 = S_{k-1} \cup \{x, b_k\}$. Note that $|S_k^1 \cap S_k^2| = 2k-1$, $S_k^1 \setminus S_k^2 = \{a_k\}$ and  $S_k^2 \setminus S_k^1 = \{b_k\}$. Thus, the claim holds.


We proceed by considering the case where $|\Gamma_{k-1}| = 2$. According to the induction hypothesis, $|S_{k-1}^1 \cap S_{k-1}^2| = 2 \cdot (k-1)-1$, $\hat{\pi_1}^{(k-1)}(1) = \hat{\pi_2}^{(k-1)}(1)$, $\hat{\pi_1}^{(k-1)}(2) \neq \hat{\pi_2}^{(k-1)}(2)$, and if $\hat{\pi_1}^{(k-1)}(2) = a_{k-1}$ and $\hat{\pi_2}^{(k-1)}(2) = b_{k-1}$, then $S_{k-1}^1 \setminus S_{k-1}^2 = \{a_{k-1}\}$, $S_{k-1}^2 \setminus S_{k-1}^1 = \{b_{k-1}\}$, w.l.o.g. We show that $|\Gamma_k| \le 2$.

Consider the different options for $\hat{\pi_1}^{(k)}, \hat{\pi_2}^{(k)}$:

\begin{enumerate}

    \item If $\hat{\pi_1}^{(k)}(1) = a_k$ and $\hat{\pi_2}^{(k)}(1) =b_{k-1}$, then $\Gamma_k$ includes a single set $S_k = S_{k-1}^1 \cup \{b_{k-1}, a_k\}$.

    \item If $\hat{\pi_1}^{(k)} = \langle b_{k-1}, x, \dots \rangle$ and $\hat{\pi_2}^{(k)} = \langle b_{k-1}, x, \dots \rangle$, then $\Gamma_k$ includes a single set $S_k = S_{k-1}^1 \cup \{b_{k-1}, x\}$.

    \item If $\hat{\pi_1}^{(k)} = \langle b_{k-1}, a_k, \dots \rangle$ and $\hat{\pi_2}^{(k)} = \langle b_{k-1}, b_k, \dots \rangle$, then $\Gamma_k=\{S_k^1,S_k^2\}$, where $S_k^1 = S_{k-1}^1 \cup \{b_{k-1}, a_k\}$, and $S_k^{2} = S_{k-1}^1 \cup \{b_{k-1}, b_k\}$.


    \item If $\hat{\pi_1}^{(k)}(1) = a_{k-1}$ and $\hat{\pi_2}^{(k)}(1) = b_k$, then $\Gamma_k$ includes a single set $S_k = S_{k-1}^2 \cup \{a_{k-1}, b_k\}$.

    \item If $\hat{\pi_1}^{(k)} = \langle a_{k-1}, x, \dots \rangle$ and $\hat{\pi_2}^{(k)} = \langle a_{k-1}, x, \dots \rangle$, then $\Gamma_k$ includes a single set $S_k = S_{k-1}^2 \cup \{a_{k-1}, x\} = S_k^2$.
    
    \item If $\hat{\pi_1}^{(k)} = \langle a_{k-1}, a_k, \dots \rangle$ and $\hat{\pi_2}^{(k)} = \langle a_{k-1}, b_k, \dots \rangle$, then $\Gamma_k=\{S_k^1,S_k^2\}$, where $S_k^1 = S_{k-1}^2 \cup \{a_{k-1}, a_k\}$, and $S_k^{2} = S_{k-1}^2 \cup \{a_{k-1}, b_k\}$.
\end{enumerate}

To conclude, considering all cases, we observe that $|\Gamma_k| \le 2$, and if $S_k^1, S_k^2 \in \Gamma_k$, then $|S_k^1 \cap S_k^2| = 2k-1$, $S_k^1 \setminus S_k^2 = \{a_k\}$ and $S_k^2 \setminus S_k^1 = \{b_k\}$. Thus, the claim holds.
\end{proof}

If $n = 2\ell$, then every schedule $s$ produced by Algorithm~\ref{alg:AlgNext} has exactly $\ell$ layers. Notice that for any $k = 1, \dots, \ell$, the $k^{th}$ layer is $H_k^{(s)} = S_k \setminus S_{k-1}$. Specifically, the last layer is $H_\ell^{(s)} = S_\ell \setminus S_{\ell-1}$. According to Claim~\ref{cl:two_options}, there are at most $2$ options for $S_{\ell-1}$, and since $S_\ell = J$, there are at most $2$ options for the last layer $H_\ell^{(s)}$.

Based on the above observations, we present Algorithm~\ref{alg:2MachinesUnitJobs}.

\begin{algorithm}[H]
\caption{- Determines existence of a NE in $G \in \G^{P2}_{unit}$ and produces one if it exists} 
\label{alg:2MachinesUnitJobs}
\begin{algorithmic}[1]
\IF{$n$ is odd}
\STATE Use Algorithm~\ref{alg:AlgNext} to produce a schedule $s$ and assign the first $n - 1$ jobs.
\STATE Assign the last job in $s$ according to Claim~\ref{cl:odd_n_NE}.
\STATE Return $s$
\ENDIF
\STATE Determine all different options for $S_\ell, S_{\ell-1}$, where $n = 2\ell$.
\FOR{each option of $S_{\ell-1}$}
\STATE Compute $H_\ell^{(s)} = S_\ell \setminus S_{\ell-1}$, where $s$ is a schedule corresponding to $S_{\ell-1}$.
\IF{$H_\ell^{(s)}$ satisfies the condition in Claim~\ref{cl:stable_schedule}}
\STATE Return $s$
\ENDIF
\ENDFOR
\STATE Return "no NE exists"
\end{algorithmic}
\end{algorithm}

The following claims establishes the statement of Theorem~\ref{thm:2MachinesUnitJobs}.

\begin{claim}
\label{cl:AlgUnitCorrect}
Algorithm~\ref{alg:2MachinesUnitJobs} returns a schedule $s$ iff a NE exists and $s$ is NE.
\end{claim}

\begin{proof}
Let $s$ be the schedule returned by Algorithm~\ref{alg:2MachinesUnitJobs}. If $s$ is returned in Step $4$ of the algorithm, then by Claim~\ref{cl:odd_n_NE} it is a NE. Otherwise, $s$ is returned in Step $10$ of the algorithm, and by Claim~\ref{cl:stable_schedule} it is also a NE.

For the other direction, assume a NE exists. If $n$ is odd, then by Claim~\ref{cl:odd_n_NE}, a NE must exist and it is returned by the algorithm in Step $4$. Otherwise, $n = 2\ell$. Let $s$ be a NE. By Claim~\ref{cl:anyNE}, there is en execution of Algorithm~\ref{alg:AlgNext} that produces a schedule, $s'$, with the same layers as $s$, and in which the jobs in the last layer $H_\ell^{(s)} = H_\ell^{(s')}$ are processed on the same machines as in $s$. The corresponding sets $S_{\ell-1}, S_\ell$ are considered in Step $6$. Since $s$ is a NE, $H_\ell^{(s')}$ satisfies the condition in Claim~\ref{cl:stable_schedule}, therefore $s'$ is a NE and is returned by the algorithm in Step $10$.
\end{proof}

\begin{claim}
\label{cl:AlgUnitRunTime}
Algorithm~\ref{alg:2MachinesUnitJobs} has a runtime of $O(n)$.
\end{claim}

\begin{proof}
Algorithm~\ref{alg:AlgNext} runs in time $O(n\log m)$ which is $O(n)$ for $m=2$. Therefore, the complexity of Algorithm~\ref{alg:2MachinesUnitJobs} depends on the number of options for $S_{\ell-1}$ and $S_\ell$ considered in Step $6$. From Claim~\ref{cl:two_options} there are at most two such options, that are identified in time $O(n)$.
\end{proof}

The statement of Theorem~\ref{thm:2MachinesUnitJobs} follows from Claims~\ref{cl:AlgUnitCorrect} and~\ref{cl:AlgUnitRunTime}.
\end{proof}

\section{Related Machines - Equilibrium Existence and Computation}
\label{sec:related}

In this section, we consider the class $\G_{unit}^{Q2}$, in which unit-jobs are assigned to two related machines. W.l.o.g., the two machines, $M_1$ and $M_2$, have rates $r_1 = 1$ and $r_2 = r \leq 1$. Thus, $M_1$ is denoted {\em the fast machine}, while $M_2$ is denoted {\em the slow machine}.

In the following analysis, we distinguish between rational $r$, that is, $r = \frac{a}{b}$ for some integers $a \leq b$, and irrational $r$. Our analysis is based on analyzing the set of possible outputs of Algorithm~\ref{alg:AlgNextRelated}, presented in Section~\ref{sec:alg2}. Recall that the algorithm assigns the jobs greedily according to their completion time.
The following observations are valid for both global and machine-dependent priority lists.

\begin{theorem}
If $r$ is an irrational number, then $G$ has a NE, and it can be calculated in linear time.
\end{theorem}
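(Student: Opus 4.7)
The plan is to exploit irrationality to show that the output of Algorithm~\ref{alg:AlgNextRelated} is automatically a NE, so no subtle selection among tie-breaking choices is required.

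First I would argue that Algorithm~\ref{alg:AlgNextRelated} produces a uniquely defined schedule $s$. The algorithm compares $(L_1(s)+1)/1$ with $(L_2(s)+1)/r$. Equality would force $r(L_1(s)+1) = L_2(s)+1$, an equation between an irrational number and a positive integer, which is impossible. Hence at every step exactly one machine minimizes the expression, and no tie-breaking is invoked. As a byproduct, at every intermediate schedule (and in particular the final one) we have $L_1(s) \ne L_2(s)/r$, because $L_1(s)$ is a non-negative integer and $L_2(s)/r$ is either zero or irrational.

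Next I would check $s$ against the three stability conditions of Theorem~\ref{thm:NE_related}. Condition~2 requires $L_1(s) = L_2(s)/r$, which by the observation above cannot hold. So $s$ falls under Condition~1 or Condition~3. In Condition~3 we additionally need that no job on $M_1$ shares a completion time with the last job $j_2$ on $M_2$. But any job on $M_1$ has an integer completion time, whereas $C_{j_2}(s) = L_2(s)/r$ is irrational (since $L_2(s) \ge 1$ in this case), so the two values cannot coincide; Condition~3 is therefore automatically satisfied whenever $L_1(s) > L_2(s)/r$. In the remaining case, $L_1(s) < L_2(s)/r$, Condition~1 holds directly. Consequently $s$ is a NE.

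Finally, for the running time, Algorithm~\ref{alg:AlgNextRelated} performs $n$ iterations, each doing $O(\log m) = O(1)$ work since $m=2$, so the total time is $O(n)$. The only subtle point—and the one I would double-check carefully—is that the greedy output automatically lands in one of the ``strict inequality'' regimes of Theorem~\ref{thm:NE_related}; this is where irrationality is doing all the work, by simultaneously ruling out ties in the algorithm's comparisons and ruling out equal completion times across the two machines.
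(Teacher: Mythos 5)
Your proof is correct and takes essentially the same route as the paper: run Algorithm~\ref{alg:AlgNextRelated} and use the fact that integer completion times on $M_1$ can never equal the irrational completion times $k/r$ on $M_2$, so no two jobs share a completion time and the greedy output is automatically stable. The paper concludes directly via Claim~\ref{cl:beneficial} rather than casing on the conditions of Theorem~\ref{thm:NE_related}, but this is the same underlying criterion, and your added observation that the algorithm's comparisons never tie is a harmless bonus.
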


\begin{proof}
Let $s$ be a schedule produced by Algorithm~\ref{alg:AlgNextRelated}. Since $r$ is irrational, there are no two jobs with the same completion time in $s$. Thus, by Claim~\ref{cl:beneficial} no job has a beneficial migration, and $s$ is a NE.
\end{proof}

We turn to consider the case that $r$ is a rational number, that is, $r = \frac{a}{b}$ for some integers $a \leq b$.
Let $s$ be a schedule produced by Algorithm~\ref{alg:AlgNextRelated}. Let $n = \ell \cdot (a+b) + c$ for $0 \leq c < a + b$. 
The algorithm assigns the jobs in blocks of $a+b$ jobs. In every block, there are $b$ jobs on the fast machine and $a$ jobs on the slow machine (see Figure~\ref{fig:blocks}). Specifically, for $k \geq 1$, the $k^{th}$ block is $B_k(s) = \{j ~|~ (k - 1) \cdot b < C_j(s) \leq k \cdot b \}$. When $s$ is clear from the context, we omit it. We denote the jobs in $B_k$ by non-decreasing order of their completion time $j_1^{(k)}, j_2^{(k)}, \dots, j_{a+b}^{(k)}$.

Note that the last two jobs in a block, $(j_{a+b-1}^{(k)}, j_{a+b}^{(k)})$, have the same completion time, and that these are the only pairs of jobs that share a completion time.

\begin{figure}[h]
  \centering
  \includegraphics[width=1\textwidth]{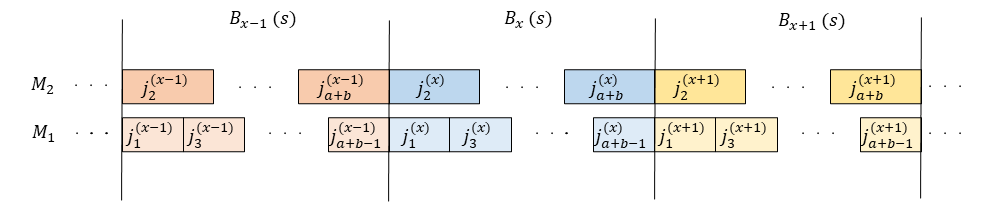}
  \caption{Blocks in schedule $s$ on two related machines having rates $1$ and $r=a/b$.}
  \label{fig:blocks}
\end{figure}

In the sequel we show that for games with a global priority list, having $c \neq 0$ is a necessary and sufficient condition for having a NE, while for games with machine-dependent priority lists, games for which $c=0$ may also have a NE assignment, and their analysis is more involved.



We start by showing that, without any assumptions regarding the priority lists, having $c \ne 0$ is a sufficient condition to ensure that a NE exists and can be produced efficiently. Recall that Algorithm~\ref{alg:AlgNextRelated} produces a schedule stable against cost-reducing deviations. We utilize this property to prove the following claim.

\begin{claim}
\label{cl:c=!0}
For every $G \in \G_{unit}^{Q2}$, if $c \neq 0$, then $G$ has a NE, and a NE can be computed in time $O(n)$.
\end{claim}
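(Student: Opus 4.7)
The plan is to execute Algorithm~\ref{alg:AlgNextRelated} with the tie-breaking rule ``prefer $M_1$'' and to verify via Theorem~\ref{thm:NE_related} that the resulting schedule $s$ is a NE. Since $m=2$, the algorithm terminates in $O(n)$, which matches the target complexity. I will assume without loss of generality that $\gcd(a,b)=1$, since we may always express $r$ in lowest terms.

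First I would analyze the resulting loads. Let $c_1,c_2$ denote the numbers of partial-block jobs placed on $M_1$ and $M_2$, so $c_1+c_2=c$, $L_1(s)=\ell b+c_1$, and $L_2(s)=\ell a+c_2$. A direct computation gives $L_1(s)=L_2(s)/r$ iff $c_1 a=c_2 b$, which under $\gcd(a,b)=1$ together with $0<c_1+c_2<a+b$ forces $c_1=c_2=0$, contradicting $c\neq 0$. Hence $L_1(s)\neq L_2(s)/r$, so only conditions~1 and~3 of Theorem~\ref{thm:NE_related} can apply to $s$.

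If $L_1(s)<L_2(s)/r$, condition~1 holds immediately and $s$ is a NE. Otherwise $L_1(s)>L_2(s)/r$, and condition~3 requires that whenever a job $j'$ on $M_1$ satisfies $C_{j'}(s)=C_{j_2}(s)$, then $j'\prec_1 j_2$. By the block structure of $s$, such a $j'$ can exist only when $c_2=0$: otherwise $C_{j_2}(s)=\ell b+c_2 b/a$ being integral would require $a\mid c_2$, so $c_2\geq a$, forcing $c_1\geq c_2 b/a\geq b$ and contradicting $c_1+c_2<a+b$. When $c_2=0$, the matching $j'$ and $j_2$ are precisely the last $M_1$ and $M_2$ jobs of the $\ell$-th full block, which Algorithm~\ref{alg:AlgNextRelated} places during its end-of-block tie step. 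Under the prefer-$M_1$ rule, $j'$ is assigned first as the first unassigned job in $\pi_1$ at that step; since $j_2$ is still unassigned at that moment, $j_2$ must lie behind $j'$ in $\pi_1$, giving $j'\prec_1 j_2$ as required.

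The main obstacle is the algebraic step that rules out a within-partial-block matching $j'$ when $c_2>0$; once that is in hand, only a single end-of-block tie needs to be broken in favor of $M_1$, and the rest is routine verification of the cases of Theorem~\ref{thm:NE_related}. The runtime bound follows since the tie-breaking rule is $O(1)$ per step and Algorithm~\ref{alg:AlgNextRelated} already runs in $O(n)$ for $m=2$.
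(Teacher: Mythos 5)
Your proposal is correct and follows essentially the same route as the paper: run Algorithm~\ref{alg:AlgNextRelated} with a fixed tie-break and verify the conditions of Theorem~\ref{thm:NE_related} on the resulting schedule. The only difference is organizational — the paper splits on $c \geq \ceil{b/a}$ versus $0 < c < \ceil{b/a}$, whereas you split on the load comparison and on $c_2 = 0$ versus $c_2 > 0$ via a divisibility argument — and your uniform prefer-$M_1$ rule handles the end-of-block tie slightly more carefully than the paper's case analysis.
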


\begin{proof}
Let $n = \ell \cdot (a+b) + c$, with $c \neq 0$. If $c \geq \ceil{\frac{b}{a}}$, then let $s$ be a schedule produced by Algorithm~\ref{alg:AlgNextRelated}, and let $j_1$ and $j_2$ be the last jobs processed on $M_1$ and $M_2$, respectively. Because $c \geq \ceil{\frac{b}{a}}$, both $j_1$ and $j_2$ are in $B_{\ell+1}$, and neither of them has a job on the opposite machine with the same completion time. Thus, by Theorem~\ref{thm:NE_related}, $s$ is a NE.

If $0 < c < \ceil{\frac{b}{a}}$, assume that Algorithm~\ref{alg:AlgNextRelated} is executed. After the first $\ell \cdot (a+b) - 2$ jobs are assigned, we enter the $({\ell \cdot (a+b) - 1})^{th}$ iteration. Let $j_1$ be the first unassigned job in $\pi_1$. In Step $3$ of the algorithm, machine $i^*$ might be either $M_1$ or $M_2$. Suppose the tiebreaker resolves in favor of assigning $j_1$ to $M_1$. Let $s$ be the produced schedule, and let $j_2$ be the last job processed on $M_2$. Since $0 < c < \ceil{\frac{b}{a}}$, it follows that $L_1(s) > L_2(s) / r$. Note that $C_{j_1}(s) = C_{j_2}(s)$ and $j_1 \prec_1 j_2$. Therefore, by Theorem~\ref{thm:NE_related} $s$ is a NE.
\end{proof}

\subsection{Global Priority List}

Assume that both machines have the same priority list $\pi = \langle 1,2,\ldots,n \rangle$.

\begin{theorem}
Let $G$ be a game with $m = 2$ related machines, $r_1 = 1$, $r_2 = r = \frac{a}{b}$ for two integers $a \leq b$, a global priority list $\pi$ and $n = \ell \cdot (a+b) + c$ unit-jobs for $0 \leq c < a + b$. $G$ has a NE iff $c \neq 0$.
\end{theorem}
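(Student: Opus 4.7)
The forward direction ($c \neq 0 \Rightarrow$ NE exists) is immediate from Claim~\ref{cl:c=!0}, whose construction assumed nothing about the priority lists and therefore applies verbatim. The substantive work lies in the converse: if $c = 0$ and all machines share one priority list $\pi$, then $G$ admits no NE.

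I would argue by contradiction. Assume $s$ is a NE with $c=0$, so the optimum makespan is $\ell b$, attained only by the balanced assignment $L_1 = \ell b$, $L_2 = \ell a$. The first step is to show that $s$ itself must be balanced. Suppose instead that $L_1(s) > \ell b$ (the case $L_2(s) > \ell a$ is symmetric). Then $L_2(s)/r \le (\ell a - 1)/r < \ell b < L_1(s)$, so the last job $j$ on $M_1$ uniquely attains the global maximum completion time $L_1(s)$ in $s$. Migrating $j$ to $M_2$, where it is inserted by $\pi$ at some slot, yields new completion time at most $(L_2(s)+1)/r \le \ell b < L_1(s)$, while the makespan of the new schedule is bounded by $L_1(s) - 1$ (jobs on $M_1$ are unchanged with times $\le L_1(s)-1$; jobs on $M_2$ increase by at most $1/r$, remaining $\le (L_2(s)+1)/r \le \ell b$). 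Hence at least one other job still has completion time strictly greater than $j$'s new time, so $j$'s rank strictly improves, contradicting the NE property.

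Given balance, the last jobs $j_1 \in M_1$ and $j_2 \in M_2$ share completion time exactly $\ell b$. With the global priority $\pi$ we have, WLOG, $j_1 \prec j_2$. I would then examine the migration of $j_1$ to $M_2$. Letting $p$ denote the number of jobs on $M_2$ with priority higher than $j_1$, after the migration $j_1$ occupies slot $p+1$ on $M_2$ with new completion time $(p+1)b/a \le \ell b$, while $j_2$ shifts to slot $\ell a + 1$ with new time $\ell b + b/a > \ell b$. Thus $j_1$'s cost does not increase, and its rank strictly improves: the tie with $j_2$ that previously gave both rank $n - 1/2$ is broken in $j_1$'s favor because $j_2$'s new completion time strictly exceeds $j_1$'s. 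This contradicts $s$ being a NE.

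The principal obstacle lies in the balanced step, specifically the boundary case $p+1 = \ell a$ (equivalently, $j_1$'s global priority rank equals $n-1$), in which $j_1$'s cost stays exactly at $\ell b$ after the migration; here a pure cost argument fails and the contradiction must rely entirely on the strict rank improvement generated by $j_2$'s shift to $\ell b + b/a$. A milder subtlety arises in the load-balance step: because rank is the primary objective, a cost-reducing migration is beneficial only if it also reduces rank, so one must verify that the deviating last job was the unique global maximum -- a property that follows precisely from $L_2(s)/r < L_1(s)$ whenever $L_1(s) > \ell b$.
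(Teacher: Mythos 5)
Your proof is correct and takes essentially the same route as the paper's: the forward direction is delegated to Claim~\ref{cl:c=!0}, and for the converse you first show a NE must be balanced and then let the globally higher-priority of the two tied last jobs migrate to precede the other, which the paper obtains more tersely by combining the balance observation with Theorem~\ref{thm:NE_related} (in essence Claim~\ref{cl:beneficial}). The only cosmetic slip is in the unbalance step, where the migrating job's new completion time can \emph{equal} (rather than lie strictly below) the new maximum $L_1(s)-1$; but a tie still drops its rank from $n$ to at most $n-\tfrac12$, so the contradiction stands.
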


\begin{proof}
If $c \ne 0$, then by Claim~\ref{cl:c=!0} a NE exists and can be produced in linear time.

For the other direction, let $s$ be a NE schedule, and let $j_1$ and $j_2$ denote the last jobs processed on $M_1$ and $M_2$, respectively. Assume by contradiction that $c=0$. Schedule $s$ must be balanced, since otherwise, the last job can benefit from migration. Therefore, $L_1(s) = L_2(s) / r$, and by Theorem~\ref{thm:NE_related}, since $\pi$ is global, $s$ is not a NE.
\end{proof}

\subsection{Machine-Dependent Priority Lists}

We turn to discuss instances in which $\pi_1 \neq \pi_2$. We present a linear time algorithm for deciding whether a given game has a NE, and producing a NE if one exists.

As a warm-up, we demonstrate that a tie-breaking  decision during a run of Algorithm~\ref{alg:AlgNextRelated} may by crucial for the stability of the resulting schedule. This example highlights the need to trace multiple possible outcomes of the algorithm.

\begin{example}
{\em
Consider the game $G$ with $m = 2$ related machines, where $r = \frac{2}{3}$ and $J = \{1, 2, \dots, 10\}$. The priority lists are:
\[\pi_2 = \langle 1, 2, 3, 4, 10, 9, 6, 7, 8, 5 \rangle\]
\[\pi_1 = \langle 1, 2, 3, 4, 5, 6, 7, 8, 9, 10 \rangle\]
Now, consider the scheduling process of Algorithm~\ref{alg:AlgNextRelated}. After assigning the first three jobs, a tie-break occurs as job $4$ is the first unassigned job in both $\pi_1$ and $\pi_2$, with its possible completion time on both machines being equal. At this point, the algorithm's outcome may diverge, as job $4$ can either be assigned to $M_1$ or $M_2$.
If the algorithm assigns job $4$ to $M_2$, then the resulting schedule is a NE (see Figure~\ref{fig:tie_break_1}), as the last jobs to complete, jobs $8$ and $9$, are each prioritized on their respective machines over the other. However, if job $4$ is assigned to $M_1$, the last jobs to complete are jobs $7$ and $8$ (see Figure~\ref{fig:tie_break_2} ($b_1$) and ($b_2$)). Since job $7$ is prioritized over job $8$ on both priority lists, an endless sequence of best response moves arises between them, preventing convergence to a NE.
}

\begin{figure}[h]
  \centering
  \includegraphics[width=1\textwidth]{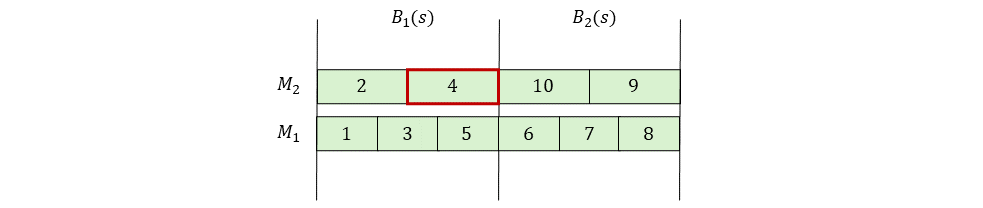}
  \caption{The NE schedule produced by Algorithm~\ref{alg:AlgNextRelated} in case that job $4$ is assigned to $M_2$. }
  \label{fig:tie_break_1}
\end{figure}

\begin{figure}[h]
  \centering
  \includegraphics[width=1\textwidth]{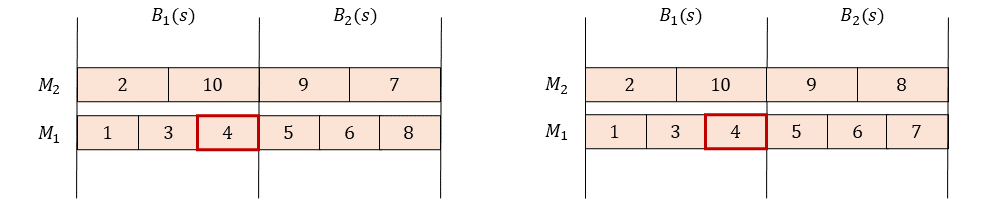}
  \caption{The unstable schedules Algorithm~\ref{alg:AlgNextRelated} produces when job $4$ is assigned to $M_1$. In both schedules, job $7$ can reduce its rank from $9.5$ to $9$ by deviating to the other machine and precede job $8$.}
  \label{fig:tie_break_2}
\end{figure}


\end{example}

By manipulating the priority lists a bit, we can build an instance in which an assignment of job $4$ on the fast machine yields a NE, while its assignment on the slow machine leads to an unstable schedule. Thus, no tie-breaking rule can be used to promote stability. On the other hand, as we show, even if the algorithm faces several tie-breaking decisions during its run, there are only two possible sets of jobs that may be processed last in any schedule produced by the algorithm.

\begin{theorem}
\label{thm:2MachinesUnitJobsRelated}
For every $G \in \G_{unit}^{Q2}$ with $r_1 = 1$ and $r_2 = r = \frac{a}{b}$ for some $a \leq b$, it is possible to decide in linear time whether $G$ has a NE, and to produce a NE if one exists.
\end{theorem}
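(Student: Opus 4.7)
The plan is to mirror the structure of the proof of Theorem~\ref{thm:2MachinesUnitJobs}, replacing the pair-\emph{layer} structure used for identical machines with the \emph{block} structure $B_k$ introduced for related machines ($b$ jobs on $M_1$ and $a$ jobs on $M_2$ completing inside $((k-1)b, kb]$). Since Claim~\ref{cl:c=!0} already disposes of the case $c \neq 0$ in linear time, the entire burden of the theorem is the balanced case $n = \ell(a+b)$, $c=0$. A key preparatory observation, assuming $\gcd(a,b)=1$ without loss of generality, is that the $a+b$ completion times in any block are all distinct except at the terminal slot at time $kb$, so within each block Algorithm~\ref{alg:AlgNextRelated} is fully deterministic and its only tie-breaking choice occurs at the end of the block.

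First, I would show that any NE in the $c=0$ regime must be \emph{block-balanced}, meaning $L_1(s)=\ell b$ and $L_2(s)=\ell a$. If $L_1(s) > L_2(s)/r$, a short computation shows that the last job on $M_1$ strictly reduces its completion time by migrating to $M_2$, contradicting NE; the case $L_1(s) < L_2(s)/r$ is symmetric. Once the schedule has $\ell$ complete blocks, I would establish the analog of Claim~\ref{cl:anyNE}: any NE $s$ can be reproduced by some run of Algorithm~\ref{alg:AlgNextRelated} with appropriately chosen end-of-block tie-breakings. The inductive step uses the fact that in a NE the jobs on each machine inside block $k$ must be the highest-priority unassigned jobs in the corresponding priority list (else some job in the block has a cost-reducing deviation), which is exactly the algorithm's choice.

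The main technical obstacle is to establish the structural analog of Claim~\ref{cl:two_options} for blocks. Letting $\Gamma_k$ denote the set of possible job-sets that can fill the first $k$ blocks across all runs of Algorithm~\ref{alg:AlgNextRelated}, I would prove by induction that $|\Gamma_k|\leq 2$, and when $|\Gamma_k|=2$ the two sets differ in exactly one job, characterized by the first two positions of $\hat{\pi_1}^{(k)}$ and $\hat{\pi_2}^{(k)}$ at the end-of-block tie. The hard part will be verifying the case analysis in the related-machine setting: within each block, $b-1$ jobs are consumed deterministically from $\pi_1$ and $a-1$ jobs from $\pi_2$ before the tied final slot, and one must show that if the two options in $\Gamma_{k-1}$ differ in a single job, they still induce identical deterministic prefixes for block $k$ (i.e., the differing job does not appear in the prefix), so the six-case analysis of Claim~\ref{cl:two_options} applies only at the tied slot and preserves the single-job-difference invariant.

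Once the analog of Claim~\ref{cl:two_options} is in hand, the algorithm enumerates the at most two elements of $\Gamma_{\ell-1}$, completes each by executing block $\ell$, and tests each of the at most four resulting candidate schedules for stability via condition~2 of Theorem~\ref{thm:NE_related}: if $j_1,j_2$ denote the pair of jobs completing at time $\ell b$ on $M_1,M_2$, the candidate is a NE iff $j_1 \prec_1 j_2$ and $j_2 \prec_2 j_1$. If some candidate passes the test, it is returned; otherwise, by the analog of Claim~\ref{cl:anyNE}, no NE exists. Tracing $\Gamma_k$ across all blocks and testing each final candidate costs $O(n)$, which gives the linear-time bound asserted by the theorem.
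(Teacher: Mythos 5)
Your proposal follows essentially the same route as the paper's proof: dispatch $c\neq 0$ via Claim~\ref{cl:c=!0}, show any NE in the balanced case is reproducible by a run of Algorithm~\ref{alg:AlgNextRelated} (the paper's Claim~\ref{cl:anyNERelated}), prove the block analog of Claim~\ref{cl:two_options} bounding $|\Gamma_k|\le 2$ (the paper's Claim~\ref{cl:two_options_Related}), and enumerate the $O(1)$ candidate last blocks testing stability via Theorem~\ref{thm:NE_related}. You even correctly isolate the same technical step (that the deterministic within-block prefixes are unaffected by the single differing job) that the paper itself only sketches.
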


\begin{proof}
Let $n = \ell \cdot (a+b) + c$. If $c \ne 0$, then by Claim~\ref{cl:c=!0} a NE exists and can be produced in linear time.
However, if $c = 0$, then the existence of a NE is not guaranteed. We begin by showing that if a NE does exist, then Algorithm~\ref{alg:AlgNextRelated} has the capability to produce a schedule with the same blocks.

\begin{claim}
\label{cl:anyNERelated}
Any NE assignment $s$ has an execution of Algorithm~\ref{alg:AlgNextRelated} that produces a schedule, $s^*$, with the same blocks as in $s$, and in which the jobs in the last block are processed on the same machines as in $s$.
\end{claim}

\begin{proof}
%
%
%
Let $s$ be a NE assignment. The proof is by induction on the number of blocks. We show that for any $1 \leq k \leq \ell$, there exists an execution of Algorithm~\ref{alg:AlgNextRelated} that produces a schedule, $s^*$, with the same first $k$ blocks as in $s$. The base case refers to the first block.
Since $s$ is a NE, for all $j,t \in J$, if $C_j(s) < C_t(s)$ and $s(j) = M_i$ then $j \prec_i t$.
Let the jobs in $B_1(s)$ be $\langle j_1, j_2, \dots j_{a+b-1}, j_{a+b} \rangle$ ordered in a non-decreasing order of their completion time. Note that $j_{a+b-1}$ and $j_{a+b}$ have the same completion time. Assume, w.l.o.g., that $s(j_{a+b-1}) = M_1$ and $s(j_{a+b}) = M_2$. Among $j_1, \dots, j_{a+b-2}$, no two jobs have the same completion time. Therefore, any execution of Algorithm~\ref{alg:AlgNextRelated} produces a schedule in which these jobs are in $B_1$, on the same machines as in $s$.

Now consider $j_{a+b-1}$ and $j_{a+b}$, and consider the scheduling process of Algorithm~\ref{alg:AlgNextRelated}. Recall that $s(j_{a+b-1}) = M_1$ and $s(j_{a+b}) = M_2$. In $\pi_1$, among the jobs $j_{a+b-1}$, $j_{a+b}$ and all the jobs in blocks $2, \dots, \ell$, either $j_{a+b-1}$ is first or $j_{a+b-1}$ is second and $j_{a+b}$ is first. Similarly, in $\pi_2$, among the jobs $j_{a+b-1}$, $j_{a+b}$ and all the jobs in blocks $2, \dots, \ell$, either $j_{a+b}$ is first or $j_{a+b}$ is second and $j_{a+b-1}$ is first. In any of these options, Algorithm~\ref{alg:AlgNextRelated} has an execution that creates a schedule $s^*$ in which $B_1(s^*) = B_1(s)$. Note that jobs $j_{a+b}$ and $j_{a+b-1}$ might be assigned to the opposite machine with respect to their assignment in $s$.

For the induction step, let $2 \leq k \leq \ell$. By the induction hypothesis, there exists a schedule $s^*$ produced by Algorithm~\ref{alg:AlgNextRelated}, that agrees with $s$ on the first $k-1$ blocks. Let the jobs in $B_k(s)$ be $\langle j_1, j_2, \dots j_{a+b-1}, j_{a+b} \rangle$ ordered in a non-decreasing order of their completion time. The arguments applied above for $B_1(s)$ can now be applied on $B_k(s)$ to conclude that $s^*$ can be extended such that its $k$-th block consists of the jobs in $B_k(s)$.

Lastly, consider the last block $B_\ell(s)$, where $j_{a+b-1}$ and $j_{a+b}$ are the last jobs processed in the schedule. Since $s$ is a NE, it follows that $j_{a+b-1} \prec_1 j_{a+b}$ and $j_{a+b} \prec_2 j_{a+b-1}$. 
Therefore, independent of whether the algorithm assigns $j_{a+b}$ before or after $j_{a+b-1}$, in every execution of the algorithm in which the jobs in the last block are the same jobs as in $B_\ell(s)$, jobs $j_{a+b-1}$ and $j_{a+b}$ are assigned to $M_1$ and $M_2$, respectively, as in $s$.
\end{proof}

Combining this claim with Claim~\ref{cl:beneficial}, we infer that by examining all the potential last blocks in a schedule produced by Algorithm~\ref{alg:AlgNextRelated}, we can determine the existence of a NE and even produce one if it exists. We now show that at most two potential last blocks exist, and we can identify them efficiently in linear time.

Recall that $n = \ell \cdot (a+b)$. For $1 \leq k \leq \ell$, let $\Gamma_k$ be the set of sets of jobs such that $S_k \in \Gamma_k$ if and only if $|S_k|= k \cdot (a+b)$ and there exists a run of Algorithm~\ref{alg:AlgNextRelated} in which the jobs of $S_k$ are assigned on the first $k$ blocks.
We show that for $1 \leq k \leq \ell$, $|\Gamma_k| \leq 2$. Moreover, if $|\Gamma_k|=2$ then the two sets are identical up to a single job.

For $1 \leq t \leq n$, denote by $\hat{\pi_i}^{(t)}$ the unassigned jobs in $\pi_i$, at the beginning of the $t^{th}$ iteration of Algorithm~\ref{alg:AlgNextRelated}, i.e., after assigning the first $t - 1$ jobs, and before assigning the $t^{th}$ job. Note that $\hat{\pi_i}^{(1)} = \pi_i$, and that $|\hat{\pi_i}^{(t)}| = n - t + 1$. Denote by $\hat{\pi_i}^{(t)}(x)$ the job in the $x^{th}$ place in $\hat{\pi_i}^{(t)}$.
Let $t_k = k \cdot (a + b) - 1$. Note that at the beginning of the $t_k^{th}$ iteration of the algorithm, $a + b - 2$ jobs have already been assigned within block $k$, with the last two jobs yet to be assigned.



\begin{claim}
\label{cl:two_options_Related}
For all $1 \leq k \leq \ell$, $|\Gamma_k| \le 2$. If $S_k^1, S_k^2 \in \Gamma_k$, then
\begin{enumerate}
    \item $|S_k^1 \cap S_k^2| = k \cdot (a + b) - 1$, and
    \item $\hat{\pi_1}^{(t_k)}(1) = \hat{\pi_2}^{(t_k)}(1)$, $\hat{\pi_1}^{(t_k)}(2) \neq \hat{\pi_2}^{(t_k)}(2)$, and 
    \item denote $\hat{\pi_1}^{(t_k)}(2) = a_k$ and $\hat{\pi_2}^{(t_k)}(2) = b_k$, then $S_k^1 \setminus S_k^2 = \{a_k\}$, $S_k^2 \setminus S_k^1 = \{b_k\}$, w.l.o.g.
\end{enumerate}
\end{claim}

\begin{proof}
The proof is by induction on $k$.
Note that for every block, the assignment of the first $a+b-2$ jobs in the block is deterministic, and a tie-breaking may be required only for the assignment of the $(a+b-1)$-st job. In other words, different blocks may be created when assigning the last two jobs in a block, depending on the tie breaking applied for the second to last job in the block. Recall the analysis for two identical machines, given in the proof of Claim~\ref{cl:two_options}. This analysis fits the case $a=b=1$. Indeed, for $a=b=1$ every block consists on two jobs, and a run of the algorithm may split if  $\hat{\pi_1}^{(t_k)}(1) = \hat{\pi_2}^{(t_k)}(1)$, and $\hat{\pi_1}^{(t_k)}(2) \neq \hat{\pi_2}^{(t_k)}(2)$.

Note that, independent of the machines rate, if $S_k^1, S_k^2 \in \Gamma_k$ and $S_k^1 \setminus S_k^2 = \{a_k\}$, $S_k^2 \setminus S_k^1 = \{b_k\}$, then, when assigning block $k+1$ starting from $S_k^1$, job $b_k$ has the highest priority in its list and is guaranteed to be included in $B_{k+1}$, and when assigning block $k+1$ starting from $S_k^2$, job $a_k$ has the highest priority in its list and is guaranteed to be included in $B_{k+1}$. Based on this observation, and the fact that $a+b-2$ of the jobs in each block are assigned deterministically, the proof of Claim~\ref{cl:two_options} can be generalized to the case of arbitrary $a$ and $b$. We omit the technical details.
\end{proof}

The following algorithm decides whether a given game has a NE profile.


\begin{algorithm}[H]
\caption{- Algorithm to determine existence of a NE in $G$ and produce one if it exists} \label{alg:2MachinesUnitJobsRelated}
\begin{algorithmic}[1]
\IF{$c \neq 0$}
\STATE Use Algorithm~\ref{alg:AlgNextRelated} to produce a schedule $s$
\IF{$s$ is not a NE}
\STATE Adjust $s$ to resolve the tiebreaker according to Claim~\ref{cl:c=!0}
\ENDIF
\STATE Return $s$
\ENDIF
\STATE Compute $\Gamma_\ell, \Gamma_{\ell-1}$, for $\ell = n / (a+b)$
\FOR{each $S_{\ell-1} \in \Gamma_{\ell-1}$}
\STATE Compute $B_\ell(s) = S_\ell \setminus S_{\ell-1}$, where $s$ is a schedule related to $S_{\ell-1}$
\IF{$s$ is a NE}
\STATE Return $s$
\ENDIF
\ENDFOR
\STATE Return "no NE exists"
\end{algorithmic}
\end{algorithm}

\begin{claim}
\label{cl:AlgUnitCorrectRelated}
Algorithm~\ref{alg:2MachinesUnitJobsRelated} returns a schedule $s$ iff a NE exists and $s$ is NE.
\end{claim}

\begin{proof}
Let $s$ be the schedule returned by Algorithm~\ref{alg:2MachinesUnitJobsRelated}. If $s$ is returned in Step $6$ of the algorithm, then by Claim~\ref{cl:c=!0} it is a NE. Otherwise, $s$ is returned in Step $12$ and it is a NE.

For the other direction, assume a NE exists, and $n = \ell \cdot (a+b) + c$. If $c \neq 0$, then by Claim~\ref{cl:c=!0}, a NE must exist and it is returned by the algorithm in Step $6$. Otherwise, let $s$ be a NE. By Claim~\ref{cl:anyNERelated}, there is an execution of Algorithm~\ref{alg:AlgNextRelated} that produces a schedule, $s'$, with the same blocks as in $s$, and in which the jobs in the last block $B_\ell(s) = B_\ell(s')$ are processed on the same machines as in $s$. The corresponding sets $S_{\ell-1}, S_\ell$ will be considered in Steps $9$ and $10$. Since $s$ is a NE, $s'$ is also a NE and will be returned by the algorithm in Step $12$.
\end{proof}

\begin{claim}
\label{cl:AlgUnitRunTimeRelated}
Algorithm~\ref{alg:2MachinesUnitJobsRelated} has a runtime of $O(n)$.
\end{claim}

\begin{proof}
First note that Algorithm~\ref{alg:AlgNextRelated} runs in time $O(n\log m)$ which is $O(n)$ for $m=2$. Also note that verifying whether a schedule $s$ is a NE in Steps $3$ and $11$ can be done in $O(n)$ time by checking if every job's current assignment is its best-response. Therefore, the complexity of Algorithm~\ref{alg:2MachinesUnitJobsRelated} depends on the number of sets in $\Gamma_{\ell-1}$ and $\Gamma_\ell$ considered in Steps $9$ and $10$. From Claim~\ref{cl:two_options_Related} there are at most two such options, that are identified in time $O(n)$.
\end{proof}

The statement of Theorem~\ref{thm:2MachinesUnitJobsRelated} follows from Claims~\ref{cl:AlgUnitCorrectRelated} and~\ref{cl:AlgUnitRunTimeRelated}.
\end{proof}


\section{The Effect of Competition on the Equilibrium Inefficiency}
\label{sec:inefficiency}

In this section, we analyze the equilibrium inefficiency with respect to the objective of minimizing the makespan. The main question we consider is whether competition improves or worsen the equilibrium inefficiency.

We show that the known upper and lower bounds for the PoA given in~\cite{RST21} for games without competition are valid also for games with competition that have a NE. In addition, we show that competition might be beneficial or harmful. Specifically, for the class $\G^P$ we present two games such that for the first game $PoA(G_1)=1$ if the game is played without competition, while with competition $PoS(G_1)= PoA(\G^P)$, and a game for which with competition $PoA(G_2)=1$,  while without competition $PoS(G_2)= PoA(\G^P)$. Similar, though not tight, results are shown for the class $\G^{Q2}$. For games that do not have a NE, we analyze the price of sinking. 


\subsection{Identical Machines}

Recall that Algorithm~\ref{alg:AlgNext} produces a Nash equilibrium for any game on two identical machines with {\em Inversed-Policies}. We first bound the PoA of the resulting NE. Our bound is different from the bound presented in~\cite{CKN04} for inversed policies without competition. They showed that the PoA is at least $\frac{4}{3}$, while we show a tight bound of $\frac{3}{2}$.

\begin{theorem}
\label{Thm:InvPol_tmp}
A coordination mechanism with {\em Inversed-Policies} has a price of anarchy of $\frac{3}{2}$.
\end{theorem}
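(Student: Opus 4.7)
The plan is to prove matching upper and lower bounds of $\frac{3}{2}$. For the upper bound I fix an arbitrary NE $s$, let $M_1$ be the heaviest loaded machine, and let $j$ be the last job on $M_1$, so $C_j(s)=L_1(s)$. The target is the standard inequality $L_1(s) \leq L_2(s) + p_j$: once it holds, the classical chain $2\,L_1(s) \leq (L_1(s) + L_2(s)) + p_j \leq 2\cdot OPT + OPT$ yields $C_{max}(s) \leq \frac{3}{2}\cdot OPT$. The subtle point is that in the rank-based setting a strictly cost-reducing deviation is not automatically beneficial, so this inequality has to be re-derived rather than read off from the classical argument.

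To that end, I would suppose $L_1 > L_2 + p_j$ and show that migrating $j$ to $M_2$ is beneficial, contradicting NE. The new completion time satisfies $C'_j \leq L_2 + p_j < L_1 = C_j$, so $j$'s cost strictly drops; the task is to argue that $j$'s rank weakly drops as well. Writing the rank-contribution of $\ell \neq j$ as $\phi_\ell \in \{0,\frac{1}{2},1\}$, equal to $1$, $\frac{1}{2}$, or $0$ according as $C_\ell < C_j$, $C_\ell = C_j$, or $C_\ell > C_j$, we have $rank_j = 1 + \sum_{\ell \neq j}\phi_\ell$. Case (i): if $\ell$ stays on $M_1$, then $C'_\ell = C_\ell < C_j$ (since $j$ was last on $M_1$), so $\phi_\ell = 1 \geq \phi'_\ell$. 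Case (ii): if $\ell$ lies on $M_2$ before $j$ in $\pi_2$, then $C'_\ell = C_\ell \leq L_2 < C_j$ and $C'_\ell < C'_j$, so $\phi'_\ell = \phi_\ell = 1$. Case (iii): if $\ell$ lies on $M_2$ after $j$ in $\pi_2$, then $C'_\ell = C_\ell + p_j > C'_j$ while $C_\ell \leq L_2 < C_j$, so $\phi'_\ell = 0 \leq 1 = \phi_\ell$. Summing gives $rank'_j \leq rank_j$, and together with the strict cost drop the migration is beneficial.

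For the lower bound I would exhibit the following concrete game: $n = 5$ jobs of lengths $(1,1,4,1,1)$, with $\pi_1 = \langle 1,2,3,4,5 \rangle$ and inversed list $\pi_2 = \langle 5,4,3,2,1 \rangle$. Placing $\{1,2,3\}$ on $M_1$ and $\{4,5\}$ on $M_2$ yields loads $6$ and $2$, so makespan $6$, whereas placing job $3$ alone and the other four jobs together achieves $OPT = 4$. A case check on the five single-job migrations confirms that each one either leaves the deviator indifferent or strictly worsens both its rank and its cost, so the assignment is a NE and $\PoA \geq \frac{3}{2}$.

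The main obstacle is the rank accounting in the upper bound: a rank-based NE can tolerate cost-reducing deviations as long as they also strictly increase rank, so the familiar makespan argument cannot be imported verbatim. The work is to verify that the one specific deviation we need --- the last job of the max-loaded machine migrating to the other machine --- has a rank-weakly-decreasing effect, which then restores the load inequality and collapses the rest of the proof to the classical computation.
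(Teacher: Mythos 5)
Your proposal is correct and follows essentially the same route as the paper: the upper bound rests on the same key inequality $L_1(s) \leq L_2(s) + p_j$ for the last job on the most loaded machine (the paper obtains it from its general $2-\frac{1}{m}$ bound for identical machines, and your explicit rank accounting fills in the detail of why a cost-reducing migration of that job is also weakly rank-reducing, which the paper states more tersely), and your lower-bound instance is exactly the $k=2$ member of the paper's family of examples, with the long job inserted in the middle of the priority list. Both the NE verification and the ratio $6/4 = 3/2$ check out.
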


\begin{proof}
In Theorem~\ref{thm:POAidentical} we present an upper bound of $2 - \frac{1}{m}$ for the PoA of any game played on identical machines. In particular, for any game $G$ on two machines and {\em Inversed-Policies}, PoA$(G) \le 2 - \frac{1}{2} = \frac 3 2$.

For the lower bound, consider $2k$ unit-size jobs $p_1 = p_2 = ... = p_{2k-1} = p_{2k} = 1$, and one job $j^*$ with $p_{j^*} = 2k$.
In the optimal allocation, we have all $2k$ unit-size jobs processed on a single machine, while $j^*$ is processed on a the other machine, resulting in a makespan of $2k$.
However, if $\pi_1 = \langle 1, 2, \dots, k-1, k, j^*, k+1, k+2, \dots, 2k \rangle$, in any Nash equilibrium of this instance, we get unit-jobs $1, \dots, k$ processed on $M_1$ and unit-jobs $k + 1, \dots, 2k$ processed on $M_2$. Job $j^*$ is processed on either $M_1$ or $M_2$, resulting in a makespan of $3k$ and a price of anarchy of $3/2$.
\end{proof}

Next, we analyze the class $\G^P$ of games played on $m$ identical machines, and $n$ arbitrary jobs.

\begin{theorem}
\label{thm:POAidentical}
PoA$(\G^P) = $PoS$(\G^P) = 2 - \frac{1}{m}$.
\end{theorem}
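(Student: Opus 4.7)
My plan is to split into the standard two directions. For the upper bound I will adapt Graham's classical argument to the rank-based setting, and for the matching lower bound I will reuse the well-known hard instance for identical machines, verifying that it admits a rank-based NE (e.g.\ via a suitable global priority list).

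For the upper bound, let $s$ be any NE of $G \in \G^{P}$, let $M_i$ be a machine achieving the maximum load, and let $j^{*}$ be the last job processed on $M_i$, so $C_{j^*}(s) = L_i(s) = C_{max}(s)$. The key step is the following load-balancing inequality: for every $i' \neq i$, we must have $L_{i'}(s) + p_{j^*} \ge L_i(s)$. Suppose toward contradiction that $L_{i'}(s) + p_{j^*} < L_i(s)$ for some $i'$. Then migrating $j^*$ to $M_{i'}$ yields a new schedule $s'$ in which $C_{j^*}(s') \le L_{i'}(s) + p_{j^*} < C_{j^*}(s)$, i.e.\ $j^*$'s completion time strictly decreases. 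To conclude that this is a beneficial deviation in the rank-based sense, I argue that $j^*$'s rank does not strictly increase: let $k$ be the number of jobs tied with $j^*$ at completion time $C_{j^*}(s)$ in $s$, so that $rank_{j^*}(s) = n - (k-1)/2$. Because $j^*$ was last on $M_i$, every other such tied job is on a machine different from $M_i$; if it lies on $M_{i'}$ it either keeps its completion time (higher priority than $j^*$) or is pushed back by $p_{j^*}$, and on every other machine its completion time is unchanged. Hence all $k-1$ of them still have completion time $\ge C_{j^*}(s) > C_{j^*}(s')$ in $s'$, giving $rank_{j^*}(s') \le n-(k-1) \le n-(k-1)/2 = rank_{j^*}(s)$. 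Thus the deviation is beneficial (lower completion time and no worse rank), contradicting $s \in \NE$. Given the inequality $L_{i'}(s) + p_{j^*} \ge L_i(s)$ for all $i' \neq i$, the standard summation $\sum_{i'} L_{i'}(s) \ge L_i(s) + (m-1)(L_i(s) - p_{j^*})$ combined with $OPT \ge \tfrac{1}{m}\sum_j p_j$ and $OPT \ge p_{j^*}$ yields $C_{max}(s) \le (2-\tfrac{1}{m})\,OPT$, which bounds both $\PoA$ and (a fortiori) $\PoS$.

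For the matching lower bound (which doubles as a lower bound on $\PoS$, forcing both quantities to be exactly $2-\tfrac{1}{m}$), I will use the classical Graham instance: $m(m-1)$ unit-length jobs and one job of length $m$, with a global priority list that places the length-$m$ job at the very end. Algorithm~\ref{alg:AlgNext} then assigns $m-1$ unit jobs to every machine and finally places the big job on some machine, giving makespan $2m-1$ while $OPT=m$. It remains to check that this profile (and indeed every NE of the instance) has makespan $\ge 2m-1$: each unit job has completion time in $\{1,\dots,m-1\}$ on its current machine and cannot improve its rank by migrating (any target machine is either equally loaded or carries the big job, so migration can only tie or increase completion time, and ranks are determined by completion times); and the big job is indifferent among machines, so it is stable as well. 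This gives the matching ratio $\tfrac{2m-1}{m} = 2-\tfrac{1}{m}$.

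The main obstacle is the rank-monotonicity step in the upper bound. Classical scheduling games identify ``beneficial'' with ``lower completion time,'' so the inequality $L_{i'}(s)+p_{j^*} \ge L_i(s)$ follows trivially. In the rank-based model a completion-time decrease need not be beneficial if it is accompanied by a strict rank increase, so one must rule out this pathological possibility for the particular deviation of the makespan-achieving job. The observation that the other maximum-completion-time jobs in $s$ cannot be on $M_i$, and that on $M_{i'}$ their completion times can only stay the same or grow after the migration, is exactly what makes the rank argument work and lets the classical bound carry over verbatim.
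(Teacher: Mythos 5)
Your upper bound is correct and follows the paper's argument: the paper also takes a makespan-achieving job $a$, observes that since $a$ already has the worst rank a completion-time-reducing migration would necessarily be beneficial, deduces $L_k \le L_{i'}$ for all $i'$, and runs Graham's summation. Your explicit count of the jobs tied with $j^*$ is a more careful rendering of the step the paper dispatches with ``job $a$ has highest rank in $s$,'' and it is sound (it uses $p_j>0$ to ensure no job on $M_i$ is tied with $j^*$).

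The lower bound, however, has a genuine gap. With a \emph{global} priority list the balanced profile you describe is not a Nash equilibrium, and the parenthetical ``migration can only tie or increase completion time, and ranks are determined by completion times'' is exactly the inference the rank-based model forbids: a migration that preserves the deviator's completion time can still strictly improve its rank by pushing a tied competitor later. Concretely, the $m-1$ machines not carrying the long job each end with a unit job completing at time $m-1$, and all slot-$(m-1)$ jobs are tied. Let $u$ be the highest-global-priority job among those that are last on their machines and let $v$ be a tied last job on another machine $M_z$. Since $u \prec v$ in the global list, migrating $u$ to $M_z$ places $u$ no later than slot $m-1$ there, so $C_u$ does not increase, while $v$ is pushed to time $m$; one fewer job is tied with $u$, so by Claim~\ref{cl:beneficial} this deviation strictly improves $u$'s rank. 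This is precisely the two-unit-job phenomenon of Section~\ref{sec:util}, and it is why Theorem~\ref{thm:Unit_m2} excludes equilibria for unit jobs with a global list whenever $m\ge 3$. Indeed your instance appears to admit no NE at all (even for $m=2$ the top-priority unit job profitably joins the other unit job), so it cannot witness a lower bound on $\PoS$, which is only defined when $\Eq(G)\neq\emptyset$. The paper's construction repairs exactly this: it uses machine-dependent priority lists in which the last $m$ unit jobs are permuted per machine so that, in the resulting equilibrium, each machine's slot-$(m-1)$ job is prioritized on every relevant list over the tied last-layer jobs of the other machines, eliminating all bypass deviations in the top layer while still forcing makespan $2m-1$ in every NE.
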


\begin{proof}
Let $G \in \G^P$. Let $W=\sum_j p_j$ be the total processing time of all jobs. Assume that $G$ has a NE, and let $s$ be a NE with maximal makespan. Let $a$ be a last job to complete in $s$, and let $L_1, \dots, L_m$ be the total processing time of all jobs different from $a$ on $M_1, \dots, M_m$. Let $M_k$ be the machine on which job $a$ is processed in $s$. Thus, $C_{max}(s) = L_k + p_a$. Since $s$ is a NE, and job $a$ has highest rank in $s$, no migration of $a$ reduces its completion time. Therefore, for any $1 \le i \le m$, $i \neq k$, $L_k + p_a \leq L_i + p_a$, otherwise, job $a$ could benefit from migration to $M_i$. Therefore, for any $1 \le i \le m$, $C_{max}(s)\leq L_i + p_a$. Also, $OPT(G) \ge \frac{W}{m}$ and $OPT(G) \ge p_a$. Combining these inequalities yields
\[C_{max}(s)\leq \frac{W + (m-1) \cdot p_a}{m} \leq (2 - \frac{1}{m}) \cdot OPT(G),\]
implying that $PoA(G) \leq 2 - \frac{1}{m}$.

For the lower bound, we show that the classical tight analysis of List-scheduling can be adapted to our game by choosing appropriate priority lists. Given $m > 1$, the following is an instance for which PoS$=2 - \frac 1 m$.
The set of jobs includes a single job $j^*$ with processing time $m$ and $m(m-1)$ unit jobs $j_1, \dots, j_{m(m-1)}$. Let $J_{last}$ denote the set consisting of the last $m$ unit-jobs. For $1 \le i \le m$, let $j_{last}^i$ be the $i$-th job in $J_{last}$. Note that $j_{last}^i$ is the $m(m-2)+i$-th unit job in the instance. 
The priority list of machine $M_i$, $\pi_i$, starts with the $m(m-1)$ unit jobs, which appear according to their index, except for $j_{last}^i$, which is moved to be last among the unit-jobs. Finally, the long job $j^*$ is last in $\pi_i$.

It is easy to verify that in every NE profile, for every $1 \le k \le m-1$, the $m$ unit-jobs $m(k-1)+1,\ldots, mk-1$ are processed in slot $k$. Also, by Claim~\ref{cl:beneficial}, Job $j_{last}^i$ is processed on machine $i$ in slot $m-1$. Job $j^*$ is assigned as last on any machine. Thus, in every NE profile, some machine has load $2m-1$. On the other hand, an optimal assignment assigns the heavy job on a dedicated machine, and partition the unit-jobs in a balanced way among the remaining $m - 1$ machines. In this profile, all the machines have load $m$. The corresponding PoS is $\frac{2m-1}{m} = 2 - \frac 1 m$.
\end{proof}

\noindent{\bf The Benefit or Loss due to Competition:} We show that competition may be both helpful and harmful for the social cost.
We first present an instance $G_1 \in \G^P$ for which the corresponding game without competition has $PoS(G_1)= 2 - \frac 1 m$, while the same game with competition has $PoA(G_1)=1$.

\begin{example}
{\em
Consider the game $G_1$ with $m$ identical machines and $n = 2m$ jobs. The set of jobs consists of $m - 2$ jobs $X = \{x_3, x_4, \dots, x_m$\}, such that for $x_i \in X,\ p_{x_i} = m - 1$, additional $m - 2$ jobs $Y = \{y_3, y_4, \dots, y_m$\}, such that for $y_i \in Y,\ p_{y_i} = 1$, and $4$ special jobs $a, b, c, d$, where $p_a = m - 1,\ p_b = 1 - \epsilon,\ p_c = \epsilon,\ p_d = m$. The priority lists are $\pi_1 = \langle b, a, c, Y, d, X \rangle$, $\pi_2 = \langle a, d, c, Y, b, X \rangle$, and for $3 \le i \le m,\ \pi_i = \langle x_i, X \setminus \{x_i\}, y_i, Y \setminus \{y_i\}, c, d, a, b \rangle$. When a list includes a set, it means that the set elements appear in arbitrary order. We start by analyzing the NE schedules without competition. In any NE (see Figure~\ref{fig:example_1.2}), each job $x_i$ is processed alone on $M_i$, job $a$ is processed on $M_2$, jobs $b$, $c$ and all jobs of $Y$ are processed on $M_1$. Finally, job $d$ is processed on any of the machines. The makespan of any NE is therefore $2m - 1$. However, with rank-based utilities, these schedules are not stable, since job $a$ can benefit from a migration to $M_1$. Such a migration delays its completion time, and improves its rank from $\frac{3m-1}{2}$ to $m$. Consequently, migrations of job $d$ to $M_2$, and each job $y_i$ to $M_i$, are beneficial, and result in an optimal schedule, which is the unique NE on the game. The makespan of this schedule is $m$. Thus, for the game $G_1$, the PoS without competition is $\frac{2m -1}{m} = 2 - \frac 1 m$, while with competition PoA($G_1$) $= 1$.
}

\begin{figure}[h]
  \centering
  \includegraphics[width=1\textwidth]{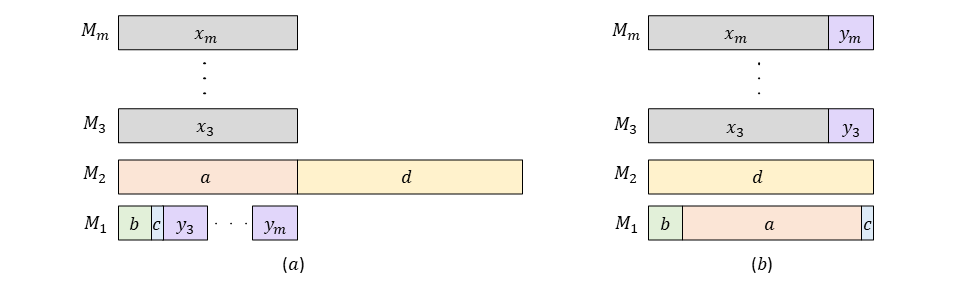}
  \caption{The game $G_1$. (a) NE without competition, (b) NE with competition}
  \label{fig:example_1.2}
\end{figure}

\end{example}

On the other hand, we now show that competition might also increase the equilibrium inefficiency. Specifically, in the following example, we describe a game $G_2 \in \G^P$ in which without competition $PoA(G_2)=1$, and with competition $PoS(G_2)=2- \frac 1 m$.

\begin{example}
{\em
Consider the game $G_2$ with $m$ identical machines and $n = 4m - 3$ jobs. The set of jobs consists of $2 \cdot (m - 2)$ jobs $X = \{x_{i1}, x_{i2}\ |\ 3 \le i \le m\}$, such that for $x_{i1}, x_{i2} \in X,\ p_{x_{i1}} = m - i + 1 + \epsilon,\ p_{x_{i2}} = i - 2$, additional $m - 2$ jobs $Y = \{y_i\ |\ 3 \le i \le m\}$, such that for $y_i \in Y,\ p_{y_i} = 1$, additional $m$ jobs $C = \{c_1, c_2, \dots, c_m\}$, such that for $c_i \in C,\ p_{c_i} = \epsilon$, and $3$ special jobs $a, d, e$, where $p_a = m - 1 - m \cdot \epsilon,\ p_d = 1,\ p_e = m$. The priority lists are $\pi_1 = \langle c_1, a, c_2, \dots, c_m, e, d, X, Y \rangle$, $\pi_2 = \langle a, d, y_m, y_{m-1}, \dots, y_3, e, C, X \rangle$, and for $3 \le i \le m,\ \pi_i = \langle x_{i1}, y_i, x_{i2}, a, d, e, C, X \setminus \{x_{i1}, x_{i2}\}, Y \setminus \{y_i\} \rangle$. When a list includes a set, it means that the set elements appear in arbitrary order. In the unique no-competition NE (see Figure~\ref{fig:example_2.4}), all the jobs in $C$ and job $e$ are processed on $M_1$, jobs $a$ and $d$ are processed on $M_2$, and for $3 \le i \le m$, jobs $x_{i1},\ y_i$ and $x_{i2}$ are processed on $M_i$. The makespan is $m(1 + \epsilon)$. However, with rank-based utilities, this schedule is not stable, since job $a$ can benefit from a migration to $M_1$. Such a migration delays its completion time, but improves its rank from $3m - 4$ to $2m - 2$. Consequently, migrations of jobs $y_m,\ y_{m-1},\ \dots,\ y_3$ to $M_2$ are beneficial, and result in a unique NE, with makespan $2m - 1$. Thus, for the game $G_2$, the PoS with competition is $\frac{2m -1}{m(1 + \epsilon)} \to 2 - \frac 1 m$, while without competition PoA($G_2$) $\to 1$.
}


\begin{figure}[h]
  \centering
  \includegraphics[width=1\textwidth]{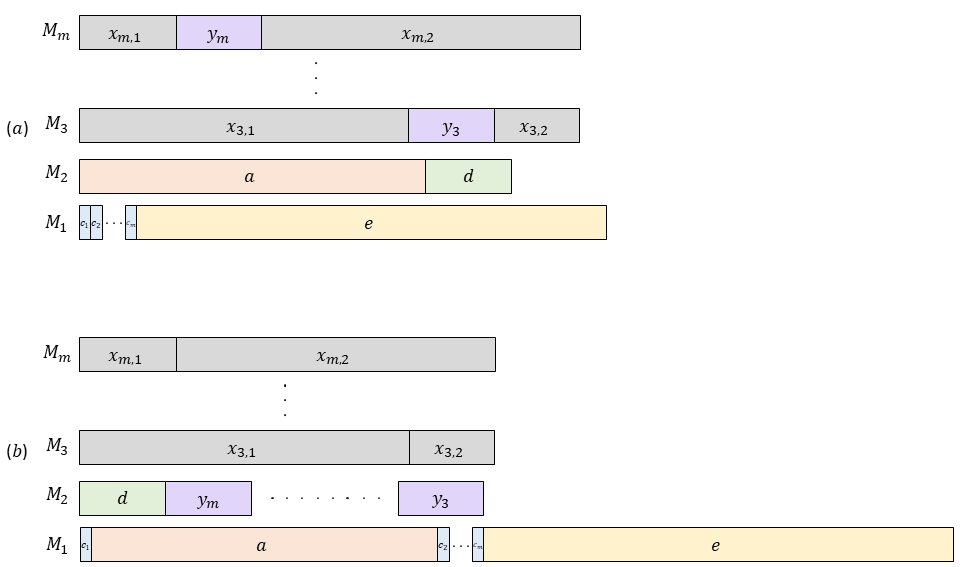}
  \caption{The game $G_2$. (a) NE without competition, (b) NE with competition}
  \label{fig:example_2.4}
\end{figure}

\end{example}

\subsubsection{Sink Equilibria Analysis - Identical Machines}

Let $G \in \G^{P,global}_{unit}$ be a game played by $n = \ell \cdot m + c$, for $0 \leq c < m$, unit-length jobs, on $m$ identical machines, with a global priority list $\pi$. Assume that $G$ has no NE. By Theorem~\ref{thm:Unit_m2}, either $m \ge 3$ or $m = 2$ and $n$ is even. We analyze the sink equilibria of $G$. As we show, natural dynamics lead to sink equilibria with good social cost.

Assume w.l.o.g., that $J = \{1, \dots, n\}$ and that $\pi = \langle 1, \dots, n \rangle$. Given a profile $s$, let $Sub(s)$ be the set of suboptimal players in $s$, and let $Lag(s)$ denote the set of jobs such that $j \in Lag(s)$ iff $rank_j(s) > m \cdot \ceil{j/m} - \frac{m-1}{2}$. Note that $Lag(s) \subseteq Sub(s)$.

Recall that BRD is a natural dynamics, in which, as long as the system is not in a stable state, a suboptimal player is chosen by applying some deviator rule, and performs a best response move. We suggest the following deviator rule for selecting the next player to deviate in an (infinite) sequence of best response moves, and bound the expected cost of a profile in the resulting sink equilibrium.

\paragraph{Priority-Based Deviator Rule:}
If $Lag(s) \neq \emptyset$, then choose the player with the highest priority in $Lag(s)$. Else, if $Sub(s) \setminus Lag(s) \neq \emptyset$ then choose the player with the lowest priority in $Sub(s)$.

In the social optimum, the load is balanced. Thus, for every $G \in \G^{P,global}_{unit}$, $SO(G) = \ceil{\frac n m}$. Specifically, if $c = 0$ then $SO(G) = \ell$, and if $c > 0$ then $SO(G) = \ell + 1$. We start by analyzing instances with $c \ne 1$. 

\begin{theorem}
\label{thm:sink_global_c=0}
Let $G \in \G^{P,global}_{unit}$ be a game with $n = \ell \cdot m + c$ jobs. If $G$ has no NE, and $c \ne 1$, then $PoSINK(G,priority-based) = 1 + \frac{1}{2 \cdot \ceil{\frac n m}}$.
\end{theorem}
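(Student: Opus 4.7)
The plan is to prove the equality by establishing matching upper and lower bounds on $PoSINK(G,\text{priority-based})$. Since $OPT(G) = \ceil{n/m}$, the target ratio $1 + \frac{1}{2\ceil{n/m}}$ is equivalent to showing that, under the priority-based rule, the expected makespan in the worst sink equilibrium of $G$ equals exactly $\ceil{n/m} + \tfrac{1}{2}$.

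For the upper bound, I would first show that every state $s$ in any sink $Q \in \mathcal{Q}(G,\text{priority-based})$ satisfies $C_{max}(s) \in \{\ceil{n/m}, \ceil{n/m}+1\}$. A profile with $C_{max}(s) \geq \ceil{n/m}+2$ has a maximum-load machine with at least two more jobs than some other machine, so its last job has a strictly cost-reducing (hence rank-reducing) best response; such a job belongs to $Sub(s)$ and the rule eventually selects it (or an even higher-priority lagging job), contradicting the fact that $s$ is recurrent under the rule. The core structural claim is then that the steady-state distribution $f_Q$ assigns mass at most $\tfrac{1}{2}$ to the states with makespan $\ceil{n/m}+1$. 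To prove this I would exploit the two-phase structure of the rule: when $Lag(s) \neq \emptyset$, the highest-priority lagging job migrates into its ``deserved'' slot in the balanced-in-priority-order schedule, and a potential argument (e.g., the lexicographic vector of priority indices of lagging jobs) shows the trajectory eventually enters the subgraph where $Lag(s) = \emptyset$. In that subgraph only ``tail'' jobs in the last occupied layer can be suboptimal, and the low-priority tie-breaking in the second clause of the rule forces the sink to be a simple cycle that alternates between a balanced profile of makespan $\ceil{n/m}$ and a near-balanced profile of makespan $\ceil{n/m}+1$, giving equal stationary weight to the two makespans.

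For the lower bound I would show that every game $G$ satisfying the hypotheses admits a sink whose expected makespan is at least $\ceil{n/m} + \tfrac{1}{2}$. The minimal gadget is the $m=2$, $n=2$ no-NE game of Figure~\ref{fig:NoNE}: its entire strategy-profile graph is a single $4$-cycle $(M_1,M_1) \to (M_1,M_2) \to (M_2,M_2) \to (M_2,M_1) \to (M_1,M_1)$ with alternating makespans $2,1,2,1$ and uniform steady state, yielding expected makespan $\tfrac{3}{2} = 1 + \tfrac{1}{2\cdot 1}$. In the general case, the same structural analysis used for the upper bound identifies the tail cycle embedded in any sink of $G$: once the high-priority jobs have stabilized in a balanced prefix of $\ceil{n/m}-1$ layers (for $c \in \{0\} \cup \{2,\ldots,m-1\}$ respectively), the remaining suboptimal tail jobs perform the analogous oscillation in the top layer, producing balanced and near-balanced states in equal measure and hence expected makespan $\ceil{n/m}+\tfrac{1}{2}$.

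The main obstacle is the steady-state mass bound in the upper bound: ruling out the possibility that the sink contains long runs of consecutive high-makespan states that would push $f_Q$'s mass on such states above $\tfrac{1}{2}$. This requires a careful case analysis of BRD transitions out of near-balanced profiles, confirming that the low-priority choice for non-lagging suboptimal players always restores balance in the very next step, so that high-makespan and low-makespan states strictly alternate along the sink cycle. The excluded case $c = 1$ is instructive here: with only a single ``extra'' job beyond $\ell m$, the overflow can bounce among all $m$ machines before returning to any previously visited balanced state, breaking the clean pairing and producing strictly more mass on high-makespan states when $m \geq 3$, which is precisely why the statement restricts to $c \neq 1$.
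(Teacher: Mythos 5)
Your proposal follows essentially the same route as the paper: a potential argument showing that the priority-based rule drains $Lag(\cdot)$ and reaches the balanced, priority-layered schedule, followed by the observation that the sink is a strict alternation between that balanced profile (makespan $\ceil{\frac{n}{m}}$), left via the lowest-priority suboptimal job $n-1$ bypassing job $n$, and a near-balanced profile (makespan $\ceil{\frac{n}{m}}+1$) from which the now-lagging job $n$ returns to a least-loaded machine, giving each makespan stationary weight $\frac{1}{2}$. The only substantive error is your closing remark on the excluded case: for $c=1$ the tail oscillation involves jobs $n-2$ and $n-1$ at completion time $\ell$ while job $n$ already sits alone at $\ell+1=\ceil{\frac{n}{m}}$, so every sink state has makespan exactly $\ceil{\frac{n}{m}}$ and the price of sinking is $1$ (Theorem~\ref{thm:sink_global_c>1}); the case is excluded because it is \emph{better}, not because the overflow puts more mass on high-makespan states.
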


\begin{proof}
Let $G \in \G^{P,global}_{unit}$, let $D$ denote the priority-based deviator rule, and let $\mathcal{Q}(G, D)$ be the set of sink equilibria of $G$ reached by applying BRD with $D$. Assume $G$ has no pure NE. We show that if $c \ne 1$, then for every sink equilibrium $Q \in \mathcal{Q}(G, D)$, $SC(Q) = \ceil{\frac n m} + \frac 1 2$.

Consider a BR sequence applied with the priority-based deviator rule. Let $s_1$ be an initial schedule of such a sequence. If $Lag(s_1) \neq \emptyset$, let $j$ be the job with the highest priority in $Lag(s_1)$. Let $k = \ceil{j/m}$. There are less than $m \cdot k$ jobs that precede $j$ in $\pi$. Therefore, there exists a machine with at most $k - 1$ such jobs. By migrating to this machine, yielding the schedule $s_2$, job $j$ will have completion time at most $k$, and will share rank at most $m \cdot k - \frac{m-1}{2}$ with other jobs. By definition of $Lag(s_1)$, $rank_j(s_1) > m \cdot k - \frac{m-1}{2}$, therefore this is a beneficial migration. Also, $j \notin Lag(s_2)$, and since $j$ has highest priority in $\pi$ among the jobs in $Lag(s_1)$, $j$ cannot re-enter $Lag(s_i)$ for any later schedule $s_i$ in the BR sequence. Therefore, by repeatedly applying the priority-based deviator rule, the BR sequence reaches a schedule $s_t$, in which $Lag(s_t) = \emptyset$. Note that $s_t$ is a balanced schedule, in which every job $j$ has completion time $k = \ceil{j / m}$.

Since $c \ne 1$, jobs $n$ and $n - 1$ are among the last jobs to complete in $s_t$, both have completion time $\ell$ (if $c = 0$) or $\ell + 1$ (if $c \ge 2)$. Thus, according to the deviator rule, the next job to act is the lowest-priority job in $Sub(s_t)$, i.e., job $n - 1$. The only beneficial deviation of job $n - 1$ is into the machine on which job $n$ is processed. By bypassing job $n$, job $n - 1$ maintains its completion time, and improves its rank by $\frac 1 2$. In the updated schedule $s_{t+1}$, job $n$ now has completion time $\ceil{n / m} + 1$ and rank $n$, meaning $n \in Lag(s_{t+1})$. Following the deviator rule, $n$ then migrates to any least-loaded machine. This BR sequence continues alternating between a deviation of player $n - 1$ and player $n$.

Thus, for any sink equilibrium $Q$, we have that $SC(Q)=\frac 1 2 \cdot \ceil{\frac n m} + \frac{1}{2} \cdot (\ceil{\frac n m} + 1) = \ceil{\frac n m} +\frac {1} 2$, implying that $PoSINK(G,D) = 1 + \frac{1}{2 \cdot \ceil{\frac n m}}$
\end{proof}

We now analyze instances with $c = 1$. Since for $m = 2$ and $c = 1$ a NE is guaranteed to exist, we assume $m \ge 3$.

\begin{theorem}
\label{thm:sink_global_c>1}
Let $G \in \G^{P,global}_{unit}$ be a game with $n = \ell \cdot m + c$ jobs. If $G$ has no NE, and $c = 1$, then $PoSINK(G,priority-based) = 1$.
\end{theorem}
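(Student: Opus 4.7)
The plan mirrors the proof of Theorem~\ref{thm:sink_global_c=0}, tracking the ``extra'' job $n = \ell m + 1$. Since $OPT(G) = \ceil{n/m} = \ell + 1$, it suffices to show that every state reached in any sink equilibrium (under the priority-based rule) has makespan exactly $\ell + 1$.

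First I would re-run the Phase-1 argument from Theorem~\ref{thm:sink_global_c=0}: as long as $Lag(s) \ne \emptyset$, the rule picks the highest-priority job in $Lag$, which has a beneficial migration attaining completion time $\ceil{j/m}$; such a job exits $Lag$ and cannot re-enter because higher-priority jobs no longer move. This converges to a balanced profile $s_t$ where every job $j$ has completion time $\ceil{j/m}$, so job $n$ is alone at completion time $\ell+1$ on some machine (WLOG $M_1$, with $\ell+1$ jobs), every other machine has exactly $\ell$ jobs, and $C_{max}(s_t) = \ell + 1$.

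Next I would analyze the BR dynamics starting from $s_t$. By Claim~\ref{cl:beneficial}, the suboptimal jobs are precisely the last jobs on $M_2, \ldots, M_m$ that have a lower-priority sibling at completion time $\ell$ on another machine. Since $Lag(s_t) = \emptyset$, the rule picks the lowest-priority such job $j^\star$; its unique best response is to migrate to the machine of the highest-priority-indexed sibling $j^{\star\star}$ and bypass it, reducing $j^\star$'s rank by $\frac{1}{2}$ while pushing $j^{\star\star}$ to completion time $\ell + 1$ and keeping the makespan at $\ell + 1$. Crucially, migrating to $M_1$ is strictly dominated: it would leave $j^\star$ alone at completion time $\ell + 1$ with rank $m\ell$, worse than the bypass rank $m\ell - \frac{m}{2}$. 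In the resulting profile, $j^{\star\star}$ enters $Lag$ (its rank exceeds the threshold), and its unique best response is to migrate into the now-underloaded machine, again strictly better than migrating to $M_1$, and producing another balanced profile of makespan $\ell + 1$.

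Iterating this analysis, the BR trajectory cycles through balanced and ``balanced-plus-one-swap'' profiles, all of makespan $\ell + 1$. Hence every state in every sink has makespan $\ell + 1 = OPT(G)$, giving $\mbox{PoSINK}(G, \mbox{priority-based}) = 1$. The main technical obstacle is verifying that the deviator chosen by the rule never has $M_1$ in its best-response set: this requires a clean invariant on reachable profiles (at most two machines of load $\ell + 1$, at most one machine of load $\ell - 1$, all others of load $\ell$, and job $n$ pinned at position $\ell + 1$ on its home machine) together with a rank computation showing that ``bypass'' or ``join-underloaded'' migrations strictly dominate any migration onto the machine holding $n$.
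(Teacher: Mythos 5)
Your overall plan --- Phase~1 exactly as in Theorem~\ref{thm:sink_global_c=0} to reach a balanced profile, then an argument that the subsequent dynamics never raise the makespan above $\ell+1$ --- is the paper's plan too, but the key claim on which your Phase~2 rests is false, and the invariant you propose does not hold for reachable profiles. You assert that for the chosen deviator $j^\star$, migrating to the machine $M_i$ holding job $n$ is strictly dominated because $j^\star$ would end up ``alone at completion time $\ell+1$ with rank $m\ell$.'' That computation is only valid when the layer-$\ell$ job $w$ currently preceding $n$ on $M_i$ has \emph{higher} priority than $j^\star$. Since the priority list is global, if $w$ has a larger index than $j^\star$ (e.g., $w=n-1$ or $w=n-2$ while $j^\star$ is $n-2$ or $n-3$), then $j^\star$ is inserted \emph{before both} $w$ and $n$, keeps completion time $\ell$, and gains the same $\tfrac12$ rank improvement as a bypass on any other machine --- so this migration is a legitimate best response, and when $w=n-1$ it is the \emph{only} best response of the chosen deviator $n-2$. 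In these configurations job $n$ is pushed to completion time $\ell+2$, so job $n$ is not ``pinned'' and the makespan of reachable states is not uniformly $\ell+1$; your invariant fails at the first such step.

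The paper resolves this with a case analysis on the identity of the job preceding $n$ on $M_i$ (neither $n-1$ nor $n-2$; equal to $n-2$; equal to $n-1$). In the latter two cases the dynamics do pass through profiles of makespan $\ell+2$ (a deviator lands on $M_i$ in front of $n$, and the displaced job then leaves via the $Lag$ mechanism), but these profiles are shown to be \emph{transient}: every BR-sequence funnels, with probability $1$, into the two-state cycle of the first case, which is the unique sink and consists only of profiles of makespan $\ell+1$. The conclusion $SC(Q)=\ell+1$ therefore relies on the sink excluding the makespan-$(\ell+2)$ states, not on such states being unreachable. To repair your argument you would need to either reproduce this case analysis and the transience argument, or find a correct invariant that distinguishes states inside the sink from states merely visited on the way to it.
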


\begin{proof}
Let $G \in \G^{P,global}_{unit}$, let $D$ denote the priority-based deviator rule, and let $\mathcal{Q}(G, D)$ be the set of sink equilibria of $G$ reached by applying BRD with $D$. Assume $G$ has no pure NE. We show that if $c = 1$, then for every sink equilibrium $Q \in \mathcal{Q}(G, D)$, $SC(Q) = \ell + 1$.

Consider a BR sequence applied with the priority-based deviator rule. This sequence follows a similar structure to the one analyzed in Theorem~\ref{thm:sink_global_c=0}, up until it reaches a schedule $s_t$, where $Lag(s_t) = \emptyset$. Recall that $s_t$ is a balanced schedule, in which every job $j$ has completion time $k = \ceil{j / m}$. In particular, job $n$ has completion time $\ell + 1$.

Recall that $m \ge 3$. According to the deviator rule, the next job to act is the lowest-priority job in $Sub(s_t)$. Assume first that the job preceding $n$ on its machine, $M_i$, is neither $n - 1$ nor $n - 2$ (see Figure~\ref{fig_sink_c=1_1} (a)). In this case, the deviator rule selects job $n - 2$. The only beneficial migration for job $n - 2$ is to move from its current machine, say $M_k$, to the machine on which job $n - 1$ is processed. By bypassing job $n - 1$, job $n - 2$ maintains its completion time $\ell$, and improves its rank by $\frac 1 2$. In the resulting schedule, $s_{t+1}$, job $n - 1$ has completion time $\ell + 1$, and thus is in $Lag(s_{t+1})$ (see Figure~\ref{fig_sink_c=1_1} (b)). Following the deviator rule, $n - 1$ then migrates to $M_k$, which is the least loaded machine. This sequence continues in a similar manner, where $n - 2$ migrates and bypasses $n - 1$, while $n - 1$, in turn, returns to the machine from which $n - 2$ migrated. The makespan of all the schedules in the sink equilibrium is $\ell + 1$.

\begin{figure}[h]
  \centering
  \includegraphics[width=1\textwidth]{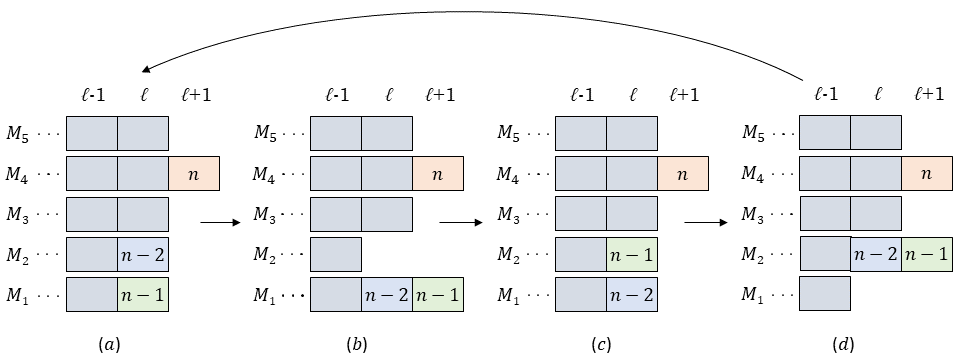}
  \caption{A sink equilibrium where the job preceding $n$ on $M_i$ is neither $n - 1$ nor $n - 2$.}
  \label{fig_sink_c=1_1}
\end{figure}

We now show that this type of sink equilibrium is the only possible outcome. Assume that the job preceding $n$ on $M_i$ is $n - 2$ (see Figure~\ref{fig_sink_c=1_2} (a)). In this case, jobs $n - 1$ and $n - 2$ are not suboptimal, and the deviator rule chooses job $n - 3$, which has two potential BRs, each chosen with probability $\frac 1 2$. It can either migrate from its current machine, say $M_k$, to the machine on which $n - 1$ is processed (option $1$), or it can move to $M_i$, on which $n - 2$ and $n$ are processed (option $2$). If option $1$ is chosen (see Figure~\ref{fig_sink_c=1_2} (d)), then in the next step, job $n - 1$ is in $Lag(s_{t+1})$, and thus being chosen and migrates to $M_k$. In the resulting schedule, job $n - 3$ is selected again and faces the same choices: option $1$ or option $2$. However, if option $2$ is chosen (see Figure~\ref{fig_sink_c=1_2} (b)), the resulting schedule $s_{t+1}$ places jobs $n - 3$, $n - 2$, and $n$ on $M_i$. Here, jobs $n - 2$ and $n$ are in $Lag(s_{t+1})$, and thus job $n - 2$ is chosen and migrates to $M_k$, the least-loaded machine. This leads to a schedule where $n - 3$ is now the job directly preceding $n$ on $M_i$, and the system returns to the sink equilibrium previously described.

\begin{figure}[h]
  \centering
  \includegraphics[width=1\textwidth]{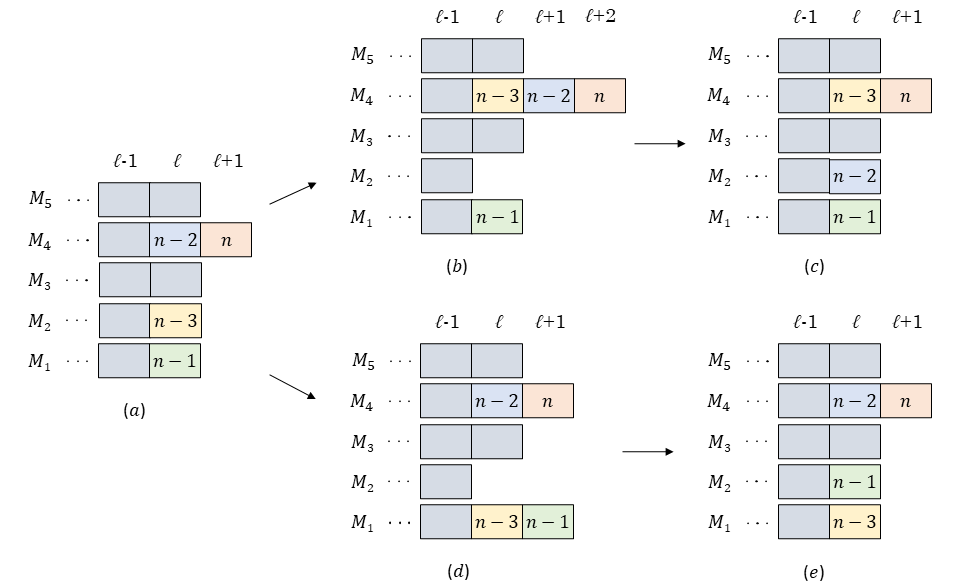}
  \caption{A BRD sequence when the job preceding $n$ on $M_i$ is $n - 2$. (c) The sequence leads to a state where neither $n - 1$ nor $n - 2$ precedes $n$. (e) The system returns to a schedule where $n - 2$ precedes $n$ on $M_i$.}
  \label{fig_sink_c=1_2}
\end{figure}

The final case to consider is when the job preceding $n$ on $M_i$ is $n - 1$ (see Figure~\ref{fig_sink_c=1_3}). In this scenario, the deviator rule selects job $n - 2$, whose only beneficial migration is from its current machine, say $M_k$, to $M_i$, where jobs $n - 1$ and $n$ are processed. Here, jobs $n - 1$ and $n$ are in $Lag(s_{t+1})$, and therefore, job $n - 1$ is chosen and migrates to $M_k$, the least-loaded machine. This leads to a state where job $n - 2$ precedes $n$ on $M_i$, a case we have already analyzed.

\begin{figure}[h]
  \centering
  \includegraphics[width=1\textwidth]{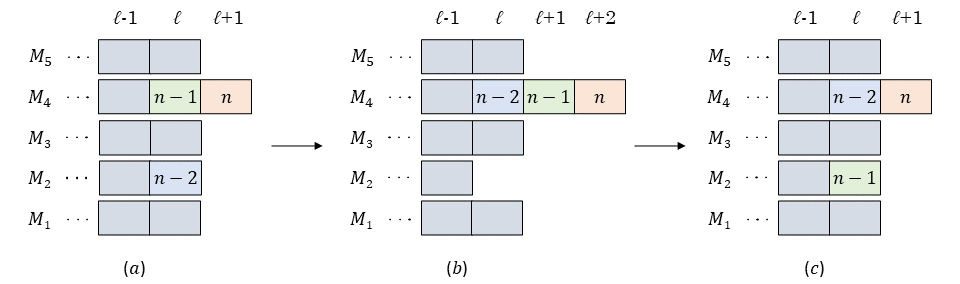}
  \caption{A BRD sequence where the job preceding $n$ on $M_i$ is $n - 1$. (c) The system returns to a schedule where the job preceding $n$ on $M_i$ is $n - 2$.}
  \label{fig_sink_c=1_3}
\end{figure}

In conclusion, for any sink equilibrium $Q \in \mathcal{Q}(G, D)$, we have that $SC(Q)= \ell + 1$. Since the social optimum is $\ceil{\frac{n}{m}} = \ell + 1$ as well, we get $PoSINK(G,D) = 1$.
\end{proof}


\subsection{Related Machines}

Consider the class $\G^{Q2}$. Recall that the two machines are denoted by $M_1$ and $M_2$, such that $r_1 = 1$ and $r_2 = r \leq 1$. The proof of the following theorem adjusts the corresponding proof in~\cite{RST21} for games without competition.

\begin{theorem}
\label{thm:PoAQ2}
PoA$(\G^{Q2}) = $PoS$(\G^{Q2}) = r+1$ if $r \le \frac{\sqrt{5}-1}{2}$, and PoA$(\G^{Q2}) = $PoS$(\G^{Q2}) = \frac{r+2}{r+1}$ if $r > \frac{\sqrt{5}-1}{2}$.
\end{theorem}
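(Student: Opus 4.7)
The plan is to adapt the proof of the analogous statement in~\cite{RST21} for the no-competition setting, relying on the observation that its key structural constraint still applies under rank-based utilities. Let $s$ be a rank-based NE with makespan $M$, achieved by some job $a$, and let $W = \sum_j p_j$. First I would argue that $a$ cannot strictly reduce its completion time by any migration: if it could, the new profile would either have some other job completing strictly later than $a$---strictly lowering $a$'s rank---or have $a$ still at the top rank but at strictly smaller cost; both are beneficial deviations in the lexicographic rank-then-cost preference, contradicting the NE assumption. Unpacking this constraint in the two sub-cases $s_a = M_1$ and $s_a = M_2$ yields the canonical inequality $(1+r)\,M \le W + p_a$.

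Two independent upper bounds on $M/OPT$ follow. The first, $M \le \frac{r+2}{r+1}\,OPT$, is immediate from $(1+r)M \le W + p_a$ together with $OPT \ge W/(1+r)$ and $OPT \ge p_a$. The second, $M \le (1+r)\,OPT$, I would prove by a short case split on the location of $a$. If $s_a = M_1$ then $M = L_1 \le W$ and $OPT \ge W/(1+r)$ yield the bound at once. If $s_a = M_2$, write $L_2 = p_a + \delta$ with $\delta \ge 0$, use the NE condition $L_1 \ge M - p_a$ to derive $W \ge (p_a + (1+r)\delta)/r$, and check that the resulting upper bound $(p_a + \delta)(1+r)/(p_a + (1+r)\delta)$ on $M/OPT$ is strictly decreasing in $\delta$ and equals $1+r$ at $\delta = 0$. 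Hence $\PoA(\G^{Q2}) \le \min\{1+r,\,\frac{r+2}{r+1}\}$, and the sign of the polynomial $r^2 + r - 1$ shows this minimum equals $1+r$ for $r \le (\sqrt5-1)/2$ and $\frac{r+2}{r+1}$ otherwise.

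For the matching lower bound I would provide two explicit instance families and verify rank-based stability. For $r \le (\sqrt5-1)/2$, take two jobs of sizes $r$ and $1$ with $\pi_1 = \pi_2 = (r, 1)$: placing both jobs on $M_1$ is cost-stable because $1+r \le 1/r$ holds exactly in this regime, and rank-stable because the completion times $r$ and $1+r$ are distinct; the makespan is $1+r$ versus $OPT = 1$. For $r > (\sqrt5-1)/2$, take three jobs of sizes $p_b = 1/(r+1)$, $p_c = (r^2 + r - 1)/(r+1)$ and $p_a = 1$, with priority lists making the profile that places $b$ on $M_1$ and $(c, a)$ on $M_2$ a NE (the constraint on $a$ holds with equality, while the constraints on $b$ and $c$ are slack); a direct computation yields makespan $\frac{r+2}{r+1}$ versus $OPT = 1$. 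In both families all completion times are pairwise distinct, so rank-based stability reduces to cost-based stability.

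The hard part is matching the $\PoS$ claim: to ensure every rank-based NE, not merely one, attains the bad ratio, the above instances must be augmented so that the optimal-makespan configuration becomes unstable. I plan to handle this by adding small witness jobs and priority constraints tailored to induce beneficial migrations away from any near-optimal profile, leaving the bad NE as the unique stable one; the rank-based framework actually helps here, since cost-neutral ties generate additional migrations that can be leveraged to prune otherwise benign equilibria.
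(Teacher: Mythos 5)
Your upper-bound argument is essentially the paper's: the observation that the makespan job $a$ cannot reduce its completion time in a rank-based NE (its rank is already maximal, so any cost reduction is a beneficial deviation) yields $(1+r)\,C_{max}(s)\le W+p_a$ in both sub-cases, and combining with $OPT\ge W/(1+r)$ and $OPT\ge p_a$ gives the bound $\frac{r+2}{r+1}$; your $\delta$-monotonicity computation for the second bound is valid, though the paper gets $C_{max}(s)\le W\le(1+r)\,OPT$ more directly by letting the last job migrate to the fast machine. Your lower-bound instances are the paper's instances rescaled (by $r$ and by $1/(r+1)$, respectively), and your verification that the bad profile is a NE is correct, which settles \PoA.

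The genuine gap is the \PoS lower bound. You only exhibit one bad NE and explicitly defer the claim that no better NE exists, proposing to ``augment'' the instances with witness jobs; that augmentation is never carried out, and the plan points in the wrong direction, since no augmentation is needed -- in both of your instances every rank-based NE already attains the bad makespan (for example, in the two-job instance the optimal split is unstable because the length-$r$ job, tied at rank $1.5$, strictly improves to rank $1$ by joining the machine where it has priority), and what is actually missing is the finite case check over profiles that the paper performs. A second, related flaw: the step ``all completion times are pairwise distinct, so rank-based stability reduces to cost-based stability'' is false as a general principle -- Figure~\ref{fig:CostInc} of the paper shows a cost-increasing, rank-reducing deviation in a profile with pairwise distinct completion times, since a deviator can leapfrog a lower-priority competitor on the target machine and push it past its own new completion time. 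It is harmless in your instances only because the cost-indifferent job is last in both priority lists and therefore delays nobody when it migrates; this must be argued explicitly rather than by appeal to the general reduction.
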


\begin{proof}
Let $G \in \G^{Q2}$. Assume $G$ has a NE. Let $W=\sum_i p_i$ be the total processing time of all jobs.
Assume first that  $r \le \frac{\sqrt{5}-1}{2}$. $OPT(G) \ge W/(1+r)$. Also, for any NE $s$, we have that $C_{max}(s) \le W$, since otherwise, the last job to be completed in $s$, which must be processed on the slower machine $M_2$, can migrate to be (at worst) last on the fast machine $M_1$, reducing its completion time. Since its rank in $s$ is $n$, the migration cannot increase its rank. Thus, PoA $\le r+1$.

Assume next that $r>\frac{\sqrt{5}-1}{2}$. Let job $a$ be a last job to complete in a worst NE $s$, $p_1$ be the total processing time of all jobs different from $a$ on $M_1$, and $p_2$ be the total processing time of all jobs different from $a$ on $M_2$ in $s$. If $a$ is processed on $M_1$, then $C_{max}(s) = p_1 + p_a$. Since $s$ is a NE, $p_1 + p_a \leq (p_2 + p_a) / r$, otherwise $a$ could benefit from migration to $M_2$. If $a$ is processed on $M_2$, then $C_{max}(s) = (p_2 + p_a) / r$. Since $s$ is a NE, $(p_2 + p_a) / r \leq p_1 + p_a$, otherwise $a$ could benefit from migration to $M_1$. Therefore, in any case, $C_{max}(s)\leq p_1 + p_a$ and $C_{max}(s)\leq (p_2 + p_a) / r$. Also note that $OPT(G) \ge W/(1+r)$ and $OPT(G) \ge p_a$. Combining these inequalities yields
\[C_{max}(s)\leq \frac{W+p_a}{1+r}\leq \frac{r+2}{r+1}\cdot OPT(G),\]
thus PoA($G$)$\leq \frac{r+2}{r+1}$.

For the PoS lower bound, assume first that $r \le \frac{\sqrt{5}-1}{2}$. Consider an instance consisting of two jobs, $a$ and $b$, where $p_a=1$ and $p_b=1/r$. The priority lists are $\pi_1 = \pi_2 = \langle a,b \rangle$. If $r < \frac{\sqrt{5}-1}{2}$, then the unique NE $s$ is that both jobs are on the fast machine. $C_a(s)=1, C_b(s)=1+1/r$. For every $r < \frac{\sqrt{5}-1}{2}$, it holds that $1+1/r < 1/r^2 $, therefore, job $b$ does not have a beneficial migration. If $r = \frac{\sqrt{5}-1}{2}$, then there is another NE in which job $a$ is on the fast machine, and job $b$ is on the slow machine. Both NE have makespan $1 + 1/r$. An optimal schedule assigns job $a$ on the slow machine and job $b$ on the fast machine, and both jobs complete at time $1/r$. The corresponding PoS is $r+1$.


Assume now that $r > \frac{\sqrt{5}-1}{2}$. 
Consider an instance consisting of three jobs, $x$, $y$ and $z$, where $p_x=1, p_y= r^2+r-1$, and $p_z=r+1$.
The priority lists are $\pi_1 = \langle x, y, z \rangle$ and $\pi_2 = \langle y, x, z \rangle$. 
Note that $p_y \ge 0$ for every $r \ge \frac{\sqrt{5}-1}{2}$. In all NE, job $x$ is on the fast machine, and job $y$ is on the slow machine. Indeed, job $y$ prefers being alone on the slow machine since $r^2 + r > \frac{r^2 + r - 1}{r}$. Job $z$ is indifferent between joining $x$ on the fast machine or $y$ on the slow machine, since $1 + p_z = (p_y + p_z)/ r = r+2$. In an optimal schedule, job $z$ is alone on the fast machine, and jobs $x$ and $y$ are on the slow machine.
Both machines have the same completion time $r+1$.
The PoS is $\frac {r+2}{r+1}$.
\end{proof}

\noindent{\bf The Benefit or Loss due to Competition} The above bounds are identical to those presented in \cite{RST21} for games without competition. Once again, we show that competition may be both helpful and harmful for the social cost. We first describe instances for which the corresponding game without competition has maximal PoS, while the same game with competition has optimal PoA of $1$.

\begin{example}
{\em
Consider the game $G_3$ with $r \le \frac{\sqrt{5}-1}{2}$ and $J = \{a, b, c\}$,\ $p_a = 1 - \epsilon,\ p_b = \epsilon,\ p_c = \frac{1}{r}$. The priority lists are $\pi_1 = \langle a, c, b \rangle,\ \pi_2 = \langle a, b, c \rangle$. In a game without competition, the unique NE is where jobs $a$ and $c$ are processed on the fast machine, and job $b$ is processed on the slow machine (see Figure~\ref{fig:example_3.1}). The makespan is $1 + \frac{1}{r} - \epsilon$. However, with rank-based utilities, this schedule is not stable, since job $a$ can benefit from a migration to $M_1$. Such a migration delays its completion time, but improves its rank from $2$ to $1$, and results in the unique NE in a game with competition. Jobs $a$ and $b$ are processed on the slow machine, while job $c$ is processed alone on the fast machine, yielding an optimal schedule with makespan $\frac{1}{r}$. Thus, for the game $G_3$, the PoS without competition is $\frac{1 + \frac{1}{r} - \epsilon}{\frac{1}{r}} \to r + 1$, while with competition PoA($G_3$) $= 1$.
}

\begin{figure}[h]
  \centering
  \includegraphics[width=1\textwidth]{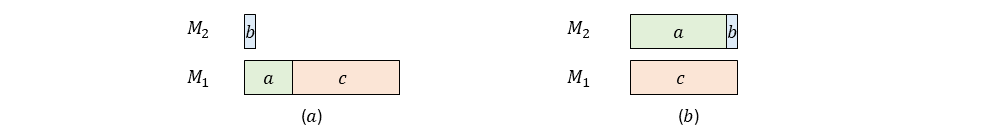}
  \caption{The game $G_3$. (a) NE without competition, (b) NE with competition}
  \label{fig:example_3.1}
\end{figure}

\end{example}

\begin{example}
{\em
Consider the game $G_4$ with $r > \frac{\sqrt{5}-1}{2}$ and $J = \{x, y, z\}$,\ $p_x = 1,\ p_y = r^2 + r - 1,\ p_z = r + 1$. The priority lists are $\pi_1 = \langle x, z, y \rangle,\ \pi_2 = \langle x, y, z \rangle$. Note that $p_y > 0$ for every $r > \frac{\sqrt{5}-1}{2}$. In all NE without competition, job $x$ is on the fast machine, and job $y$ is on the slow machine. Indeed, job $y$ prefers being alone on the slow machine since $r^2 + r > \frac{r^2 + r - 1}{r}$. Job $z$ is indifferent between joining $x$ on the fast machine or $y$ on the slow machine, since $1 + p_z = (p_y + p_z)/ r = r+2$ (see Figure~\ref{fig:example_3.2}). The makespan is $r + 2$. However, with rank-based utilities, this schedule is not stable, since job $x$ can benefit from a migration to $M_1$. Such a migration delays its completion time, but improves its rank from $2$ to $1$, and results in the unique NE in a game with competition. Jobs $x$ and $y$ are processed on the slow machine, while job $z$ is processed alone on the fast machine, yielding an optimal schedule with makespan $r + 1$. Thus, for game $G_4$, the PoS without competition is $\frac{r + 2}{r + 1}$, while with competition PoA($G_4$) $= 1$.
}

\begin{figure}[h]
  \centering
  \includegraphics[width=1\textwidth]{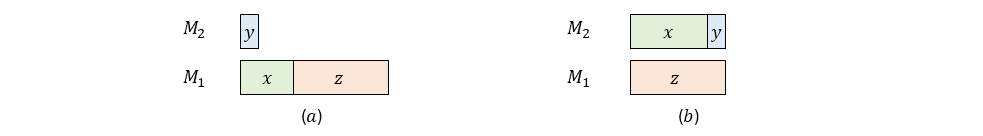}
  \caption{The game $G_4$. (a) NE without competition, (b) NE with competition}
  \label{fig:example_3.2}
\end{figure}

\end{example}

On the other hand, we now show that in $\G^{Q2}$, similarly to $\G^P$, competition might also increase the inefficiency. Specifically, in the following example, we describe an instance $G_5$ such that without competition $PoA(G_5)=1$, and with competition $PoS(G_5) = 1 + \frac{r}{r+1}$.

\begin{example}
{\em
Consider the game $G_5$ with $4$ jobs, $J = \{a, b, c, d\}$, where $p_a = r^2,\ p_b = r + 1 - r^2,\ p_c = \epsilon,\ p_d = r^2 + r - \epsilon$. The priority lists are $\pi_1 = \langle a, b, c, d \rangle$ and $\pi_2 = \langle a, c, d, b \rangle$. The no-competition unique NE is an optimal schedule in which jobs $a$ and $b$ are processed on $M_1$, and jobs $c$ and $d$ are processed on $M_2$ (see Figure~\ref{fig:example_4.1}). The makespan is $r + 1$. However, with rank-based utilities, this schedule is not stable, since job $a$ can benefit from migration to $M_2$. Such a migration delays its completion time, and improves its rank from $2$ to $1$. In the unique NE with competition, jobs $a$ and $c$ are processed on $M_2$, while jobs $b$ and $d$ are processed on $M_1$. Indeed, job $d$ prefers $M_1$ since $p_b + p_d < (p_a + p_c + p_d)/ r$. The makespan is $2r + 1 - \epsilon$. Thus, for the game $G_5$, the PoA with competition is $1$, while without competition PoS($G_5$) $= \frac{2r + 1 -\epsilon}{r+1} \to 1 + \frac{r}{r+1}$.
}

\begin{figure}[h]
  \centering
  \includegraphics[width=1\textwidth]{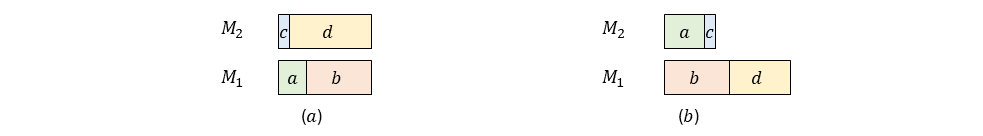}
  \caption{The game $G_5$. (a) NE without competition, (b) NE with competition}
  \label{fig:example_4.1}
\end{figure}

\end{example}

\subsubsection{Sink Equilibria Analysis - Related Machines}

We now analyze the price of sinking for games in $\G^{Q2}$ with arbitrary job lengths. As we show, the price of sinking - regardless of the deviator rule applied - cannot be bounded by any constant, since it depends on $1/r$.

\begin{example}
{\em
Consider the game $G'$ with two jobs $a$ and $b$, where $p_a = 1$, $p_b = r$, and $\pi_1 = \pi_2 = \langle a, b \rangle$. $G'$ has no pure NE. Profile $s_1$ is the social optimum. The sink consists of $4$ profiles  (see Figure~\ref{fig:example_5}), and has price of sinking of $\frac {r + 3}{4} + \frac{1}{2r}$. The high cost of the sink is due to the fact that the long job is on the slow machine in two out of the four profiles.
}

\begin{figure}[h]
  \centering
  \includegraphics[width=1\textwidth]{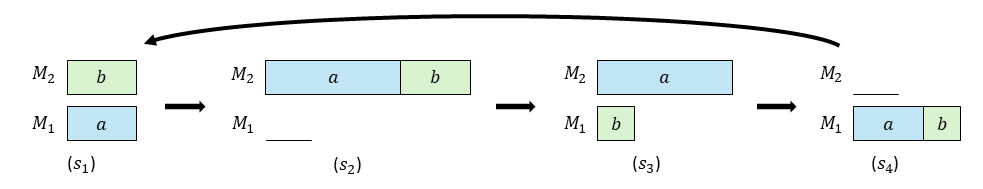}
  \caption{The sink equilibrium of $G'$. Job $a$ benefits from a migration in $s_1$ and $s_3$, job $b$ benefits from a migration in $s_2$ and $s_4$.}
  \label{fig:example_5}
\end{figure}
\end{example}

\section{Extension: Games with Competition Classes}
\label{sec:competition_classes}

In this section we consider a natural extension of our setting, in which the structure of the competition is more involved. Formally, let $S$ be is a partition of the jobs into competition sets, i.e., $S = \{S_1, \ldots, S_c\}$ such that $c \leq n$, $\bigcup_{l=1}^{c} S_l = J$, and for all $l_1 \neq l_2$, we have $S_{l_1} \cap S_{l_2} = \emptyset$. For every job $j \in S_l$, the competitors of $j$ are the other jobs in $S_l$. Let $n_l$ denote the number of jobs in $S_l$.

The goal of a player is to do well relative to its competitors. Formally, for a profile $s$, let $C^{s,l} = \langle C_{1}^{s,l}, \ldots, C_{n_l}^{s,l} \rangle$ be a sorted vector of the completion times of the players in $S_l$. That is, $C_{1}^{s,l} \leq \ldots \leq C_{n_l}^{s,l}$, where $C_{1}^{s,l}$ is the minimal completion time of a player from $S_l$ in $s$, etc. The $rank$ of player $j \in S_l$ in profile $s$, denoted by $rank_j(s)$ is the rank of its completion time in $C^{s,l}$.

So far, throughout this work we assumed $c=1$; that is, $S_1=J$. Classical job scheduling games, in which the objective of a job is to minimize its cost, can be viewed as a special case of $c=n$; that is $S_i=\{i\}$. In this case, the rank of each job is $1$, independent of the schedule, and the secondary objective, of minimizing the completion time, becomes the only objective of a job. 

The following example demonstrates how the competition structure is crucial for the stability of the game. Consider a simple game $G$ with $n = 3$ unit-jobs $a,\ b,\ c$, assigned to $m = 2$ identical machines, with a global priority list $\pi = \langle a, b, c \rangle$. 
If $S_1 = J$, then any profile where jobs $a$ and $c$ are assigned to one machine while job $b$ is assigned to the other is stable (see Figure~\ref{fig:competition_sets}(a)).  However, if $S_1 = \{a, b\}$ and $S_2 = \{c\}$, then $G$ has no NE, as job $a$ aims to be first, and job $b$ is unwilling to settle for second place. A sink equilibrium is presented in Figure~\ref{fig:competition_sets}(b). Finally, if $S_1 = \{a\}, S_2 = \{b\}, S_3 = \{c\}$ then any schedule where all jobs are not assigned to the same machine is a NE (Figure~\ref{fig:competition_sets}(c)).

\begin{figure}[h]
  \centering
  \includegraphics[width=1\textwidth]{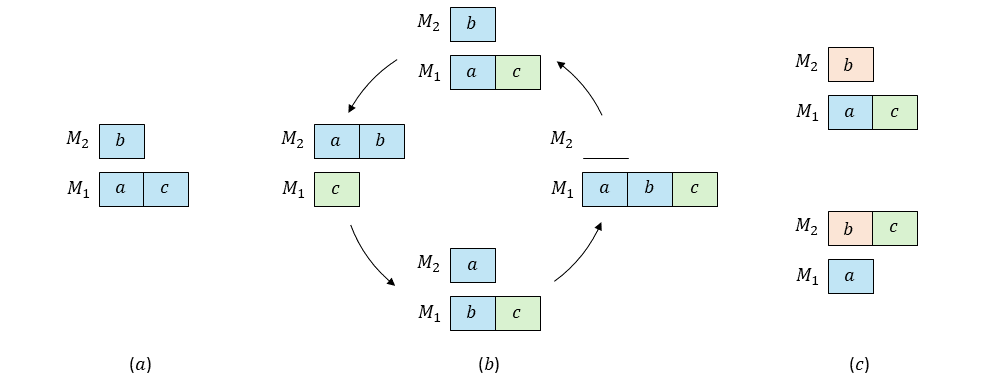}
  \caption{Schedules of $\{a,b,c\}$ with various competition structure, and $\pi = \langle a, b, c \rangle$. (a) The unique (up to machine's renaming) NE for $S_1 = J$, (b) A sink-equilibrium for $S_1 = \{a, b\}$ and $S_2 = \{c\}$, (c) NE schedules for $S_1 = \{a\},\ S_2 = \{b\},\ S_3 = \{c\}$.}
  \label{fig:competition_sets}
\end{figure}

The example clarifies that our positive results, even for unit-length jobs, and two identical machines with a global priority list, do not carry over to games with arbitrary competition. 
On the other hand, for games with $m = 2$ identical machines, arbitrary job lengths and {\em Inveresed Policies}, Algorithm~\ref{alg:AlgNext} produces a NE for any partition of $J$ into competition sets. It is easy to verify that the proof of Theorem~\ref{Thm:InvPol_Convergence} is valid for any competition structure. Thus, we have,

\begin{theorem}
Algorithm~\ref{alg:AlgNext} on two machines with inversed policies produces a NE for any partition of $J$ into competition sets.
\end{theorem}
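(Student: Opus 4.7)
The plan is to mirror the inductive argument used for Theorem~\ref{Thm:InvPol_Convergence} and observe that its core structural claim is agnostic to the competition partition. I would restate the inductive invariant: after Algorithm~\ref{alg:AlgNext} under the {\em Inversed-Policies} mechanism has placed the first $k$ jobs, the resulting partial schedule admits no beneficial deviation for any player, \emph{under the current partition $S$ restricted to the scheduled jobs}. The base case (a single assigned job) is immediate, since that job trivially has the best possible rank in its own competition set.

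For the inductive step, assume without loss of generality that $L_1 \le L_2$ at the start of iteration $k$, so the algorithm assigns $j_{a+1}$ to $M_1$. I would then isolate the key structural property implicit in the original proof: because of the inversed ordering of $\pi_1$ and $\pi_2$, every job $j$ currently on $M_1$ is prioritized on $M_2$ behind all jobs currently assigned to $M_2$, and vice versa. Consequently, a hypothetical migration of $j$ to the opposite machine would place $j$ strictly after every job already there, so $C_j$ strictly increases. Moreover, no other job's completion time can increase: jobs on the destination machine are unaffected (since $j$ is appended behind them in priority), and jobs remaining on the source machine have their completion time weakly decreased (since $j$ is removed from in front of them, and those originally behind $j$ on that machine retain their order).

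From this, I would derive a rank-monotonicity statement that works for any partition: for every competitor $j' \in S_\ell \setminus \{j\}$ of the migrating job $j \in S_\ell$, we have $C_{j'}$ weakly smaller after the hypothetical migration, while $C_j$ is strictly larger. Hence the number of competitors in $S_\ell$ whose completion time is at most the new $C_j$ is at least the number whose completion time was at most the old $C_j$; and any competitor that was tied with $j$ before migration has completion time strictly less than the new $C_j$, so cannot be tied with $j$ afterward. It follows that $rank_j(s')$ in $S_\ell$ is weakly larger than $rank_j(s)$, while $C_j$ is strictly larger, so the deviation is neither rank-improving nor rank-preserving-with-lower-cost. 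This holds for every already-scheduled job and every competition set it belongs to, so the partial schedule remains stable after the $k$-th assignment, closing the induction.

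The only delicate point I would watch out for is the rank tie-breaking via averaging: I must argue that a job whose completion time strictly rises cannot remain in the same tie class as, or end up ahead of, any competitor whose time only weakly drops. This follows immediately from comparing strict vs.\ weak inequalities, but deserves one explicit sentence in the write-up so that the averaged-rank definition is handled correctly. Beyond that, the argument is a direct verification that the proof of Theorem~\ref{Thm:InvPol_Convergence} never used $S_1 = J$; it only used the weak dominance of completion-time changes that the inversed-policies mechanism guarantees.
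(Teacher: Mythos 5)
Your proposal is correct and takes essentially the same route as the paper, which proves this theorem by simply observing that the proof of Theorem~\ref{Thm:InvPol_Convergence} never uses $S_1=J$; your write-up makes that observation precise via the completion-time/rank monotonicity argument restricted to each competition set. One small caveat: when the algorithm breaks a load tie, a job's start time on its machine can equal the current load of the other machine, so under the hypothetical migration $C_j$ may only weakly (not strictly) increase --- but the deviation is still non-beneficial, since the rank then still weakly increases while the cost does not decrease.
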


The following section provides additional positive results for arbitrary competition -- assuming a 'dynamic' priority list, based on seniority.

\subsection{Games with Competition-Sets and Seniority-Based Priorities}

In real life applications, users tend to compete with similar users. This phenomenon motivates the study of games in which the competition sets are defined based on certain job parameters. A natural assumption for this setting is that the priority list of a machine is not given by a permutation of the jobs in $J$, but by a permutation of the $c$ competition sets. For instance, competition sets could be defined by job lengths, meaning that all jobs of a specific length form a competition set, with machines applying rules such as SPT (shortest processing time) or LPT (longest processing time) to determine the jobs' processing order. The internal order within each set is determined by a {\em seniority-based} tie-breaking method: in the initial assignment, ties are broken arbitrarily. However, once the initial assignment is made, if a job deviates to another machine, it is placed last among the jobs in its competition set on that machine, as those jobs have a seniority advantage. 

We show that this model obeys the following property:

\begin{claim}
\label{cl:senior}
In a game with competition sets and seniority-based internal order, a deviation is beneficial if and only if it reduces the completion time of the deviating job.
\end{claim}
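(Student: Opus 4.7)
My approach is a case analysis over where each competitor $y \in S_l \setminus \{j\}$ resides in the pre-deviation schedule $s$, for a deviation of $j \in S_l$ from $M_i$ to $M_k$ producing $s'$. The structural backbone is this: since priority lists are permutations of the competition sets, the $S_l$-jobs form a contiguous block on each machine, and the seniority rule places $j$ last within that block on $M_k$. This yields the identity $C_j(s') = C_{y^*}(s) + p_j/r_k$, where $y^*$ is the last $S_l$-job on $M_k$ in $s$ (or the load of higher-priority sets on $M_k$ divided by $r_k$ when no such $y^*$ exists). In particular, every $y \in S_l \cap M_k$ keeps $C_y(s') = C_y(s)$ and strictly precedes $j$ in $s'$; the only competitors whose CT changes are those on $M_i$ after $j$ in $s$, and each loses exactly $p_j/r_i$.

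For ``reduces CT $\Rightarrow$ beneficial'', I assume $C_j(s') < C_j(s)$. For $y \in S_l \cap M_k$ the identity gives $C_y(s) \leq C_{y^*}(s) < C_j(s') < C_j(s)$, so $y$ already preceded $j$ in $s$ and this relation is preserved. Competitors with unchanged CT (on other machines or on $M_i$ before $j$) face a now-faster $j$, so they only shift earlier relative to $j$, weakly improving $j$'s rank. For competitors on $M_i$ after $j$, I invoke the paper's modeling convention that same-set jobs share processing time (as in the length-based partition example); then $C_y(s) - C_j(s) \geq p_j/r_i$, which is exactly what $y$ loses, so $y$ remains weakly after $j$ in $s'$. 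Summing across cases gives $rank_j(s') \leq rank_j(s)$, and the deviation is beneficial.

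For the converse I argue the contrapositive $C_j(s') \geq C_j(s) \Rightarrow rank_j(s') \geq rank_j(s)$ by a symmetric analysis: every competitor category faces a weakly later $j$, so unchanged-CT and $p_j/r_i$-losing competitors only shift earlier relative to $j$, while any $y \in S_l \cap M_k$ with $C_y(s) \geq C_j(s)$ is pinned strictly before $j$ in $s'$ by the identity; none of these changes improves $j$'s rank. The main obstacle is the ``after $j$ on $M_i$'' category: without the equal-length convention, a short $y$ immediately following a long $j$ on $M_i$ can lose more CT from the deviation than $j$ saves, overtake $j$ in rank, and destroy the ``if'' direction; the proof therefore crucially leverages that convention to guarantee $C_y(s) - C_j(s) \geq p_j/r_i$.
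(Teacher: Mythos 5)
Your proposal is correct (modulo one modeling assumption, discussed below) and is actually more complete than the paper's own argument. The paper proves only the direction ``beneficial $\Rightarrow$ cost-reducing'': it takes a deviation with $C_a(s')\ge C_a(s)$ and checks, competitor by competitor, that no $b$ with $C_b(s)\le C_a(s)$ can fall behind $a$ (seniority pins $b$ before $a$ if $b$ is on the target machine; otherwise $C_b(s')=C_b(s)$), concluding $rank_a(s')\ge rank_a(s)$ --- this is exactly your contrapositive case analysis, and your block-contiguity observation and the identity $C_j(s')=C_{y^*}(s)+p_j/r_k$ are correct consequences of the set-permutation priority lists and the seniority rule. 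The ``reduces CT $\Rightarrow$ beneficial'' direction is simply not argued in the paper, and your treatment of it is the valuable part of your write-up: you are right that it is not a formality. Indeed, with unequal lengths inside a set it is false --- take two identical machines, a higher-priority set contributing load $20$ to $M_1$ and $15$ to $M_2$, and $S_l=\{j,y\}$ with $p_j=10,\ p_y=1$ scheduled $j$ then $y$ on $M_1$; deviating to $M_2$ takes $C_j$ from $30$ to $25$ but takes $C_y$ from $31$ to $21$, so $j$'s rank rises from $1$ to $2$ and the move is not beneficial. The only caveat is that the equal-length-within-a-set convention you invoke is presented in the paper merely as an illustration (``for instance, competition sets could be defined by job lengths''), not as part of the formal model; under the literal model the claim's ``if'' direction fails by the example above, and only the direction the paper proves is needed to transfer NE existence (though the full equivalence is what Theorem~\ref{Thm:classes_seniority} leans on for BRD convergence). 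So: your contrapositive direction matches the paper; your forward direction fills a hole the paper leaves open and correctly isolates the exact hypothesis that makes it true.
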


\begin{proof}
Consider a deviation from profile $s$ to profile $s'$ in which the completion time of the deviating job is not reduced. Specifically, let job $a \in S_k$ deviate from machine $M_1$ to machine $M_2$, such that $C_a(s') \geq C_a(s)$. We show that $rank_a(s') \ge rank_a(s)$.

Let $b \in S_k$, and assume $C_b(s) < C_a(s)$. If $b$ is assigned to $M_2$, then by the seniority rule, $C_b(s') < C_a(s')$. If $b$ is not on $M_2$, then since $C_a(s') \ge C_a(s)$ and $C_b(s') = C_b(s)$, we have that $C_b(s') < C_a(s')$. Also, if $C_b(s) = C_a(s)$, then since $C_a(s') \ge C_a(s)$ and $C_b(s') = C_b(s)$, we have that $C_b(s') \le C_a(s')$. Therefore, the rank of $b$ could not reduce. 
\end{proof}

The above claim implies that the corresponding rank-based utility game is equivalent to a competition-free game with a traditional cost-minimization objective.
These games were analyzed in~\cite{RST21}, where it is shown that a NE exists for any instance on identical machines. The next theorem follows from combining Claim~\ref{cl:senior} with the analysis in~\cite{RST21}.

\begin{theorem}
\label{Thm:classes_seniority}
Let $G$ be a game with competition sets and seniority-based internal order, then $G$ has a NE, and BRD converges to a NE.
\end{theorem}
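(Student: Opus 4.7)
The plan is to leverage Claim~\ref{cl:senior} to reduce the rank-based game to a classical cost-minimization scheduling game, and then invoke the existing convergence result from \cite{RST21}.

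First, I would observe that Claim~\ref{cl:senior} already establishes the key structural reduction: in a game with seniority-based internal order, a deviation strictly improves a player's utility (in the lexicographic rank/cost sense) if and only if it strictly reduces that player's completion time. In one direction this is immediate, since a completion-time-reducing move cannot worsen the rank and strictly improves the secondary objective. In the other direction, Claim~\ref{cl:senior} shows that any non-cost-reducing deviation cannot improve the rank either, hence cannot be beneficial at all.

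Next, I would interpret this equivalence operationally: the set of beneficial-deviation edges in the strategy profile graph of the rank-based game is a subset of the edges of the corresponding \emph{traditional} job-scheduling game with the same machines, job lengths, and (seniority-respecting) priority lists, where jobs only care about their completion time. Consequently, every Nash equilibrium of the traditional cost-minimization game is also a Nash equilibrium of the rank-based game, and every best-response sequence in the rank-based game is also a valid best-response sequence in the traditional game.

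Finally, I would invoke the result of \cite{RST21}, which shows that for the traditional cost-minimization model with machine-dependent priority lists on identical machines (with a seniority-respecting tie-breaking rule), a NE always exists and BRD converges. Combining this with the reduction above yields both claims of the theorem: existence of a NE (since any traditional NE is a rank-based NE), and BRD convergence (since any BRD sequence in the rank-based game corresponds to a BRD sequence in the traditional game, which must terminate). The main subtlety to be careful about is ensuring that the seniority rule is preserved along the BRD trajectory — specifically, that the ``last among its competition set'' placement after a deviation is consistent with how \cite{RST21} models priority updates — but since the seniority rule only refines tie-breaking among same-length competitors without altering the ordering between non-competitors, this consistency is straightforward to verify.
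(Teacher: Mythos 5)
Your proposal is correct and follows essentially the same route as the paper: the paper's own proof simply combines Claim~\ref{cl:senior} (beneficial deviations coincide with cost-reducing deviations under seniority-based internal order) with the equilibrium-existence and BRD-convergence results of~\cite{RST21} for the competition-free game. Your write-up is a more detailed elaboration of that two-step argument, including the (correct) observation that best-response sequences in the rank-based game remain valid best-response sequences in the traditional game.
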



\section{Conclusions and Directions for Future Work}

In this work, we explored the impact of competition on coordination mechanisms, where players aim to optimize their rank rather than their completion time. Our study centered on the existence and computation of pure NE. Our findings reveal that in general, introducing competition can make the system less stable. While competition preserves certain system-wide properties, such as the NP-completeness of determining equilibrium in the general case, and the PoA bounds seen in traditional scheduling games, it introduces new challenges. On the positive side, we characterized classes of games where equilibria are guaranteed to exist, demonstrated that for some games the presence of competition improves the equilibria inefficiency, and proposed a specific mechanism for  identical machines that stabilize the system.

Our work leaves several open questions for future work. A primary question concerns the existence and computation of NE in games with unit-length jobs and machine-dependent priority lists, when played on an arbitrary number of machines, or even a constant number greater than two. Extending our results to these broader settings could provide deeper insights into the stability and efficiency of competitive scheduling games. Additionally, designing scheduling policies that lead to stable outcomes while ensuring good social value, for games with more than two machines, remains an intriguing challenge.

While it is straightforward to observe that in $\G^{global}$ every NE is strong, a natural extension of our work would involve investigating the existence and computation of strong Nash equilibria in other game classes.

Another potential direction is to explore alternative deviator rules for cases where a pure NE does not exist.  Moreover, the study of the price of sinking with specific deviator rules is relevant to many other games in which a NE is not guaranteed to exist.

Finally, the study of rank-based utilities is relevant to many other environments managed by multiple strategic users. As demonstrated in this work, the shift from a traditional cost-minimization objective to a rank-minimization objective creates an entirely different game, presenting new challenges both in computing stable outcomes and analyzing their quality.

\newpage



\begin{thebibliography}{10}

\bibitem{AD+08}
E.~Anshelevich, A.~Dasgupta, J.~Kleinberg, E.~Tardos, T.~Wexler, and
  T.~Roughgarden.
\newblock The price of stability for network design with fair cost allocation.
\newblock {\em SIAM Journal on Computing}, 38(4):1602--1623, 2008.

\bibitem{AKT08}
I.~Ashlagi, P.~Krysta, and M.~Tennenholtz.
\newblock Social context games.
\newblock In {\em International Workshop on Internet and Network Economics},
  pages 675--683. Springer, 2008.

\bibitem{AART06}
B.~Awerbuch, Y.~Azar, Y.~Richter, and D.~Tsur.
\newblock Tradeoffs in worst-case equilibria.
\newblock {\em Theoretical Computer Science}, 361(2):200--209, 2006.

\bibitem{AJM08}
Y.~Azar, K.~Jain, and V.~Mirrokni.
\newblock (almost) optimal coordination mechanisms for unrelated machine
  scheduling.
\newblock In {\em Proceedings of the $19$th Annual ACM-SIAM Symposium on
  Discrete Algorithms}, SODA '08, pages 323--332, 2008.

\bibitem{BFNR11}
N.~Berger, M.~Feldman, O.~Neiman, and M.~Rosenthal.
\newblock Dynamic inefficiency: Anarchy without stability.
\newblock In {\em Algorithmic Game Theory: 4th International Symposium, SAGT
  2011, Amalfi, Italy, October 17-19, 2011. Proceedings 4}, pages 57--68.
  Springer, 2011.

\bibitem{BFHS09}
F.~Brandt, F.~Fischer, P.~Harrenstein, and Y.~Shoham.
\newblock Ranking games.
\newblock {\em Artificial Intelligence}, 173(2):221--239, 2009.

\bibitem{CF19}
I.~Caragiannis and A.~Fanelli.
\newblock An almost ideal coordination mechanism for unrelated machine
  scheduling.
\newblock {\em Theory of Computing Systems}, 63(1):114--127, 01 2019.

\bibitem{CKN04}
G.~Christodoulou, E.~Koutsoupias, and A.~Nanavati.
\newblock Coordination mechanisms.
\newblock In {\em Automata, Languages and Programming: 31st International
  Colloquium, {ICALP} 2004, Turku, Finland, July 12-16, 2004. Proceedings},
  pages 345--357, 2004.

\bibitem{CDN11}
J.~Cohen, C.~Dürr, and T.~N. Kim.
\newblock Non-clairvoyant scheduling games.
\newblock {\em Theory of Computing Systems}, 49(1):3--23, 2011.

\bibitem{CQ12}
J.~R. Correa and M.~Queyranne.
\newblock Efficiency of equilibria in restricted uniform machine scheduling
  with total weighted completion time as social cost.
\newblock {\em Naval Research Logistics (NRL)}, 59(5):384--395, 2012.

\bibitem{CzumajV07}
A.~Czumaj and B.~V\"{o}cking.
\newblock Tight bounds for worst-case equilibria.
\newblock {\em ACM Trans. Algorithms}, 3(1):4:1--4:17, 2007.

\bibitem{EKM03}
E.Even-Dar, A.Kesselman, and Y.Mansour.
\newblock Convergence time to nash equilibria.
\newblock In {\em Proc.\ 30th Int. Colloq. on Automata, Languages, and
  Programming}, pages 502--513, 2003.

\bibitem{FST17}
M.~Feldman, Y.~Snappir, and T.~Tamir.
\newblock The efficiency of best-response dynamics.
\newblock In {\em The 10th International Symposium on Algorithmic Game Theory
  (SAGT)}, 2017.

\bibitem{FT15}
M.~Feldman and T.~Tamir.
\newblock Convergence of best-response dynamics in games with conflicting
  congestion effects.
\newblock {\em Information Processing Letters}, 115(2):112--118, 2015.

\bibitem{GLM10}
M.~Gairing, T.~Lücking, and M.~M. et~al.
\newblock Computing nash equilibria for scheduling on restricted parallel
  links.
\newblock {\em Theory of Computing Systems}, 47(2):405--432, 2010.

\bibitem{GMV05}
M.~X. Goemans, V.~S. Mirrokni, and A.~Vetta.
\newblock Sink equilibria and convergence.
\newblock In {\em 46th Annual {IEEE} Symposium on Foundations of Computer
  Science {(FOCS} 2005), 23-25 October 2005, Pittsburgh, PA, USA, Proceedings},
  pages 142--154, 2005.

\bibitem{GGKV13}
L.~A. Goldberg, P.~W. Goldberg, P.~Krysta, and C.~Ventre.
\newblock Ranking games that have competitiveness-based strategies.
\newblock {\em Theoretical Computer Science}, 476:24--37, 2013.

\bibitem{ILMS09}
N.~Immorlica, L.~E. Li, V.~S. Mirrokni, and A.~S. Schulz.
\newblock Coordination mechanisms for selfish scheduling.
\newblock {\em Theor. Comput. Sci.}, 410(17):1589--1598, 2009.

\bibitem{Kann91}
V.~Kann.
\newblock Maximum bounded $3$-dimensional matching is max snp-complete.
\newblock {\em Information Processing Letters}, 37:27--35, 1991.

\bibitem{KBL13}
B.~Kawald and P.~Lenzner.
\newblock On dynamics in selfish network creation.
\newblock In {\em Proceedings of the twenty-fifth annual ACM symposium on
  Parallelism in algorithms and architectures}, pages 83--92, 2013.

\bibitem{K13}
K.~Kollias.
\newblock Nonpreemptive coordination mechanisms for identical machines.
\newblock {\em Theory of Computing Systems}, 53(3):424--440, 2013.

\bibitem{KP09}
E.~Koutsoupias and C.~Papadimitriou.
\newblock Worst-case equilibria.
\newblock {\em Computer Science Review}, 3(2):65--69, 2009.

\bibitem{LST90}
J.~K. Lenstra, D.~B. Shmoys, and E.~Tardos.
\newblock Approximation algorithms for scheduling unrelated parallel machines.
\newblock {\em Math. Program.}, 46(3):259–271, 1990.

\bibitem{LY12}
P.~Lu and C.~Yu.
\newblock Worst-case nash equilibria in restricted routing.
\newblock {\em Journal of Computer Science and Technology}, 27(4):710--717,
  2012.

\bibitem{RST21}
V.~{Ravindran Vijayalakshmi}, M.~Schr{\"o}der, and T.~Tamir.
\newblock Scheduling games with machine-dependent priority lists.
\newblock {\em Theoretical Computer Science}, 855:90--103, 2021.

\bibitem{RT23}
S.~Rosner and T.~Tamir.
\newblock Scheduling games with rank-based utilities.
\newblock {\em Games and Economic Behavior}, 140:229--252, 2023.

\end{thebibliography}
\end{document}